\newcommand\redout{\bgroup\markoverwith{\textcolor{red}{\rule[0.5ex]{2pt}{0.8pt}}}\ULon}
\newtheorem{theorem}{Theorem}[section]
\newtheorem{lemma}[theorem]{Lemma}
\newtheorem{proposition}[theorem]{Proposition}
\newtheorem{corollary}[theorem]{Corollary}
\theoremstyle{remark}
\newtheorem{remark}{Remark}
\newtheorem{definition}{Definition}[section]
\newtheorem{example}{Example}
\newcommand\nc\newcommand
\nc\bfa{{\boldsymbol a}}\nc\bfA{{\boldsymbol A}}\nc\cA{{\EuScript A}}
\nc\bfb{{\boldsymbol b}}\nc\bfB{{\boldsymbol B}}\nc\cB{{\EuScript B}}
\nc\bfc{{\boldsymbol c}}\nc\bfC{{\boldsymbol C}}\nc\cC{{\mathscr C}}
\nc\bfd{{\boldsymbol d}}\nc\bfD{{\boldsymbol D}}\nc\cD{{\mathscr D}}
\nc\bfe{{\boldsymbol e}}\nc\bfE{{\boldsymbol E}}\nc\cE{{\EuScript E}}
\nc\bff{{\boldsymbol f}}\nc\bfF{{\boldsymbol F}}\nc\cF{{\mathscr F}}
\nc\bfg{{\boldsymbol g}}\nc\bfG{{\boldsymbol G}}\nc\cG{{\EuScript G}}
\nc\bfh{{\boldsymbol h}}\nc\bfH{{\boldsymbol H}}\nc\cH{{\mathcal H}}
\nc\bfi{{\boldsymbol i}}\nc\bfI{{\boldsymbol I}}\nc\cI{{\mathcal I}}
\nc\bfj{{\boldsymbol j}}\nc\bfJ{{\boldsymbol J}}\nc\cJ{{\EuScript J}}
\nc\bfk{{\boldsymbol k}}\nc\bfK{{\boldsymbol K}}\nc\cK{{\EuScript K}}
\nc\bfl{{\boldsymbol l}}\nc\bfL{{\boldsymbol L}}\nc\cL{{\EuScript L}}
\nc\bfm{{\boldsymbol m}}\nc\bfM{{\boldsymbol M}}\nc\cM{{\EuScript M}}
\nc\bfn{{\boldsymbol n}}\nc\bfN{{\boldsymbol N}}\nc\cN{{\EuScript N}}
\nc\bfo{{\boldsymbol o}}\nc\bfO{{\boldsymbol O}}\nc\cO{{\EuScript O}}
\nc\bfp{{\boldsymbol p}}\nc\bfP{{\boldsymbol P}}\nc\cP{{\EuScript P}}
\nc\bfq{{\boldsymbol q}}\nc\bfQ{{\boldsymbol Q}}\nc\cQ{{\EuScript Q}}
\nc\bfr{{\boldsymbol r}}\nc\bfR{{\boldsymbol R}}\nc\cR{{\EuScript R}}
\nc\bfs{{\boldsymbol s}}\nc\bfS{{\boldsymbol S}}\nc\cS{{\EuScript S}}
\nc\bft{{\boldsymbol t}}\nc\bfT{{\boldsymbol T}}\nc\cT{{\EuScript T}}
\nc\bfu{{\boldsymbol u}}\nc\bfU{{\boldsymbol U}}\nc\cU{{\EuScript U}}
\nc\bfv{{\boldsymbol v}}\nc\bfV{{\boldsymbol V}}\nc\cV{{\mathscr V}}
\nc\bfw{{\boldsymbol w}}\nc\bfW{{\boldsymbol W}}\nc\cW{{\mathscr W}}
\nc\bfx{{\boldsymbol x}}\nc\bfX{{\boldsymbol X}}\nc\cX{{\EuScript X}}
\nc\bfy{{\boldsymbol y}}\nc\bfY{{\boldsymbol Y}}\nc\cY{{\mathscr Y}}
\nc\bfz{{\boldsymbol z}}\nc\bfZ{{\boldsymbol Z}}\nc\cZ{{\EuScript Z}}
\nc\reals{{\mathbb R}}
\nc{\ff}{{\mathbb F}}
\nc{\PP}{{\mathbb P}}
\newcommand{\cl}{\mathrm{cl}}
\nc{\remove}[1]{}
\DeclareSymbolFont{bbold}{U}{bbold}{m}{n}
\DeclareSymbolFontAlphabet{\mathbbold}{bbold}
\newcommandx{\rednote}[2][1=]{\todo[linecolor=red,backgroundcolor=red!25,bordercolor=red,#1]{#2}}
\newcommandx{\bluenote}[2][1=]{\todo[linecolor=blue,backgroundcolor=blue!25,bordercolor=blue,#1]{#2}}
\newcommandx{\yellownote}[2][1=]{\todo[linecolor=yellow,backgroundcolor=yellow!25,bordercolor=yellow,#1]{#2}}
\newcommandx{\greennote}[2][1=]{\todo[inline,linecolor=olive,backgroundcolor=green!25,bordercolor=olive,#1]{#2}}
\newcommand{\N}{{\mathbb N}}
\newcommand{\R}{{\mathbb R}}
\newcommand{\Z}{{\mathbb Z}}
\newcommand{\ccap}{\mathsf{cap}}
\newcommand{\mathset}[1]{\left\{#1\right\}}
\newcommand{\limsupup}[1]{\limsup_{#1\rightarrow\infty}}
\title{Storage codes and recoverable systems on lines and grids }
\author[]{Alexander Barg}\thanks{Alexander Barg is with ISR/Dept. of ECE, University of Maryland, College Park, MD, USA. 
Email: abarg@umd.edu. His research was supported by NSF grant CCF2104489 and NSF-BSF grant CCF2110113.}
\author[]{Ohad Elishco}\thanks{Ohad Elishco is with Ben-Gurion University of the Negev, Israel, Email: ohadeli@bgu.ac.il. 
His research was supported by NSF-BSF grant CCF2020762.}
\author[]{Ryan Gabrys}\thanks{Ryan Gabrys is with the University of California at San Diego, CA, USA. Email: ryan.gabrys@gmail.com.}
\author[]{Geyang Wang}\thanks{Geyang Wang is with ISR/Dept. of ECE, University of Maryland, College Park, MD, USA. 
Email: wanggy@umd.edu. His research was supported by NSF-BSF grant CCF2110113.}
\author[]{Eitan Yaakobi}\thanks{Eitan Yaakobi is with the Technion - Israel Institute of Technology, Haifa, Israel. Email: yaakobi@cs.technion.ac.il.
}
\begin{document}
\begin{abstract}
A storage code is an assignment of symbols to the vertices of a connected graph
$G(V,E)$ with the property that the value of each vertex is a function of the
values of its neighbors, or more generally, of a certain neighborhood of the vertex in $G$.
In this work we introduce a new construction method of storage codes, enabling one to construct new codes from known ones via an interleaving procedure driven by resolvable designs. We also study storage codes on $\Z$ and $\Z^2$ (lines and grids), finding
closed-form expressions for the capacity of several one and two-dimensional systems depending on their recovery set, using connections between storage codes, graphs, anticodes, and difference-avoiding sets.
\end{abstract}
\maketitle

\section{Introduction}
The concept of locality in coding theory has been the subject of extensive research during the last 
decade. In general terms, we are interested in encoding information in a way that permits recovery of 
a single bit of the data by accessing a small number of the available bits (symbols) in the encoding.
Local recovery has been studied in the scenarios of coding for error correction (channel coding) 
\cite{gopalan2011locality} and source coding \cite{MazumdarWornell2015}. The concept of locality was further extended to account for graph constraints in the system, whereby the symbols of the encoding are placed on the vertices of a graph.

The main problem concerning storage 
codes is establishing the largest size of $\cC$ for a given class of graphs. It was soon realized 
that this problem can be equivalently phrased as the smallest rate of symmetric index codes \cite{BarYossef2011}, or the largest success probability in (one variant of) guessing games on the
graph $G$ \cite{Riis2007}; see \cite{AK2015,BargZemor2021} for a more detailed discussion. Some high-rate storage codes were recently constructed in 
\cite{barg2,BargZemor2021,GolovnevHaviv2022}.

The concept of storage codes was recently extended to infinite graphs \cite{ElishcoBarg2020} under the
name of {\em recoverable systems}.
For a finite set of integers $R$, we write $n+R:=\mathset{n+r ~:~ r\in R}$. For a sequence $x\in \cQ^{V}$ and for a set of integers $R$, denote by $x_R$ the restriction of $x$ to the positions in $R$. The authors of \cite{ElishcoBarg2020} considered codes $X$ formed of bi-infinite sequences $x\in \cQ^\Z$ with the property that for any $i\in \Z$ the value $x_i$ is found as $f(x_{i+R}),$ 
$R=\mathset{j:0<|j|<l}$ $(l\ge 1)$
and $f:\Sigma^{2l}\to \cQ$ is a deterministic function, independent of $i$. They additionally assumed that the system $X$ is shift invariant, i.e., if $x\in X$ then also a left shift $Tx\in X$ ($T$ acts by shifting all the symbols in $x$ one place to the left). This assumption enabled them to rely on methods from constrained systems to estimate the growth rate of the set of allowable sequences, or the capacity of recoverable systems.

The object of this paper is storage capacity of finite graphs, constructions of storage codes, as well 
as capacity and constructions of recoverable systems. For finite graphs we present a construction
of codes based on interleaving known codes controlled by resolvable designs to obtain new codes from known ones. If the seed codes are optimal, then so are the interleaved ones. For graphs with 
transitive automorphisms, we phrase the capacity problem in terms of the code-anticode bound \cite{Delsarte1973,AAK2001}. 

For the infinite case, using finite subgraphs of $\Z$ and $\Z^2$, we obtain capacity values for certain recovery regions $R$. For instance, for $\Z_2,$ we find storage capacity for balls in the $l_1$ and $l_\infty$ metrics as well as certain cross-shaped regions. These results do not involve the shift invariance assumption.

Some of the results of this paper were presented at the 2022 IEEE International Symposium on Information Theory and published as an extended abstract \cite{BEGY2022}. Here we add new results, notably Section~\ref{sec:interleaving}, and also provide complete or corrected proofs of the results announced in  \cite{BEGY2022}.

\section{Preliminaries}\label{sec:p}

\subsection{Storage codes for finite graphs}\label{sec:finite}
\begin{definition} {\cite{Mazumdar2015,Shanmugam2014}} Let $x\in\cQ^n$ be a word over a finite alphabet $\cQ$ and let $G=(V,E)$ be a finite
graph with $|V|=\{1,\dots,n\}$ and a fixed ordering of the vertices. We say that $x$ is assigned to $G$ if there is a bijection $\{1,\dots,n\}\to V$ which places entries of $x$ on the vertices.  
A {\em storage code} $\cC$ on $G$ is a collection of assignments of words such that
for each $v\in V$ and every $x\in\cC$, the value $x_v$ is a function of $\{x_u, u\in \cN(v)\}$, where $\cN(v):=\{u:(v,u)\in E(G)\}$ is the vertex neighborhood of $v$ in $G$.
\end{definition}

We briefly mention the known results for storage codes on finite graphs as defined in the opening paragraph of the paper. Let $G$ be a graph with $n$ vertices and let $\cR_q(\cC):=\frac 1n\log_q{|\cC|}$ be the rate of a $q$-ary code $\cC$ on $G$. In this case, the recovery set of a vertex $v$ is $\cN(v),$ and it depends on $v$. 
We denote the largest attainable rate of a storage code on $G$ by $\ccap(G):= \sup_q \cR_q(\cC)$. 

There are several ways of constructing storage codes with large rate. The most well-known one is the 
edge-to-vertex construction: given a ($d$-regular) graph, place a $q$-ary symbol on every edge and assign each vertex a $d$-vector of symbols written on the edges incident to it (again we assume an ordering of the edges). The size of the code is $(q^d)^{n/2},$ resulting in the rate value $1/2$ irrespective of the value of $q$. Here is an example with $q=d=2$:

\begin{center}\begin{tikzpicture}[rotate=-18,scale=1.1]
\node (v1) at ( 0:1) [draw,shape=circle] {$X_1$};
\node (v2) at ( 72:1) [draw,shape=circle] {$X_2$};
\node (v3) at (2*72:1) [draw,shape=circle] {$X_3$};
\node (v4) at (3*72:1) [draw,shape=circle] {$X_4$};
\node (v5) at (4*72:1) [draw,shape=circle] {$X_5$};
\draw (v1) -- (v2) -- (v3)-- (v4)-- (v5)--(v1);
\node at (0.8,0.55) {$a$};
\node at (0.8,-0.55) {$b$};
\node at (-0.3,-1) {$c$};
\node at (-1,0.0) {$d$};
\node at (-0.3,0.95) {$e$};
\node at (0.1,1.6) {$\begin{matrix} \textcolor{red}{e}\\[-.05in]\textcolor{blue}{a} \end{matrix}$};
\node at (1.6,0) {$\begin{matrix} \textcolor{red}{a}\\\textcolor{blue}{b} \end{matrix}$};
\node at (0.5,-1.7) {$\begin{matrix} \textcolor{red}{b}\\[-.05in]\textcolor{blue}{c} \end{matrix}$};
\node at (-1.4,-0.8) {$\begin{matrix} \textcolor{red}{c}\\\textcolor{blue}{d} \end{matrix}$};
\node at (-0.95,1.3) {$\begin{matrix} \textcolor{red}{d}\\[-.05in]\textcolor{blue}{e} \end{matrix}$};
\end{tikzpicture}
\end{center}

\vspace*{-.1in} This method extends to the case when every vertex is incident to the same number of cliques. For instance, if this number is one, then the graph can be partitioned into $k$-cliques. To construct a code,
we put a single parity symbol on every clique and distribute the symbols of the parity to the vertices that form it, resulting in rate $\cR=(k-1)/k$. A further extension, known as {\em clique covering} \cite{BarYossef2011}, states that 
\begin{equation}\label{eq:clique}
\ccap(G) \geq 1 - \alpha(G)/n,  
\end{equation}
where $\alpha(G)$ is the smallest size of a clique covering in $G$.
Among other general constructions, we mention the {\em matching construction}, which yields
$$\ccap(G)\ge M(G)/n,$$
where $M(G)$ is the size of the largest matching in $G$. 

Turning to upper bounds, we note a result in \cite{BarYossef2011}, Theorem 3 (also \cite{Mazumdar2015}, Lemma 9), which states that
\begin{equation}\label{eq:IB}   
\ccap(G)\le 1-\gamma(G)/n,
\end{equation}
where $\gamma(G)$ is the independence number of $G$, i.e., the size of the largest independent set of vertices (in other words, the size of the smallest vertex cover of $G$). 

Given a graph $G$, we may sometimes want to look at neighbors at distance more than one from the failed vertex to recover its value, such as the set $R$ discussed in Def.~\ref{def:main}. Effectively, this changes the 
connectivity of the graph, so while we keep the same set of vertices $V$, the edges are now drawn according
to where the failed vertex collects the data for its recovery. Assuming that the recovery region can be defined consistently for all the vertices, we will use the notation $G_R$ in our discussion of storage 
codes for this case. This agreement will be used in particular when we address the two-dimensional grid $\Z\times\Z$ below.

Since $R$ does not have to be symmetric, the graph $G_R$ generally is directed (as is often the case in index coding \cite{BarYossef2011,AK2018}). In this case, bound \eqref{eq:IB} affords a generalization, which we proceed to describe. 
For a subset of vertices $U\subseteq V$ in a graph $G=(V,E)$, denote by $G(U)$ the induced subgraph. A directed graph with no directed cycles is called a {\em directed acyclic graph} (DAG). It is known that a graph is a DAG if and only if it can be topologically ordered \cite[p.~258]{Knuth1973}, i.e., there is a numbering of the vertices such that the tail of every arc is smaller than the head. A set of vertices $S\subseteq V$ in a graph $G$ is called a \emph{DAG set} if the
subgraph induced by $S$ is a DAG. 
With this preparation, the following {\em maximum acyclic induced subgraph (MAIS) bound} is true; see~\cite{BarYossef2011}, Theorem 3, or \cite{AK2018}, Sec. 5.1.
\begin{theorem}\label{th:DAG_set} 
Let $G=G(V,E)$ be a graph and let $\delta(G)$ be the size of the largest DAG set in it. Then,
   \begin{equation}\label{eq:MAIS}
   \ccap(G) \leq 1 - \delta(G)/n.
   \end{equation}
\end{theorem}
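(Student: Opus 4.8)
The plan is to exhibit, for any storage code $\cC$ on the (possibly directed) graph $G_R$ and any DAG set $S$ of size $\delta(G)$, an injection from $\cC$ into $\cQ^{V\setminus S}$; this immediately gives $|\cC|\le q^{n-\delta(G)}$, hence $\cR_q(\cC)\le 1-\delta(G)/n$, and taking the supremum over $q$ yields \eqref{eq:MAIS}. Concretely, I would show that the projection map $\pi\colon \cC\to \cQ^{V\setminus S}$, $x\mapsto x_{V\setminus S}$, is injective, which says that the values on $V\setminus S$ determine the whole codeword.

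The key step is the reconstruction of $x_S$ from $x_{V\setminus S}$. Since $G(S)$ is a DAG, by the cited characterization \cite[p.~258]{Knuth1973} it admits a topological ordering $v_1,\dots,v_{|S|}$ of its vertices such that every arc inside $S$ goes from a lower-indexed vertex to a higher-indexed one; equivalently, for each $v_j$, all of its in-neighbors \emph{within $S$} lie among $v_1,\dots,v_{j-1}$. I would then recover the symbols in the order $v_1,v_2,\dots,v_{|S|}$. For the storage code, $x_{v_j}$ is a function of $\{x_u : u\in \cN(v_j)\}$ (the recovery neighborhood of $v_j$ in $G_R$). Split this neighborhood as $\cN(v_j)=(\cN(v_j)\cap (V\setminus S))\;\cup\;(\cN(v_j)\cap S)$. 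The first part is known from $x_{V\setminus S}$; the second part, by the topological ordering, consists only of vertices $v_i$ with $i<j$, whose values have already been recovered in earlier steps. Hence $x_{v_j}$ is determined, and by induction on $j$ the entire vector $x_S$ is determined by $x_{V\setminus S}$. This proves $\pi$ is injective.

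Finishing up: injectivity of $\pi$ gives $|\cC|\le |\cQ|^{\,|V\setminus S|}=q^{\,n-\delta(G)}$, so $\cR_q(\cC)=\tfrac1n\log_q|\cC|\le 1-\delta(G)/n$ for every $q$ and every code $\cC$, and therefore $\ccap(G)=\sup_q\sup_{\cC}\cR_q(\cC)\le 1-\delta(G)/n$.

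I do not expect a serious obstacle here; the only point requiring care is the bookkeeping of the recovery neighborhood in the directed setting — making sure that "function of the neighbors" is interpreted with the arcs oriented so that $v_j$ reads from $\cN(v_j)$, and that the topological order is chosen compatibly (tails before heads) so that in-neighbors of $v_j$ inside $S$ genuinely precede it. Once the orientation conventions are pinned down, the induction is routine. (This also recovers \eqref{eq:IB} as the special case where $S$ is an independent set, which is trivially a DAG, so $\delta(G)\ge\gamma(G)$.)
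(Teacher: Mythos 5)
Your proposal is correct and takes essentially the same approach as the paper's proof: show that a topological ordering of the DAG set $S$ lets you recover $x_S$ from $x_{V\setminus S}$ one vertex at a time, giving the injection $\cC\hookrightarrow\cQ^{V\setminus S}$. The only cosmetic difference is the direction of the induction (you peel $S$ from the source end; the paper peels from the sink end), which just reflects the opposite choice of arc orientation for $\cN(\cdot)$.
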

\begin{proof} 
  Let $S \subset V$ be a DAG set. We argue that the values stored on $S$ are uniquely determined by the values stored on $G \setminus S$. 
  First, consider the topological ordering of $S$, all the outgoing edges of the last ordered $s \in S$ end in $G\setminus S$.
  Therefore, $s$ can be recovered by $G \setminus S$.
  Then we can recover the second last vertex of $S$ by $s$ and  $G \setminus S$. 
  Proceeding like this, we are able to recover all values stored on $S$. 
  Taking $S$ to be the largest DAG set in $G$, we have $\ccap_q(G) \leq 1 - \delta(G)/n$ for all $q \ge 2$. 
\end{proof}

The bounds~\eqref{eq:IB} and~\eqref{eq:MAIS} are tight and can be achieved by clique partition codes.
A \emph{clique partition} of a graph $G$ is a partition of the graph into subsets such that the induced graph by every subset is a clique. 

Other upper bounds on the rate of storage codes are found in \cite{MazMcgVor2019} for the symmetric case and in \cite{AK2018} for the general case. In this paper we will need a {\em linear programming bound} for the capacity of storage codes proved in \cite{MazMcgVor2019} (its statement is somewhat technical and is given in Appendix \ref{sec:lp-bound}).  We note that the authors of \cite{MazMcgVor2019} also found some families of codes that achieve it.

\subsection{Capacity for finite graphs and the code-anticode bound}\label{sec:code-anticode}

Given a finite graph $G(V,E)$, we define the graphical distance $\rho(u,v)$ as the length of the shortest path in $G$
between $u$ and $v$. A code in $G$ is a subset $\cC\subset V$, and it is said to have minimum distance $d$
if $\rho(u,v)\ge d$ for all pairs of distinct vertices $u,v\in \cC$, or, in other words, if for any two distinct 
code vertices $u,v\in \cC$ the balls $\cB_r(u)$ and $\cB_r(v)$ of radius $r=\lfloor (d-1)/2\rfloor$ are disjoint.
A code $\cC\subset V$ is called an $r$-covering code if $\bigcup_{v\in \cC} \cB_r(v)=V.$ We denote by 
$C(G,r)$ the smallest size of an $r$-covering code in the graph $G.$ 

Suppose that we are given a finite set $V, |V|=n$ and a subset $R(v)\subset V\backslash\{v\}$ that serves the recovery region for the vertex $v$. In this section the recovery regions will be given by balls of a
given radius $r$ in the metric $\rho$, the same for every vertex $v\in V.$ We construct a graph $G_r=G(V,E)$ by connecting each vertex $v$ with all the vertices $u$ such that $1\le \rho(v,u)\le r$. 
For instance, in Sec.~\ref{sec:ell-metrics} below, $V$ will be an $n\times n$ region of $\Z^2$ and $R(v)=\cB_r(v)\backslash\{v\}$, where $\cB_r(\cdot)$ is a ball of radius $r$ in some metric on $\Z^2$ (we focus on the $l_1$ and $l_\infty$ distances).

Denote by $A(G;r+1)$ the size of the largest code in $G$ with minimum distance at least $r+1$. Any such a code is an independent set in $G_r,$ and thus by \eqref{eq:IB}
\begin{equation}\label{eq:min_dist_bound}
\ccap(G_r) \leq 1 - \frac 1{n}\gamma(G_r) = 1-\frac 1{n} A(G;r+1).
\end{equation}
This gives an interpretation of the bound \eqref{eq:IB} in coding-theoretic terms.
Furthermore, every ball of radius $\lfloor r/2\rfloor$ in $G$ is a clique in the graph $G_r$, and thus, any $\lfloor r/2\rfloor$-covering code in $G$ can form a clique covering of the graph $G_r$. As above, let $\alpha(G_r)$ be the number of cliques in the smallest covering. 
From \eqref{eq:clique} we obtain that
 \begin{equation*}\label{eq:spb}
\ccap(G_r) \geq 1- \frac1{n}\alpha(G_r) \geq 1- \frac1{n}C\left(G; \left \lfloor \frac{r}{2} \right \rfloor \right). 
\end{equation*}

Assume the number of vertices in the ball of radius $\lfloor(r-1)/2\rfloor$ in $G$ does not depend on the center, and denote this number by $B_G(r)$. There are general conditions for this to hold, for instance if the graph $G$ is arc-transitive. Then the {\em sphere packing bound} implies that $A(G;r+1)\leq \frac{n}{B_G(\lfloor(r-1)/2\rfloor)}$. If $r$ is even and there exists a perfect $r/2$-error-correcting code then $A(G;r+1) = C(G;r/2) = \frac{n}{B_G(\lfloor(r-1)/2\rfloor)}$ and so 
\begin{align*}
\ccap(G_r) = 1 -\frac1{n} A(G;r+1) = 1 - \frac1{B_G(\lfloor(r-1)/2\rfloor)}. 
\end{align*}

These relations can be used to derive capacity bounds for graphs. Below we use an extension of the sphere-packing bound, 
known as the {\em code-anticode bound}, to derive exact values of capacity for some recoverable systems in $\Z^2$.
 A subset of 
vertices in $G$ with diameter $D$ is called an {\em anticode}. For instance, a ball of radius $\tau$ is an anticode with 
diameter $2\tau$. It is known that in many cases the largest size of an anticode of even diameter $D$ is achieved by a ball of radius $D/2$. For odd values of $D$ the largest anticode is usually constructed by taking a union of two  balls of radius $(D-1)/2$ whose centers are adjacent in $G.$  

Delsarte \cite[Thm.3.9]{Delsarte1973} proved that if $G$ contains a code $\cC$ with 
minimum distance $r+1$ and an anticode $\cD$ of diameter $r$, {\em and $G$ is distance-regular}, then 
    \begin{equation}\label{eq:c-a}
       |\cC||\cD|\le n.
    \end{equation} 
Taking $\cD$ a ball of radius $\lfloor(r-1)/2\rfloor$ recovers the sphere-packing bound. The condition of distance regularity was relaxed in \cite{AAK2001,E11}. In particular, \cite[Thm.~$1'$]{AAK2001}
implies that \eqref{eq:c-a} holds true as long as $G$ admits a transitive automorphism group. A code that satisfies the code-anticode bound with equality is called {\em diameter perfect}. The existence of diameter perfect codes is in general a difficult question; see e.g., \cite{shi2022family,etzion2022perfect} for recent references.
In our examples, diameter perfect codes will exist, and they will also 
generate a tiling of the graph with congruent copies of the corresponding
anticode. 

We continue with the following general claim.
\begin{theorem}\label{cor:covering} Suppose that $G$ has a transitive automorphism group and contains a diameter perfect code.
Then,
    \begin{equation}\label{eq:lb}
    \ccap(G_r) = 1 - \frac{1}{D_G(r)},
    \end{equation}
 where $D_G(r)$ is the largest possible size of an anticode in $G$ of diameter $r$.   
\end{theorem}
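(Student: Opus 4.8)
The plan is to prove \eqref{eq:lb} by sandwiching $\ccap(G_r)$ between two bounds that the diameter-perfectness hypothesis forces to coincide. Write $\cC$ for the diameter perfect code in $G$; it has minimum distance $r+1$ and attains equality in the code--anticode bound \eqref{eq:c-a} against some anticode $\cD$ of diameter $r$, and since $|\cC|\le A(G;r+1)\le n/|\cD|$ and $|\cD|\le D_G(r)$, equality throughout forces $|\cD|=D_G(r)$ and $|\cC|=n/D_G(r)$. For the upper bound, recall from \eqref{eq:min_dist_bound} that $\ccap(G_r)\le 1-\tfrac1n A(G;r+1)$; combining $A(G;r+1)\,D_G(r)\le n$ (from \eqref{eq:c-a}, valid since $G$ has a transitive automorphism group) with $A(G;r+1)\ge|\cC|=n/D_G(r)$ gives $A(G;r+1)=n/D_G(r)$, hence $\ccap(G_r)\le 1-1/D_G(r)$.

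For the matching lower bound I would exhibit a clique partition of $G_r$ into $n/D_G(r)$ cliques and invoke the clique-covering bound \eqref{eq:clique} (equivalently, write down the associated parity-check storage code: one linear parity constraint per block). The blocks will be congruent copies of $\cD$. Any image $g(\cD)$ under an automorphism $g$ of $G$ is again an anticode of diameter $r$, hence contains at most one point of $\cC$, two such points being at distance $\le r<r+1$. Counting the pairs $(c,g)$ with $c\in\cC$, $g\in\operatorname{Aut}(G)$, and $c\in g(\cD)$ in two ways --- as $\sum_g|\cC\cap g(\cD)|\le|\operatorname{Aut}(G)|$ on one side, and, using transitivity and the orbit--stabilizer theorem, as $|\cC|\,|\cD|\,|\operatorname{Aut}(G)|/n=|\operatorname{Aut}(G)|$ on the other --- forces $|\cC\cap g(\cD)|=1$ for \emph{every} $g\in\operatorname{Aut}(G)$. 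In the regime where the theorem is applied $G$ is a Cayley graph of an abelian group, and this identity then says precisely that $(c,d)\mapsto c-d$ is a bijection $\cC\times\cD\to V$; consequently $V=\bigsqcup_{c\in\cC}(c-\cD)$ is a partition of $V$ into $|\cC|=n/D_G(r)$ translates of $-\cD$, each again an anticode of diameter $r$ and hence a clique in $G_r$. Imposing one parity-check equation on each of these $n/D_G(r)$ cliques yields, for every $q$, a valid $q$-ary storage code on $G_r$ of rate $1-\tfrac1n|\cC|=1-1/D_G(r)$. Thus $\ccap(G_r)\ge 1-1/D_G(r)$, and together with the upper bound this proves \eqref{eq:lb}.

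The step I expect to be the real obstacle is the passage from the counting identity $|\cC\cap g(\cD)|=1$ to an honest tiling of $V$ by congruent copies of $\cD$: when a regular abelian group acts --- as for the toroidal grids of Section~\ref{sec:ell-metrics} --- it reduces to the clean factorization statement above, but in greater generality one must argue carefully which transversal of automorphisms to pick so that the selected copies of $\cD$ are pairwise disjoint and exhaust $V$. The remaining ingredients --- the upper bound and the conversion of a tiling into a storage code --- are routine given what is already assembled in the preliminaries.
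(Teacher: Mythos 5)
Your upper bound is exactly the paper's: the code--anticode bound \eqref{eq:c-a} plus the existence of the diameter perfect code gives $A(G;r+1)=n/D_G(r)$, and \eqref{eq:min_dist_bound} closes it. The lower bound is also the same in outline --- cover $G_r$ by cliques that are copies of the maximal anticode $\cD$ and invoke \eqref{eq:clique} --- but you do considerably more work to justify that such a cover exists. The paper simply writes ``by assumption, there is a clique covering of $G_r$,'' relying on the remark in Section~\ref{sec:code-anticode} that in the examples at hand the diameter perfect code happens to generate a tiling by congruent copies of $\cD$; no derivation from the stated hypotheses is offered. Your double-counting argument (showing $\sum_g|\cC\cap g(\cD)|=|\cC||\cD|\,|\!\operatorname{Aut}(G)|/n=|\!\operatorname{Aut}(G)|$, forcing $|\cC\cap g(\cD)|=1$ for every $g$) is a genuine addition, and your reduction of this identity to a tiling $V=\bigsqcup_{c\in\cC}(c-\cD)$ in the regular abelian (Cayley-graph) case is correct and covers every application in the paper (the torus $\cT_n$ and the $\Z^2$ grids of Section~\ref{sec:ell-metrics}).

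The ``real obstacle'' you flag is a legitimate one, and it is in fact a gap in the paper's own proof as written: for a general vertex-transitive $G$, the pointwise identity $|\cC\cap g(\cD)|=1$ does not obviously let you select a finite transversal of automorphisms whose $\cD$-images are pairwise disjoint and exhaust $V$ --- the translates indexed by $\cC$ via $g_c(d_0)=c$ need not be disjoint when the stabilizer is nontrivial. Either the theorem should be read with an implicit additional hypothesis (the diameter perfect code induces a tiling by anticodes, as the preliminaries suggest), or the transitivity hypothesis should be strengthened to the existence of a regular abelian subgroup of automorphisms, under which your factorization argument goes through cleanly. Your version, with the regular-abelian reduction made explicit, is actually the more rigorous of the two.
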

\begin{proof}
An anticode $\cD$ of diameter $r$ forms a clique in the graph $G_r$ because the recovery region of every vertex in $\cD$ includes all the other vertices in $\cD.$ Thus, by assumption, there is a clique covering of $G_r$, and if the anticodes are of the largest possible size, this clique covering is a smallest one. Then from \eqref{eq:clique} we conclude that $\ccap(G_r) \ge 1 - \frac{1}{D_G(r)}$. 

Further, from \eqref{eq:c-a} we have $A(G;r+1)\le n/D_G(r)$, and since $G$ contains a diameter perfect code, this is 
in fact an equality.  Then \eqref{eq:min_dist_bound} implies that $\ccap(G_r) \le 1 - \frac{1}{D_G(r)}$, completing the proof.
\end{proof}

As an example, consider a {\em discrete torus}, i.e., a graph $\cT_n$ on the vertex set
$V=[n]\times [n]$ with an edge between $(i_1,j_1)$ and $(i_2,j_2)$ whenever $(i_1-i_2,j_1-j_2)$ equals one of $(1,0),(0,1),(-1,0),$ $(0,-1)$ modulo $n$. Consider storage codes on $\cT_n$ based on the recovery sets formed by the entire rows and columns (circles) on the torus. In other words, we take the recovery set in the form $R:=(0,\ast)\cup(\ast,0)\backslash(0,0)$ where the unspecified coordinates are allowed to vary over the entire set $\Z_n.$ The graph $(\cT_n)_R$ representing the system is obtained from
$\cT_n$ by connecting all pairs of vertices whose coordinates are identical
in either the first or the second position. 

\begin{theorem}
For $n \geq 3$ the storage capacity of the discrete torus $G:=(\cT_n)_R$ is
\begin{align}
    \ccap(G) = 1-\frac{1}{n}.\label{eq:torus-bound}
\end{align}
\end{theorem}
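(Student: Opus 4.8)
The plan is to sandwich $\ccap(G)$ between the clique-covering lower bound \eqref{eq:clique} and the independence-number upper bound \eqref{eq:IB}, and to observe that both evaluate to $1-1/n$. First I would record the structure of $G=(\cT_n)_R$. Its vertex set is $V=[n]\times[n]$, so $|V|=n^2$, and since $R=(0,\ast)\cup(\ast,0)\setminus(0,0)$, two vertices $(i_1,j_1)$ and $(i_2,j_2)$ are adjacent in $G$ exactly when they agree in precisely one coordinate, i.e.\ they lie in a common row or a common column but are not equal. Because $R$ is symmetric, $G$ is an ordinary undirected graph and \eqref{eq:IB} applies directly (alternatively one may invoke the MAIS bound \eqref{eq:MAIS}).

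For the lower bound, note that for each fixed $i\in[n]$ the row $\{(i,j):j\in[n]\}$ is a clique in $G$, since any two of its vertices share their first coordinate; and these $n$ rows partition $V$. Hence the minimum clique-cover size satisfies $\alpha(G)\le n$, and \eqref{eq:clique} gives $\ccap(G)\ge 1-\alpha(G)/n^2\ge 1-1/n$. Concretely this is realized by the clique-partition code that places a single parity symbol on each row and distributes it over the $n$ vertices of that row, so that every vertex is recovered from the other $n-1$ entries of its row; this code has rate $(n-1)/n$ over any alphabet.

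For the upper bound, consider the diagonal $\cD=\{(i,i):i\in[n]\}$ (any permutation pattern works as well). No two vertices of $\cD$ agree in either coordinate, so $\cD$ is an independent set in $G$ of size $n$, whence $\gamma(G)\ge n$. Plugging into \eqref{eq:IB} yields $\ccap(G)\le 1-\gamma(G)/n^2\le 1-1/n$. Combining the two estimates proves \eqref{eq:torus-bound}. One may also phrase this as an instance of the reasoning behind Theorem~\ref{cor:covering}: $\cT_n$ carries a transitive (indeed arc-transitive) automorphism group, the maximal cliques of $G$ are the rows and columns, which have size $n$ and tile $V$, and $\cD$ meets the code--anticode inequality $|\cD|\cdot n=n^2=|V|$ against such a clique viewed as an anticode of diameter one, so $\cD$ is diameter perfect.

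I do not expect a genuine obstacle here. The one point to keep straight is the clash of notation between the side length $n$ of the torus and the symbol $n$ used for the vertex count in \eqref{eq:clique}--\eqref{eq:IB}, which in our graph is $n^2$. Note also that the coincidence of the two bounds automatically certifies that the row clique cover and the diagonal independent set are extremal, so no separate optimality argument is needed; the hypothesis $n\ge3$ merely rules out degenerate small cases.
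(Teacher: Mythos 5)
Your proof is correct, and the lower-bound argument (row cliques, parity per row, clique-cover bound \eqref{eq:clique}) matches the paper's exactly. Where you diverge is the upper bound: you exhibit the diagonal $\cD=\{(i,i):i\in[n]\}$ as an explicit independent set of size $n$ and invoke the independence-number bound \eqref{eq:IB} directly, whereas the paper routes through Theorem~\ref{cor:covering}, i.e.\ it observes that the automorphism group of $G$ acts transitively, notes that an anticode of Hamming diameter $1$ in $[n]\times[n]$ has size at most $n$ (namely $D_G(1)=n$), and reads off the conclusion from the code-anticode machinery. Both arguments are valid and in fact dual to one another (your diagonal is exactly the code witnessing that the code-anticode inequality is tight, as you note at the end); your route is shorter and self-contained since it does not require verifying transitivity or the existence of a diameter-perfect code, while the paper's route is chosen deliberately to illustrate the general framework of Section~\ref{sec:code-anticode}. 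Your remark about the notational clash (vertex count $n^2$ versus side length $n$ in \eqref{eq:clique}--\eqref{eq:IB}) is well taken and worth keeping straight, but there is no gap on either side.
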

\begin{proof}
 Notice first that for any two vertices $(i_1, j_1)$, $(i_2, j_2)$, there is an edge between $(i_1,j_1)$ and $(i_2, j_2)$ if and only if $d_H\left( (i_1, j_1), (i_2, j_2) \right) = 1$, where $d_H$ denotes the Hamming distance. Placing a parity constraint on every row of $\cT_n$ 
yields a code of rate $1-1/n,$ proving a lower bound in \eqref{eq:torus-bound}.
In order to show that it is also an upper bound, first observe that the automorphism group of $G$ acts transitively on it. On account of Theorem~\ref{cor:covering}, to complete the proof it suffices to show that $D_{G}(1) = n.$ In words, we want to show that the largest anticode of diameter $1$ in the graph $G$ under the Hamming metric is of size $n$, which is immediate.
\end{proof}

\begin{remark} Using the results of Theorem~\ref{thm:1-infty}, we can also find capacity of the torus with recovery
region defined by the $l_1$ or $l_\infty$ distance.
\end{remark}

\section{Interleaving Structures}\label{sec:interleaving}

In this section, we introduce a way to construct new storage codes for finite graphs by interleaving existing ones. 
Let $\cC$ be a storage code on a finite graph $G$ over the alphabet $Q$. Let $x^1,\dots,x^{ks}$ be arbitrary $ks$ codewords from $\cC$, where $k,s \in \N$. 
Roughly speaking, the interleaving operation maps these $ks$ codewords to $(Q^k)^{ns}$, and the set of all interleaved codewords forms a storage code on a new graph, defined below as a part of the code construction. Denote by $\bar{\cC}$ the interleaved code and $\bar{G}$ the corresponding graph.
We will define the interleaving operation such that $\cR_{q^k}(\bar\cC) = \cR_q(\cC)$, thus $\ccap_{q^k} (\bar{G})\ge \ccap_q(G)$ holds.
Moreover, if $\cR_q(\cC)$ meets the MAIS bounds~\eqref{eq:MAIS} or the linear programming bound of Mazumadar et al.~\cite{MazMcgVor2019}, then we have $\cR_{q^k}(\bar\cC) = \ccap(\bar{G})$, enabling one to construct optimal storage codes for a wide range of parameters.

\subsection{Interleaving construction}
\begin{definition} \label{def: orthogonal partition}
Let $\cM$ be a set of $k \times s$ matrices ($s \ge k$) over $\N$. We say that $\cM$ 
forms a family of orthogonal partitions of the set of integers $\{1,\dots,ks\}$ if
    \begin{enumerate}
        \item Every $A \in \cM$ contains every element of $\{1,\dots,ks\}$ exactly once, thus, the columns of $A$ form a partition of $\{1,\dots,ks\}$.
        \item For every pair of distinct matrices $A, B \in \cM$, every column of $A$ has common entries with $k$ columns of $B$, and every such pair of intersecting columns has exactly one common element.
    \end{enumerate}
\end{definition}
We call $(k, s)$ the shape and $|\cM|$ the size of the family of orthogonal partitions,  respectively.

\begin{example}\label{eg:Kirkman}
Let $k=3,s=5.$ The following set of matrices forms a family of orthogonal partitions of the
set $\{1,2,\dots,15\}$:
 \begin{align*}
       &   \left[\!\begin{array}{*5{c@{\hspace*{0.05in}}}}
        1 & 4 & 5 & 6 & 7 \\
        2 & 10 & 8 & 9 & 11 \\
        3 & 14 & 13 & 15 & 12 \\
       \end{array}\right],
       \left[\!\begin{array}{*5{c@{\hspace*{0.05in}}}}
        1 & 2 & 3 & 4 & 6 \\
        8 & 5 & 13 & 11 & 10 \\
        9 & 7 & 14 & 15 & 12 \\
       \end{array}\right],
       \left[\!\begin{array}{*5{c@{\hspace*{0.05in}}}}
        1 & 2 & 3 & 4 & 7 \\
        10 & 13 & 5 & 8 & 9 \\
        11 & 15 & 6 & 12 & 14 \\
       \end{array}\right],
                   \left[\!\begin{array}{*5{c@{\hspace*{0.05in}}}}
        1 & 2 & 3 & 6 & 7 \\
        4 & 12 & 9 & 11 & 8 \\
        5 & 14 & 10 & 13 & 15 \\
       \end{array}\right], 
       \\
      &  \left[\!\begin{array}{*5{c@{\hspace*{0.05in}}}}
        1 & 2 & 3 & 4 & 5 \\
        6 & 8 & 12 & 9 & 11 \\
        7 & 10 & 15 & 13 & 14 \\
       \end{array}\right],
       \left[\!\begin{array}{*5{c@{\hspace*{0.05in}}}}
        1 & 2 & 3 & 5 & 6 \\
        12 & 9 & 4 & 10 & 8 \\
        13 & 11 & 7 & 15 & 14 \\
       \end{array}\right],
       \left[\!\begin{array}{*5{c@{\hspace*{0.05in}}}}
        1 & 2 & 3 & 5 & 7 \\
        14 & 4 & 8 & 9 & 10 \\
        15 & 6 & 11 & 12 & 13 \\
       \end{array}\right].
  \end{align*}
  For instance, the second column of the first matrix interests with columns 3,4, and 5 of the second matrix, etc.
\end{example}

Families of orthogonal partitions can be constructed relying on resolvable designs~\cite{CollbornHand2007}, which we define next. Let $\cB$ be a collection of $k$-subsets
of a finite set $V$. We call the elements of $V$ and $\cB$ points and blocks, respectively.
The pair $(V,\cB)$ is called an $r$-$(v,k,\lambda)$ block design if every $r$-subset of $V$ is contained in exactly $\lambda$ blocks. A set of blocks is called a parallel class if they form a partition of $V$. Finally, a block design is called resolvable if its set of blocks $\cB$ can be partitioned into parallel classes.

\begin{proposition}\label{prop:cons_ort_par}
A $2$-$(v,k,1)$ resolvable design defines a family of orthogonal partitions of shape $(k,s)$ and size $(v-1)/(k-1)$, where $s = v/k$.
\end{proposition}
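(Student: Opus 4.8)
The plan is to exhibit the family of orthogonal partitions directly from the combinatorial data of the resolvable design. Let $(V,\cB)$ be a $2$-$(v,k,1)$ resolvable design, so $|V|=v$, every block has $k$ points, and every pair of points lies in exactly one block. Standard counting for such designs gives that each point lies in exactly $r = (v-1)/(k-1)$ blocks, and that each parallel class consists of $s = v/k$ blocks; moreover the number of parallel classes equals $r$ (each parallel class covers each point once, and there are $r$ blocks through each point, one per class). Now identify $V$ with $\{1,\dots,ks\}$ via a fixed bijection. For each parallel class $P = \{B_1,\dots,B_s\}$, form a $k\times s$ matrix $A_P$ whose $j$-th column lists the $k$ elements of $B_j$ (in some fixed order within the column, and with the $s$ blocks listed in some fixed order). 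Let $\cM = \{A_P : P \text{ a parallel class}\}$; this has size $r = (v-1)/(k-1)$ and shape $(k,s)$.

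The next step is to verify the two axioms of Definition~\ref{def: orthogonal partition}. Property (1) is immediate: a parallel class is by definition a partition of $V$ into $s$ blocks of size $k$, so the columns of $A_P$ partition $\{1,\dots,ks\}$ and each element appears exactly once. For property (2), take two distinct matrices $A_P$ and $A_{P'}$ coming from distinct parallel classes $P \ne P'$, and fix a column of $A_P$, i.e., a block $B \in P$. For each point $x \in B$ there is a unique block $B'_x \in P'$ containing $x$ (since $P'$ partitions $V$). I claim the map $x \mapsto B'_x$ is injective on $B$: if two distinct points $x,y \in B$ mapped to the same block $B' \in P'$, then $B'$ would contain the pair $\{x,y\}$, but so does $B$, and $B \ne B'$ (they lie in different parallel classes, hence are distinct blocks), contradicting $\lambda = 1$. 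Hence $B$ meets exactly $|B| = k$ distinct blocks of $P'$, i.e., exactly $k$ columns of $A_{P'}$, and each such intersecting pair of blocks shares exactly one point — again because two distinct blocks share at most one point by $\lambda=1$, and they share at least the point $x$ that routed $B$ to $B'_x$. This is precisely axiom (2).

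Finally I would note the parameter bookkeeping: we have produced a family of size $r = (v-1)/(k-1)$ of shape $(k,s)$ with $s = v/k$, and the condition $s \ge k$ required in Definition~\ref{def: orthogonal partition} holds because $v/k \ge k$ is equivalent to $v \ge k^2$, which follows from Fisher-type inequalities for $2$-designs (e.g.\ $v \ge k^2 - k + 1 > k^2 - k$, and in the resolvable case one in fact has $v \ge k^2$ whenever there is more than one parallel class; the degenerate case $v = k$ with a single block is trivial). The only genuinely delicate point is the injectivity argument in property (2), and even that reduces to the single observation that two distinct blocks of a $2$-$(v,k,1)$ design intersect in at most one point; everything else is routine verification and parameter counting. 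I would therefore present the injectivity claim as the one displayed sub-argument and keep the rest terse.
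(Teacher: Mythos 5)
Your proof is correct and takes essentially the same route as the paper: each parallel class becomes one matrix, and $\lambda=1$ forces any two blocks from distinct classes to meet in at most one point. You are in fact somewhat more thorough than the paper's proof, which only notes the ``intersect in at most one point'' consequence of $\lambda=1$ and leaves the count of exactly $k$ intersecting columns (your injectivity argument) and the $s\ge k$ bookkeeping implicit.
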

\begin{proof}
Given a $2$-$(v,k,1)$ resolvable design, each parallel class yields a matrix in the family of orthogonal partitions, whose columns are the blocks in the class\footnote{We only require each column to contain all the elements from the block and do not impose any ordering.}.
Therefore, every matrix contains all the points exactly once, and two columns from different matrices are either disjoint or intersect on one point. Indeed, if two columns intersect on two points, this would imply that a $2$-subset is contained in two blocks, which is a contradiction. 
\end{proof}

By Proposition~\ref{prop:cons_ort_par}, the existence of $2$-$(v,k,1)$ resolvable design implies the existence of orthogonal partition families.
The necessary conditions for the existence of $2$-$(v,k,1)$ resolvable designs are $k \vert v$ and $(k-1) \vert (v-1)$. 
If $v$ and $k$ are both powers of the same prime, then the necessary conditions are also sufficient ~\cite[Theorem 7.10]{CollbornHand2007}.
In particular,  for every $v \equiv 3 \mod 6$ there exists a family of $2$-$(v,3,1)$ resolvable designs called Kirkman triple systems (Example~\ref{eg:Kirkman} is obtained from a Kirkman triple system with $v=15$). Therefore, resolvable designs give us a rich collection of orthogonal families\footnote{Every $2$-$(v,k,1)$ resolvable design yields $(k!)^{\frac{v(v-1)}{k(k-1)}} (s!)^{\frac{v-1}{k-1}}$ orthogonal families. To see this, note that there are $v(v-1)/k(k-1)$ blocks and $(v-1)/(k-1)$ parallel classes, and we can rearrange elements in each column and permute the columns of each matrix.}. 

Now we will present a way of obtaining new storage codes from known codes. Given a code $\cC$ on a graph $G=(V,E)$ and a family $\cM$ of orthogonal partitions of shape $(k,s)$, we will first construct a new graph $\bar{G} = (\bar{V},\bar{E}).$
Fix a coloring of $G$ in $c$ colors such that adjacent vertices are assigned
different colors. 
Pick $c$ matrices from  $\cM$ and label them with colors $M_1, \dots, M_c$.
Let $\bar V = \{1,\dots,n\} \times \{1,\dots,s\} $, and $((t,\mu),(t',\mu')) \in \bar{E}$ if and only if $(t,t') \in E$, and column $\mu$ of $M_{c(t)}$ intersects column $\mu'$ of $M_{c(t')}$, where $c(\cdot)$ denotes the color of the vertex. 
Note that each vertex $t \in V$ corresponds to an independent set $\bar{t} = \{(t,1), \dots, (t,s)\} \in \bar{V}$, and there are edges between vertices in $\bar{t}$ and $\bar{t}'$ only if $(t, t') \in E$. This construction works for
both undirected and directed graphs $G$.

Next, given a storage code $\cC$ on a finite graph $G = (V,E)$ over the alphabet $\cQ$, $|\cQ| = q$, and $V = \{1,\dots,n\}$, we define an interleaving procedure that produces a storage code $\bar\cC$ on $\bar G$ over $(\cQ^k)^{ns}.$

\begin{enumerate}
  \item Choose codewords $x^1, x^2,\dots,x^{ks}$ from $\cC$ (not necessarily distinct).
For each $x^\lambda \in \cC$, $\lambda = 1,\dots, ks$, denote by $x^\lambda_v$ the symbol stored on the vertex $v \in V$.
  \item For each vertex $t \in V$, denote by $r = c(t)$ its color, and form a $k \times s$ matrix such that entry $(i,j)$ is $x^{M_r(i,j)}_t$. In other words, we arrange $x^1_t, \dots, x^{ks}_t$ in a $k \times s$ matrix $X_t$ according to $M_r \in \cM$, namely, the $i,j$-th entry is $x^{M_r(i,j)}_t$.
  \item A codeword $\bar{x} \in (\cQ^k)^{ns}$ is obtained by concatenating the columns of the matrices $X_1, X_2, \dots, X_n$ in some fixed order.
  Specifically, for $\mu\in\{1,\dots,s\}$ and $t\in\{1,\dots,n\}$ we assign column $\mu$ of $X_t$ to the vertex $(t,\mu)$.
 \item The collection of codewords $\bar{x}$ constructed in the previous steps forms
 the code $\bar{\cC}$.
\end{enumerate}

\begin{proposition}\label{prop:interleaved}
  The interleaved code $\bar{\cC}$ is a storage code on $\bar{G}$.
\end{proposition}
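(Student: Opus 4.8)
The plan is to verify the defining property of a storage code directly: for every vertex $(t,\mu)\in\bar V$ and every codeword $\bar x\in\bar\cC$, the symbol stored on $(t,\mu)$ must be a function of the symbols stored on the in-neighbors of $(t,\mu)$ in $\bar G$. First I would fix a codeword $\bar x$ obtained from seed codewords $x^1,\dots,x^{ks}\in\cC$, and unwind the construction: the symbol on $(t,\mu)$ is the column $\mu$ of the matrix $X_t$, i.e.\ the $k$-tuple $\bigl(x^{M_r(1,\mu)}_t,\dots,x^{M_r(k,\mu)}_t\bigr)$ where $r=c(t)$. Since $\cC$ is a storage code on $G$, for each $\lambda$ there is a function $f$ (depending only on $t$, or rather on $\cN(t)$, not on the codeword) with $x^\lambda_t=f\bigl((x^\lambda_u)_{u\in\cN(t)}\bigr)$. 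So it suffices to show that for each row index $i$, the symbol $x^{M_r(i,\mu)}_t$ can be read off from the symbols stored on the in-neighbors $(u,\nu)$ of $(t,\mu)$ — because then applying $f$ to these recovered values gives $x^{M_r(i,\mu)}_t$.

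The key combinatorial step is the following: for a fixed neighbor $u\in\cN(t)$ (with color $r'=c(u)$) and a fixed target label $\lambda=M_r(i,\mu)$, I claim $\lambda$ appears in exactly one column $\nu$ of $M_{r'}$, and that column $\nu$ satisfies $((u,\nu),(t,\mu))\in\bar E$. The first assertion is property (1) of orthogonal partitions (every entry of $\{1,\dots,ks\}$ occurs exactly once in $M_{r'}$). For the second, note $\lambda$ lies in column $\mu$ of $M_r$ and in column $\nu$ of $M_{r'}$; since $c(t)\ne c(u)$ (adjacent vertices get distinct colors) we have $M_r\ne M_{r'}$, so by property (2) these two columns share a common element, namely $\lambda$ — hence by the definition of $\bar E$ (which requires $(u,t)\in E$, true since $u\in\cN(t)$, and that column $\nu$ of $M_{c(u)}$ intersects column $\mu$ of $M_{c(t)}$) we get $((u,\nu),(t,\mu))\in\bar E$. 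The symbol stored on $(u,\nu)$ is column $\nu$ of $X_u$, which contains the entry $x^\lambda_u$ in the row where $M_{r'}$ equals $\lambda$; thus $x^\lambda_u$ is recoverable from the symbol on $(u,\nu)$. Running over all $u\in\cN(t)$ recovers the full tuple $(x^\lambda_u)_{u\in\cN(t)}$, and then $f$ yields $x^\lambda_t$. Doing this for every row $i\in\{1,\dots,k\}$ recovers every coordinate of the symbol stored on $(t,\mu)$, which is exactly what we need. I would also remark that the recovery function is independent of $\bar x$ since it is built from the $f$'s and from the fixed matrices $M_1,\dots,M_c$, and that the argument is identical in the directed case, reading $\cN(t)$ as the out-neighborhood / in-neighborhood consistently.

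The main obstacle — really the only subtle point — is making sure the bookkeeping of indices is airtight: one must check that the column $\nu$ of $M_{c(u)}$ containing $\lambda$ is genuinely connected to $(t,\mu)$ in $\bar G$ and that this holds simultaneously for all $u\in\cN(t)$ with the \emph{same} target label $\lambda$, so that a single pass over the in-neighborhood of $(t,\mu)$ suffices to assemble the argument vector of $f$. Once the $\ne$-coloring is used to guarantee $M_{c(t)}\ne M_{c(u)}$ (so property (2), not property (1), governs the intersection of columns from the two different matrices), everything follows mechanically. I would close by noting that this construction preserves rates as claimed in the surrounding text — $|\bar\cC|\le|\cC|^{ks}$ with equality when the seeds range freely, giving $\cR_{q^k}(\bar\cC)=\cR_q(\cC)$ — but that is not needed for the present proposition and I would defer it.
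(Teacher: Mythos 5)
Your proof is correct and follows essentially the same route as the paper: for each row index $j$ you recover $x_t^{\lambda_j}$ by applying the recovery function of $\cC$ to the values $x_u^{\lambda_j}$, $u\in\cN(t)$, after locating each of them on a vertex of $\bar u$. The paper leaves the adjacency check (that the vertex $(u,\nu)$ storing $x_u^{\lambda_j}$ is indeed a neighbor of $(t,\mu)$ in $\bar G$) implicit; your write-up makes this step explicit via properties (1) and (2) of the orthogonal-partition family and the use of a proper coloring to guarantee $M_{c(t)}\neq M_{c(u)}$, which is a welcome clarification rather than a different argument.
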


\begin{proof}
  Suppose that node $(t,\mu)$ is erased, and denote by $(x_t^{\lambda_1},\dots, x_t^{\lambda_k})^T$ the column stored on it, where $x^{\lambda_1},\dots ,x^{\lambda_k} \in C$ are the codewords that were interleaved to define the codeword $\bar x$.
  Note that $x_t^{\lambda_j}$ is a function of $\{x_{t'}^{\lambda_j}, t' \in \cN(t)\}$, and $x_{t'}^{\lambda_j}$ is stored on exactly one of the nodes $\bar{t'}$, and by definition of orthogonal partitions, $x_{t'}^{\lambda_j}, x_{t'}^{\lambda_{j'}}$ are stored on different vertices if $j \ne j'$.
  In a nutshell, $(t,\mu)$ can be recovered from the values $\bigcup_{j=1}^k \{x_{t'}^{\lambda_j}, t' \in \cN(t)\}$, which are stored on exactly $k |\cN(t)|$ vertices of $\bar{G}$.
\end{proof}

\begin{remark}
  It is straightforward to check that the described method enables us to interleave codewords from different storage codes on $G$ by the same method. 
Namely, let $\cC_1, \dots, \cC_{ks}$ be $ks$ storage codes on $G$, and 
choose codewords $x^i \in \cC_i$ for each $i=1,2,\dots,ks$. They can be used in the above procedure to define a new code. At the same time, if our goal is to construct
large-size codes, then the construction should rely on codes of the largest 
known size, for instance, copies of the same code.
\end{remark}

\begin{example}[Example of interleaving]
  Let $G$ be a triangle and $\cC$ be a linear parity-check storage code over $\mathbb{F}_3$.
  \begin{enumerate}
    \item Color $G$ by $3$ colors and let 
    \[
      \cM = \left \{ 
        M_1 = \begin{bmatrix}
        1 & 3 & 5 \\
        2 & 4 & 6
      \end{bmatrix}, 
      M_2 = \begin{bmatrix}
        1 & 2 & 6 \\
        5 & 3 & 4 
      \end{bmatrix},
      M_3 = \begin{bmatrix}
        1 & 5 & 6 \\
        4 & 2 & 3
      \end{bmatrix}
      \right \}.
      \]

      \item Pick $6$ codewords from $\cC$, denote them by $x^1, \dots, x^6$, and arrange them into matrices $X_1, X_2, X_3$ relying on the orthogonal partitions given by $M_1, M_2, M_3$.
      We obtain
      \begin{align*}
        X_1 = \begin{bmatrix}
          x_1^1 & x_1^3 & x^5_1 \\
          x_1^2 & x_1^4 & x_1^6 
        \end{bmatrix}, 
        X_2 = \begin{bmatrix}
          x_2^1 & x_2^2 & x_2^6 \\
          x_2^5 & x_2^3 & x_2^4
        \end{bmatrix}
        X_3 = \begin{bmatrix}
          x^1_3 & x_3^5 & x_3^6 \\
          x_3^4 & x_3^2 & x_3^3 
        \end{bmatrix},
      \end{align*}
      and 
      $$
      \bar{x} = \begin{pmatrix}
        x_1^1 & x_1^3 & x^5_1 & x_2^1 & x_2^2 & x_2^6 & x^1_3 & x_3^5 & x_3^6 \\
        x_1^2 & x_1^4 & x_1^6 & x_2^5 & x_2^3 & x_2^4 & x_3^4 & x_3^2 & x_3^3 
      \end{pmatrix}.
      $$
  \end{enumerate}
  For instance, choosing $\{x^1 = 111, x^2 = 222, x^3 = 000, x^4 = 120, x^5 = 012, x^6 = 102\}\subset \cC$, we obtain
  \[
  \begin{split}
    \bar{x} & =
    \begin{pmatrix}
    1 & 0 & 0 & 1 & 2 & 0 & 1 & 2 & 2 \\
    2 & 1 & 1 & 1 & 2 & 2 & 0 & 0 & 0
  \end{pmatrix}.
  \end{split}  
  \]
  The corresponding graph $\bar{G}$ defined on $[3] \times [3]$ is shown in the figure.
  \begin{figure}[h]
    \includegraphics[width=\textwidth]{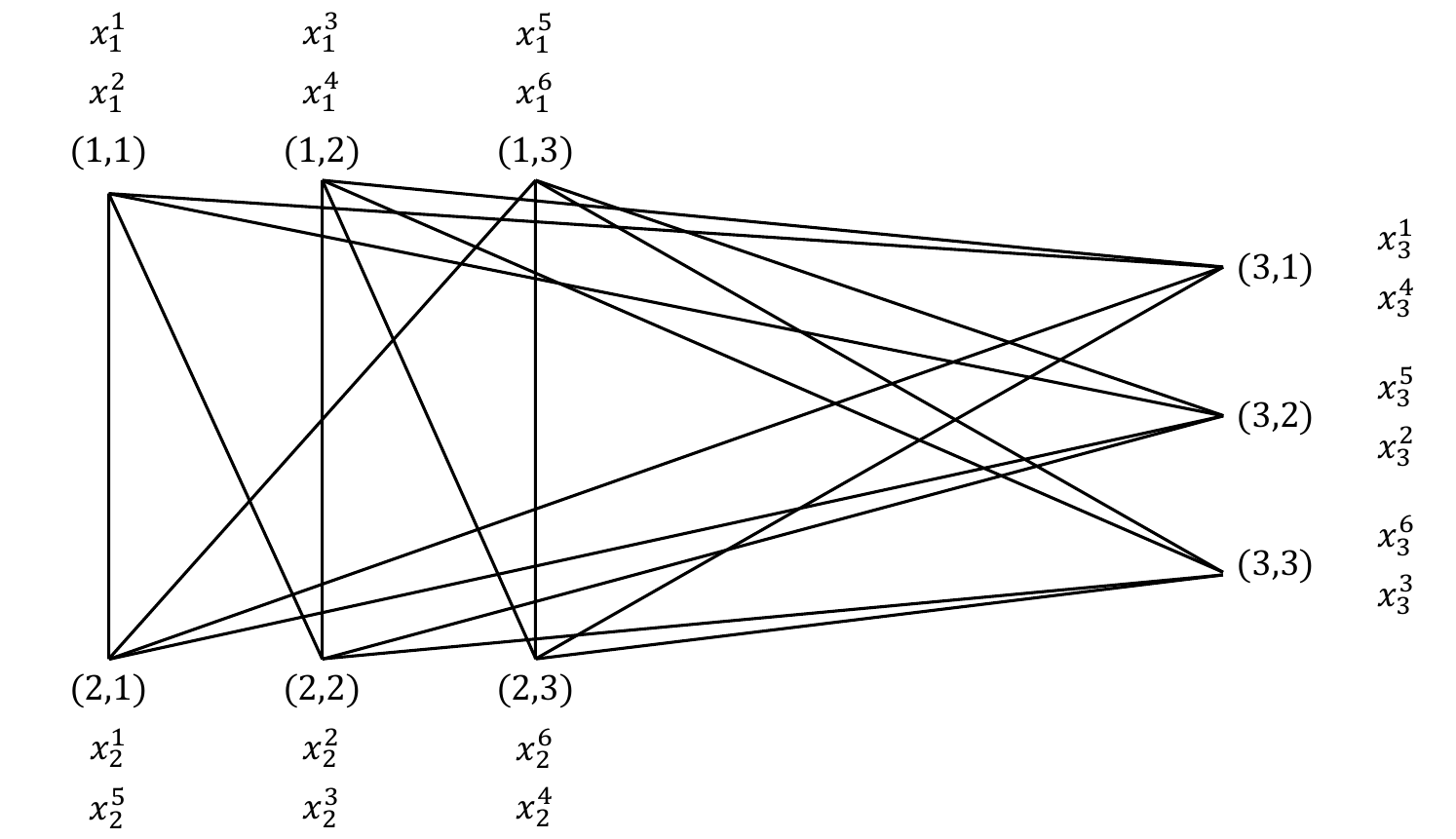}
    \caption{The figure shows the graph $\bar G$ in the example. We have $n=3,k=2,s=3$,
    and $c=3$. The graph has 9 vertices 
    connected as shown. Next to each vertex we show the pair of letters stored in it.}
  \end{figure}
\end{example}

\subsection{The rate of interleaved codes}
An immediate observation is as follows.
\begin{proposition}\label{prop:interleave-capacity}
  We have $\cR_{q^k}(\bar{\cC}) = \cR_q(\cC)$, and thus $\ccap_{q^k}(\bar{G}) \ge \ccap_q(G)$.
\end{proposition}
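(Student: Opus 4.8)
The plan is to count codewords directly from the interleaving construction. First I would observe that the map sending a $ks$-tuple of codewords $(x^1,\dots,x^{ks}) \in \cC^{ks}$ to the interleaved codeword $\bar x \in \bar\cC$ is surjective by definition of $\bar\cC$, so $|\bar\cC| \le |\cC|^{ks}$; for the reverse inequality I would check that this map is injective. Injectivity follows because each symbol $x^\lambda_t$ appears verbatim as an entry of the matrix $X_t$ (at the position dictated by $M_{c(t)}$), and since every element of $\{1,\dots,ks\}$ occurs exactly once in $M_{c(t)}$ (property (1) of orthogonal partitions), the entire tuple $(x^1_t,\dots,x^{ks}_t)$ can be read off from $X_t$, hence from the columns of $X_t$ distributed over the vertices $(t,1),\dots,(t,s)$. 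Doing this for every $t \in V$ recovers each codeword $x^\lambda \in \cC$ completely, so distinct input tuples give distinct $\bar x$. Therefore $|\bar\cC| = |\cC|^{ks}$.

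Next I would compute the rate. The code $\bar\cC$ lives on $\bar G$, which has $\bar n = ns$ vertices, over the alphabet $\cQ^k$ of size $q^k$. Hence
\begin{align*}
\cR_{q^k}(\bar\cC) &= \frac{1}{ns}\log_{q^k}|\bar\cC| = \frac{1}{ns}\log_{q^k}\big(|\cC|^{ks}\big) = \frac{ks}{ns}\cdot\frac{\log_q|\cC|}{k} = \frac{1}{n}\log_q|\cC| = \cR_q(\cC).
\end{align*}
Finally, since by Proposition~\ref{prop:interleaved} the code $\bar\cC$ is a valid storage code on $\bar G$ over $\cQ^k$, its rate is a lower bound on $\ccap_{q^k}(\bar G)$, so $\ccap_{q^k}(\bar G) \ge \cR_{q^k}(\bar\cC) = \cR_q(\cC)$. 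Taking the supremum over the seed code $\cC$ (for the fixed $q$) gives $\ccap_{q^k}(\bar G) \ge \ccap_q(G)$, though one should note $\bar G$ depends on $\cM$ and the coloring, not on $\cC$, so the inequality holds with $\bar G$ fixed once those choices are made.

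I do not expect a genuine obstacle here; the only point requiring a little care is the injectivity argument, i.e., making sure that the passage from the columns of $X_t$ back to the flat list $(x^1_t,\dots,x^{ks}_t)$ is well defined, which is exactly where property (1) of Definition~\ref{def: orthogonal partition} is used. One should also be slightly careful that the rate is defined with respect to $\bar n = ns$ and alphabet size $q^k$, so that the factors of $k$ and $s$ cancel as shown; this is the substance of the claim $\cR_{q^k}(\bar\cC) = \cR_q(\cC)$ and is purely bookkeeping.
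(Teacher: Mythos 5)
Your proposal is correct and follows essentially the same route as the paper: compute $|\bar\cC| = |\cC|^{ks}$ and then cancel factors in the rate formula. The paper simply asserts $|\bar\cC| = |\cC|^{ks}$ and calls the second claim obvious, whereas you supply the (correct) injectivity argument via Property~(1) of Definition~\ref{def: orthogonal partition} and spell out the supremum step — useful clarifications, but not a different method.
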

\begin{proof}
    The rate of $\bar{\cC}$ satisfies
    \[
      \cR_{q^k}(\bar{\cC})  = \frac{1}{ns} \log_{q^k} |\cC|^{ks} 
        = \frac{1}{n} \log_{q}|\cC|
        = \cR_q(\cC).
    \]
The second claim is obvious.
\end{proof}

\begin{remark}\label{remark:interleaving capacity}
  Let $\tilde{G} = (\tilde{V}, \tilde{E})$ be a graph such that 
  $\tilde{V} = \{1,\dots,n\} \times \{1,\dots,s\}$ and $((t,\mu), (t',\mu')) \in \bar{E}$ \emph{if and only if} $(t,t') \in E$. Then storage codes on $\tilde{G}$ over $\cQ$ are essentially the same as storage codes on $G$ over $Q^s$, and thus, $\ccap_q(\tilde{G}) = \ccap_{q^s}(G)$. 
  Since $\bar G = \tilde G$ if $k = s$, we focus on the case $k<s$.
\end{remark}
Recall that $\delta(G)$ denotes the size of the largest DAG in $G$.
\begin{proposition}\label{prop:opt-MAIS}
  Let $\cC$ be a storage code on $G$ with $\cR_q(\cC) = \ccap_q(G)$. 
  If $\ccap_q(G) = 1 - \delta(G)/n$, then the interleaved code $\bar{\cC}$ achieves the maximum rate on the corresponding graph $\bar{G}$, and $\ccap_{q^k}(\bar{G}) = \ccap(G)$.
\end{proposition}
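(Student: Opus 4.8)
The plan is to pair Proposition~\ref{prop:interleave-capacity} (which already gives the lower bound $\cR_{q^k}(\bar\cC)=\cR_q(\cC)=\ccap_q(G)=1-\delta(G)/n$) with a matching upper bound on $\ccap(\bar G)$ obtained from the MAIS bound (Theorem~\ref{th:DAG_set}). Since $\bar G$ has $ns$ vertices, Theorem~\ref{th:DAG_set} gives $\ccap(\bar G)\le 1-\delta(\bar G)/(ns)$, so everything reduces to showing that $\bar G$ contains a DAG set of size at least $s\,\delta(G)$, i.e.\ $\delta(\bar G)\ge s\,\delta(G)$.

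The key step is the claim that if $S\subseteq V$ is a DAG set in $G$, then $\bar S:=S\times\{1,\dots,s\}\subseteq\bar V$ is a DAG set in $\bar G$; applying this to a maximum DAG set $S$ (with $|S|=\delta(G)$) yields $\delta(\bar G)\ge s\,\delta(G)$. To prove the claim, recall from the construction that an arc $((t,\mu),(t',\mu'))\in\bar E$ forces $(t,t')\in E$. Hence any directed cycle $(t_1,\mu_1)\to(t_2,\mu_2)\to\cdots\to(t_m,\mu_m)\to(t_1,\mu_1)$ inside the subgraph of $\bar G$ induced by $\bar S$ projects, via the first coordinate, to a closed directed walk $t_1\to t_2\to\cdots\to t_m\to t_1$ in $G$ all of whose vertices lie in $S$. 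Since $G(S)$ is a DAG it has a topological order $\pi$, and every arc $t_i\to t_{i+1}$ satisfies $\pi(t_i)<\pi(t_{i+1})$; going around the walk would give $\pi(t_1)<\pi(t_1)$, a contradiction. (The degenerate short cases are absorbed by the same argument together with the fact that $G$ has no self-loops, as the recovery region excludes the vertex itself, and $\bar G$ has no edges inside a single fibre $\bar t$.) Therefore $\bar G(\bar S)$ is acyclic.

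With $\delta(\bar G)\ge s\,\delta(G)$ established, Theorem~\ref{th:DAG_set} applied to $\bar G$ gives $\ccap(\bar G)\le 1-\delta(\bar G)/(ns)\le 1-\delta(G)/n$, and chaining this with Proposition~\ref{prop:interleave-capacity} yields
\[
1-\frac{\delta(G)}{n}=\cR_{q^k}(\bar\cC)\le\ccap_{q^k}(\bar G)\le\ccap(\bar G)\le 1-\frac{\delta(G)}{n},
\]
so all four quantities coincide: $\bar\cC$ attains the storage capacity of $\bar G$ (in fact over every alphabet size) and $\ccap_{q^k}(\bar G)=1-\delta(G)/n$. Finally, the hypothesis $\ccap_q(G)=1-\delta(G)/n$ together with the MAIS bound $\ccap(G)\le 1-\delta(G)/n$ forces $\ccap(G)=1-\delta(G)/n$, hence $\ccap_{q^k}(\bar G)=\ccap(G)$, as claimed. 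The only delicate point is the projection argument for the key claim, and even there the difficulty is purely bookkeeping (no self-loops, no intra-fibre edges, closed walks in a DAG are impossible); I do not anticipate a substantive obstacle.
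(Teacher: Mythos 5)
Your proof is correct and follows essentially the same route as the paper's: lift a maximum DAG set $S$ of $G$ to $\bar S = S\times\{1,\dots,s\}$, check that $\bar S$ induces an acyclic subgraph of $\bar G$ by projecting arcs to the first coordinate, then combine the MAIS bound on $\bar G$ with Proposition~\ref{prop:interleave-capacity}. Your treatment of the projected closed walk via a topological order of $G(S)$ is a slightly more careful phrasing of what the paper states more loosely as a projected directed cycle, but it is the same argument.
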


\begin{proof}
  We first show that $\delta(\bar{G}) \ge s\delta(G)$. 
  Let $S \subset V$ be the largest DAG set of $G$, and $\bar{S}:=\bigcup_{v \in S}\bar{v}$ (recall that $\bar v=(v,\sigma)$, where $\sigma$ runs over $\{1,\dots,s\}$).
  It is straightforward to check that $\bar{S}$ is also a DAG in $\bar{V}$ of size $s\delta(G)$.
  Indeed, if $(t_1, \mu_1),\dots, (t_m, \mu_m), (t_1, \mu_1)$ is a directed cycle in $\bar{S}$, then $t_1,\dots, t_m, t_1$ is a directed cycle in $S$. 
  Therefore, $\delta(\bar{G}) \ge |\bar{S}| = s|S| = s\delta(G)$. 
  Now by the MAIS bound~\eqref{eq:MAIS},
  \[
    \ccap_{q^k}(\bar{G}) \le 1 - \frac{\delta(\bar{G})}{ns} \le 1 - \frac{|\bar{S}|}{ns} = 1 - \frac{\delta{(G)}}{n} = \ccap_q(G) = \ccap(G).
  \]
  On the other hand, by Proposition~\ref{prop:interleave-capacity}, we have $\ccap_{q^k}(\bar{G}) \ge \ccap_q(G) = \ccap(G)$.
\end{proof}
We remark that given a largest DAG set in $G$, we can easily identify a largest DAG
in $\bar G.$
\begin{corollary}\label{corollary:opt-MAIS}
  Let $G = (V,E)$ be a graph with a storage code $\cC$ that satisfies $\cR_q(\cC) = 1 - \delta(G)/n$. 
  We have $\delta(\bar{G}) = s \delta(\bar{G})$, where $\bar{G}$ is the graph defined by the interleaving procedure.
    In addition, if $S \subset V$ is a largest DAG of $G$, then the set $\bar{S}=\bigcup_{v \in S}\bar{v}$ is a largest DAG in $\bar{G}$.
\end{corollary}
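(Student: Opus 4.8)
The plan is to extract this corollary almost verbatim from the proof of Proposition~\ref{prop:opt-MAIS}, since all the needed ingredients already appear there. First I would note that the single hypothesis $\cR_q(\cC) = 1 - \delta(G)/n$ already puts us in the setting of Proposition~\ref{prop:opt-MAIS}: by the MAIS bound~\eqref{eq:MAIS} one always has $\ccap_q(G) \le 1 - \delta(G)/n$, so a code $\cC$ of rate $1 - \delta(G)/n$ is automatically optimal and $\ccap_q(G) = 1 - \delta(G)/n$. Hence Proposition~\ref{prop:opt-MAIS} applies and gives $\ccap_{q^k}(\bar G) = \ccap(G) = 1 - \delta(G)/n$.

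Next I would determine $\delta(\bar G)$ by squeezing it between two matching bounds. The lower bound $\delta(\bar G) \ge s\,\delta(G)$ was already established inside the proof of Proposition~\ref{prop:opt-MAIS}: lifting a largest DAG set $S$ of $G$ to $\bar S = \bigcup_{v\in S}\bar v$ yields a DAG set of $\bar G$ of size $s|S| = s\,\delta(G)$, because any directed cycle in $\bar{S}$ projects onto a directed cycle in $S$. For the reverse inequality, apply the MAIS bound~\eqref{eq:MAIS} to $\bar G$, which has $ns$ vertices; combining it with the equality from the previous step,
\[
1 - \frac{\delta(G)}{n} = \ccap_{q^k}(\bar G) \le \ccap(\bar G) \le 1 - \frac{\delta(\bar G)}{ns},
\]
so $\delta(\bar G) \le s\,\delta(G)$. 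Together these give $\delta(\bar G) = s\,\delta(G)$, which is the intended reading of the displayed identity in the statement.

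Finally, for the claim identifying an explicit largest DAG set in $\bar G$, I would observe that the lifting argument above shows that for \emph{any} largest DAG set $S$ of $G$, the set $\bar S = \bigcup_{v\in S}\bar v$ is a DAG set of $\bar G$ of size $s|S| = s\,\delta(G) = \delta(\bar G)$; hence $\bar S$ is a largest DAG set in $\bar G$, as claimed.

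I do not expect a genuine obstacle: the corollary is essentially a repackaging of the proof of Proposition~\ref{prop:opt-MAIS}. The only point requiring a little care is to route the MAIS bound for $\bar G$ through the alphabet $q^k$ via $\ccap_{q^k}(\bar G) \le \ccap(\bar G) \le 1 - \delta(\bar G)/(ns)$, so that the chain of inequalities closes up tightly and forces $\delta(\bar G) = s\,\delta(G)$ exactly rather than merely as an inequality.
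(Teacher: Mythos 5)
Your proof is correct and follows essentially the same route as the paper: lift the largest DAG set of $G$ to get $\delta(\bar G)\ge s\,\delta(G)$, then close the gap by applying the MAIS bound to $\bar G$ and the capacity equality from Propositions~\ref{prop:interleave-capacity} and~\ref{prop:opt-MAIS}. You also correctly read the displayed identity as $\delta(\bar G)=s\,\delta(G)$ (the paper's "$\delta(\bar G)=s\,\delta(\bar G)$" is an evident typo), so there is nothing to add.
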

\begin{proof}
  From the proof of Proposition~\ref{prop:opt-MAIS}, we know that $\bar{S}$ is a DAG in $\bar{G}$ and $\delta(\bar{G}) \ge s\delta(G)$.
  On the other hand, by Propositions~\ref{prop:interleave-capacity} and \ref{prop:opt-MAIS}, we have
  \[
    1 - \frac{\delta(G)}{n} = \ccap_q(G) =  \cR_q(C) = \cR_{q^k} (\bar\cC) = \ccap_{q^k}(\bar{G}) \le 1 - \frac{\delta(\bar{G})}{ns},
  \]
  which implies that $\delta(\bar{G}) \le s\delta(G)$. 
  Hence, $\delta(\bar{G}) = s\delta(G)$, and $\bar{S}$ is a largest DAG set.
\end{proof}
In particular, this claim is true if $S$ is an independent set in $G$.
\begin{corollary}\label{prop:opt-IB}
  Let $\cC$ be a storage code on $G$ with $\cR_q(\cC) = \ccap_q(G)$. 
  If $\ccap_q(G) = 1 - \gamma(G)/n$, then the interleaved code $\bar{\cC}$ achieves the maximum rate on the corresponding graph $\bar{G}$, and $\ccap_{q^k}(\bar{G}) = \ccap(G)$.
  Further, if $S \subset G$ is the largest independent set of $G$, then $\bar{S} = \bigcup_{v \in S} \bar{v}$ is the largest independent set of $\bar{G}$.
\end{corollary}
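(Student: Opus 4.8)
The plan is to reduce the statement to Proposition~\ref{prop:opt-MAIS} and Corollary~\ref{corollary:opt-MAIS} by exploiting the fact that an independent set is a (trivial) DAG set. First I would observe that $\gamma(G)\le\delta(G)$ for every graph $G$, since a vertex set inducing no edges induces in particular an acyclic subgraph. Combining this with the MAIS bound~\eqref{eq:MAIS}, which gives $\ccap_q(G)\le 1-\delta(G)/n$, and with the hypothesis $\ccap_q(G)=1-\gamma(G)/n$, one gets $1-\gamma(G)/n\le 1-\delta(G)/n$, hence $\delta(G)\le\gamma(G)$ and therefore $\delta(G)=\gamma(G)$. Consequently $\ccap_q(G)=1-\delta(G)/n$, so Proposition~\ref{prop:opt-MAIS} applies verbatim and yields $\cR_{q^k}(\bar\cC)=\ccap_{q^k}(\bar G)=\ccap(G)$, which settles the first two assertions.

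For the last assertion, let $S\subset V$ be a largest independent set of $G$, so $|S|=\gamma(G)=\delta(G)$, and set $\bar S=\bigcup_{v\in S}\bar v$. I would first check that $\bar S$ is independent in $\bar G$: two of its vertices $(t,\mu),(t',\mu')$ with $t,t'\in S$ are non-adjacent because either $t=t'$, in which case $\bar t$ is an independent set by the construction of $\bar G$, or $t\ne t'$, in which case $(t,t')\notin E$ since $S$ is independent, so $((t,\mu),(t',\mu'))\notin\bar E$ by the definition of the edge set of $\bar G$ (the color–matrix condition never adds an edge between $\bar t$ and $\bar t'$ unless $(t,t')\in E$). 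Hence $\gamma(\bar G)\ge|\bar S|=s|S|=s\gamma(G)$.

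It remains to bound $\gamma(\bar G)$ from above. Since every independent set is a DAG set, $\gamma(\bar G)\le\delta(\bar G)$; and by Corollary~\ref{corollary:opt-MAIS}, which is applicable because $\cR_q(\cC)=1-\gamma(G)/n=1-\delta(G)/n$, we have $\delta(\bar G)=s\delta(G)=s\gamma(G)$. Therefore $s\gamma(G)\le\gamma(\bar G)\le\delta(\bar G)=s\gamma(G)$, so all these inequalities are equalities, $\gamma(\bar G)=s\gamma(G)=|\bar S|$, and $\bar S$ is a largest independent set of $\bar G$. This argument presents no genuine obstacle; the only point requiring a moment of care is the verification that $\bar S$ carries no internal edges, which is immediate from the two defining conditions of $\bar E$.
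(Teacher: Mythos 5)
Your proof is correct and follows exactly the reduction the paper intends (the paper treats this corollary as an immediate specialization of Proposition~\ref{prop:opt-MAIS} and Corollary~\ref{corollary:opt-MAIS}, flagged only by the remark that the claim ``is true if $S$ is an independent set''). You make explicit the one observation the paper leaves unstated, namely that the hypothesis $\ccap_q(G)=1-\gamma(G)/n$ together with the MAIS bound \eqref{eq:MAIS} and the trivial inequality $\gamma(G)\le\delta(G)$ forces $\gamma(G)=\delta(G)$, and you supply the small extra step needed for the final assertion (checking that $\bar S$ is independent, not merely a DAG set, and sandwiching $\gamma(\bar G)$ between $s\gamma(G)$ and $\delta(\bar G)=s\gamma(G)$); both are sound and welcome.
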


A similar claim holds for the codes that achieve the linear programming bound of
\cite{MazMcgVor2019}.
\begin{proposition} \label{prop:LP-interleaved}
  Let $\cC$ be a storage code on $G = (V,E)$ with $\cR_q(\cC) = \ccap_q(G)$.
  If $\ccap_q(G)$ achieves the LP bound of Theorem~\ref{thm:LP}, then the interleaving code $\bar\cC$ has the maximum possible rate on the corresponding graph $\bar{G} = (\bar{V}, \bar{E})$, and $\ccap_{q^k}(\bar{G}) = \ccap(G)$, where $\bar{G}$ is defined by the interleaving structure.
\end{proposition}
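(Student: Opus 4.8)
The plan is to follow the proof of Proposition~\ref{prop:opt-MAIS}, with the DAG quantity $\delta$ replaced by the optimum of the linear program of Theorem~\ref{thm:LP}. Write $L(G)$ for the value of that bound, normalized so that it reads $\ccap(G)\le L(G)$; since this bounds $\ccap(G)=\sup_{q'}\ccap_{q'}(G)$ from above while $\ccap(G)\ge\ccap_q(G)=\cR_q(\cC)=L(G)$ by hypothesis, we get $\ccap_q(G)=\ccap(G)$. By Proposition~\ref{prop:interleave-capacity}, $\cR_{q^k}(\bar\cC)=\cR_q(\cC)=\ccap(G)$ and $\ccap_{q^k}(\bar G)\ge\ccap(G)$, so it remains to prove $\ccap_{q^k}(\bar G)\le\ccap(G)$. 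For this a short argument suffices that barely uses the form of the LP: $\bar G$ is a spanning subgraph of the graph $\tilde G$ of Remark~\ref{remark:interleaving capacity} (same vertex set, and $\bar E\subseteq\tilde E$ since every edge of $\bar G$ joins copies of $G$-adjacent vertices), and enlarging the (out-)neighborhood of a vertex can only enlarge the set of storage codes, so $\ccap_{q^k}(\bar G)\le\ccap_{q^k}(\tilde G)=\ccap_{q^{ks}}(G)\le\ccap(G)$, using Remark~\ref{remark:interleaving capacity} for the equality. Combined with the lower bound, $\ccap_{q^k}(\bar G)=\ccap(G)$ and $\bar\cC$ is optimal. (This uses the LP hypothesis only through the implication $\ccap_q(G)=\ccap(G)$, and would likewise reprove Proposition~\ref{prop:opt-MAIS} and Corollary~\ref{prop:opt-IB}; I would still keep the $\delta$-computation there since Corollary~\ref{corollary:opt-MAIS} extracts an explicit optimal DAG set from it.)

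To obtain the sharper structural statement in the spirit of Corollary~\ref{corollary:opt-MAIS} — that an optimal solution of the $\bar G$-LP is the ``blow-up'' of an optimal solution of the $G$-LP — I would instead show directly that the $\bar G$-LP optimum equals $s$ times the $G$-LP optimum. The inequality ``$\ge$'' is clear: $\bar\cC$ is an honest storage code on $\bar G$ (Proposition~\ref{prop:interleaved}), so its entropy function (under the uniform distribution on codewords) yields an LP-feasible point with objective $\cR_{q^k}(\bar\cC)=\cR_q(\cC)$. For ``$\le$'', assuming Theorem~\ref{thm:LP} is the natural polymatroidal relaxation (which is what I would verify first), take an optimal feasible point of the $\bar G$-LP, i.e., a monotone submodular $\bar h$ on $2^{\bar V}$ with $\bar h(\emptyset)=0$ satisfying the decoding constraints $\bar h(U\cup\{(t,\mu)\})=\bar h(U)$ for $U\supseteq\bar\cN_{\bar G}((t,\mu))$, and push it to $G$ by $h(W):=\tfrac1s\,\bar h\!\left(\bigcup_{v\in W}\bar v\right)$. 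Monotonicity, submodularity and $h(\emptyset)=0$ are inherited (restriction to the sublattice of blown-up sets, rescaled); the content is the decoding constraint for $h$, where the structure of $\bar G$ enters: for every $\mu$ one has $\bar\cN_{\bar G}((t,\mu))\subseteq\bigcup_{u\in\cN_G(t)}\bar u$, so adding any single copy $(t,\mu)$ to $\bigcup_{u\in\cN_G(t)}\bar u$ leaves $\bar h$ unchanged; adding the $s$ copies one at a time (the conditional term stays $0$ by submodularity) gives $\bar h(\bigcup_{u\in\cN_G(t)}\bar u\cup\bar t)=\bar h(\bigcup_{u\in\cN_G(t)}\bar u)$, which after dividing by $s$ is $h(\cN_G(t)\cup\{t\})=h(\cN_G(t))$. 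Hence $h$ is $G$-LP-feasible with value $h(V)=\tfrac1s\bar h(\bar V)$, giving ``$\le$''.

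The main obstacle is matching this second argument to the precise statement of Theorem~\ref{thm:LP}: I would need its exact feasible region (Shannon-type/polymatroidal inequalities versus a more combinatorial finite LP) in order to confirm that ``restrict to the blow-up sublattice and rescale by $1/s$'' — or, dually, ``spread an optimal certificate for $G$ uniformly over the $s$ copies'' — lands in the feasible region of the $\bar G$-LP. The decoding constraints are the only nonroutine check, and are handled as above via $\bar\cN_{\bar G}((t,\mu))\subseteq\bigcup_{u\in\cN_G(t)}\bar u$ together with the ``peel off copies one at a time'' trick already used for directed cycles in Proposition~\ref{prop:opt-MAIS}; the directed case needs no modification since $\bar\cN^{\mathrm{out}}_{\bar G}((t,\mu))\subseteq\bigcup_{u\in\cN^{\mathrm{out}}_G(t)}\bar u$ still holds. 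If only the capacity equality (and not the structural statement) is wanted, the first argument already gives a complete proof.
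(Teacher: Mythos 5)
Your first argument is correct and it is a genuinely different (and shorter) route than the paper's. The paper lifts the LP certificate directly: given a $\tau$-cover of $G$ by gadgets $g_1,\dots,g_m$, it blows each $g=(S_1,S_2,c_1,c_2)$ up to $\bar g=(\bar S_1,\bar S_2,c_1,c_2)$ with $\bar S_i=\bigcup_{v\in S_i}\bar v$, checks this is a $\tau$-cover of $\bar G$ with $w(\bar g)=s\,w(g)$, and thereby transfers the LP upper bound verbatim to $\bar G$. You instead avoid the LP entirely in the key inequality: since $\bar E\subseteq\tilde E$ (adding edges can only enlarge the set of storage codes) and $\ccap_{q'}(\tilde G)=\ccap_{(q')^s}(G)\le\ccap(G)$ by Remark~\ref{remark:interleaving capacity}, you get $\ccap(\bar G)\le\ccap(G)$ unconditionally; the LP hypothesis is used only to upgrade $\cR_q(\cC)=\ccap_q(G)$ to $\cR_q(\cC)=\ccap(G)$, after which Proposition~\ref{prop:interleave-capacity} closes the gap. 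As you observe, this one argument reproves Propositions~\ref{prop:opt-MAIS} and Corollary~\ref{prop:opt-IB} as well, and it shows the real hypothesis is simply $\cR_q(\cC)=\ccap(G)$. What the paper's gadget-lifting buys in exchange is a structural byproduct parallel to Corollary~\ref{corollary:opt-MAIS}: an explicit optimal LP certificate for $\bar G$ inherited from $G$, which your subgraph argument does not produce.

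Your second, polymatroidal argument is tangential here: the LP bound of Theorem~\ref{thm:LP} is a combinatorial gadget/cover LP rather than the Shannon-type polymatroid relaxation you tentatively assume, so the ``restrict to the blow-up sublattice and rescale'' step does not directly match the paper's feasible region. You correctly flag this uncertainty and note that the first argument already completes the proof of the stated capacity equality, so this does not affect correctness of your proposal.
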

The proof is given in Appendix~\ref{sec:lp-bound}.

Note that generally, for any subgraph $G'\subset \tilde G$ defined on the vertex set $\tilde V,$ there is an obvious relation between the capacities, $\ccap(G') \le \ccap(G).$ Of course, for most subgraphs we do not have a means of constructing good codes; however, for subgraphs of the form $\bar G$ derived from the interleaving construction, we have a way of obtaining optimally sized codes as described above.

\subsection{Some optimal storage codes}
The interleaving construction relies on a good seed code. In this section
we list some known code families that can be used to obtain new optimal storage codes by interleaving, and working out the details of the construction for one of them. Below we use the notation $[n]=\{0,1,\dots,n-1\}$.

\begin{example} \label{ex:lr} 
  Let $l,r$ be positive integers, and $R = \{1 + [r]\} \cup \{-1-[l]\}$, and let $m = \min\{l,r\}$. In Proposition~\ref{prop:1-d-Nd} below we find the capacity of a
  recoverable system $X_R$ defined on the (infinite) graph $\Z_R$.
  Taking a finite subgraph $G = \Z_R \cap [n],$ where $(m+1)|n$,
  we obtain an optimal storage code with $\cR(\cC) = m/(m+1).$
  The code can be obtained from the clique partition construction described above \eqref{eq:clique}, and since 
$\{i(m+1): 0 \le i < n/(m+1)\}$ is the largest DAG set in $G$, we obtain that $\ccap(G) = 1 - \frac{n/(m+1)}{n} = m/(m+1)$. 
\end{example}

Let us apply the interleaving construction for this code. First, 
we color $G$ with colors $\{0, \dots, l+r\}$ such that vertex $t$ is colored 
with color $\tau:=t\,{\text{mod}}\, (l+r+1)$. 
Next, we choose a family of orthogonal partitions $\cM = \{M_0, \dots, M_{r+l}\}$ of shape $(k,s)$ and define $\bar G = (\bar V, \bar E)$ such that $\bar V = \{1,\dots, n\} \times \{1,\dots, s\}$ and $((t,\mu) , (t',\mu')) \in \bar E$ if $(t,t') \in E,$ and column $\mu$ of $M_{\tau}$ intersects column $\mu'$ of $M_{\tau'}$.
Finally, the interleaved code is obtained by the procedure defined before Proposition~\ref{prop:interleaved}, where $\cC$ is the optimal clique partition code described in the previous paragraph.

\begin{example}[Cycles]
Let $C_n$ be an undirected cycle of length $n$. 
If $n$ is even, then the largest independent set of $C_n$ has size $n/2,$ so $\ccap(C_n) \le 1/2$ by ~\eqref{eq:IB}. At the same time, the edge-to-vertex construction yields a storage code $\cC$ of rate $1/2$ over an alphabet $Q$.  If $n$ is odd, then by the LP bound it can be shown that $\ccap(C_n) \le 1/2$. A code of rate $1/2$ can be obtained from the fractional matching construction~\cite{MazMcgVor2019}.
\end{example}

\vspace*{.1in}\begin{example}~ (\cite[Theorem 10]{MazMcgVor2019})
  Let $C_k$ be a cycle of length $k > 3$ and let $B$ be a bipartite graph. Then $\ccap(C_k \boxtimes B) = \frac{1}{2}$, where $\boxtimes$ is the Cartesian product of graphs\footnote{The Cartesian product of two graphs $G_1 = (V_1, E_1)$ and $G_2 = (V_2, E_2)$ is defined by $G = (V,E)$, where $V = V_1 \times V_2$; $((u,u'),(v,v'))\in E$ if and only if $u=v$ and $(u',v') \in E_2$ or $u' = v'$ and $(u,v) \in E_1$.}. The fractional matching construction gives optimal codes.
\end{example}

\vspace*{.1in}\begin{example}~(\cite[Theorem 11]{MazMcgVor2019})
  Let $G$ be a cycle with chords whose endpoints are at least distance $4$ apart on the cycle. We have $\ccap(G) = \frac{1}{2}$, and again, the fractional matching construction gives optimal codes.
\end{example}

In the last two examples, optimality is proved by relying on the linear programming bound.

\subsection{Codes with partial recovery} We say that a storage code $\cC$ over alphabet $Q^k$ has $m$ levels of recovery if for every vertex $v$, by visiting a $1/m$ proportion vertices in $\cN(v)$, we can recover a $1/m$ fraction of the symbols in $v$. Sometimes it may be 
desirable to have codes with multiple levels of recovery, since we do not need to visit all neighbors to recover only a part of the node. 
Consider the interleaved code, for any vertex $(t, \mu) \in \bar{V}$, let $(x_t^{\lambda_1}, \dots, x_t^{\lambda_k})^T$ be the vector stored on it. Suppose that, to 
recover a subset of the coordinates of this vector, it suffices to download information
from a part of the neighbors of $(t, \mu)$, and denote by $\cN(x_t^{\lambda_j})$ the neighbors that determine the values $x^{\lambda_j}_t,j=1,\dots,k$. 
The interleaved construction relies on families of orthogonal partitions, and this
ensures that the sets  $\cN(x_t^{\lambda_j}), j =1,\dots, k$ are disjoint and are of the same size.
Therefore, our interleaving structure yields a storage code with $k$ levels of recovery, and the rate is optimal if the seed code is optimal with respect to the MAIS bound or the LP bound.

\subsection{High-rate storage codes on triangle-free graphs}
It is easy to construct high-rate storage codes on graphs with many cliques, but constructing codes of rate $> 1/2$ on triangle-free graphs is a more difficult problem. 
Recently, \cite{BargZemor2021} constructed infinite families of binary linear storage codes of rate approaching $3/4$.
Following up on this research, two concurrent papers, \cite{barg2} and \cite{HuangXiang2023}, constructed families of binary linear storage codes of rate asymptotically approaching one.
These constructions are based on Cayley graphs obtained as coset graphs of binary linear codes, and they are restricted to the binary alphabet. Here we wish to remark that the code families can be extended to alphabets of size $2^k$ by interleaving.

Let $\cC$ be a binary linear storage code on a graph $G$, and let $\bar \cC$ be the interleaved code on the corresponding graph $\bar G$.
It is easy to see that if $G$ is triangle-free then so is $\bar G,$ and by Proposition~\ref{prop:interleave-capacity}, we have $\cR_{2^k}(\bar{\cC}) = \cR_2(\cC)$. Therefore, interleaving the codes from \cite{barg2} or \cite{HuangXiang2023} we obtain new families of storage codes with asymptotically unit rate over the alphabet $\Z_2^k$.

\section{Recoverable Systems}
\subsection{Recoverable systems: Definitions}
The following definition formalizes the concept of storage codes discussed above
for the case of $V=\Z^d, d\ge 1$. 

\begin{definition}\label{def:main} {\sc(Recoverable systems}) Let $R\subset\Z\backslash\{0\}$ be a finite subset of integers ordered in a 
natural way. An {\em $R$-recoverable system} $X=X_R$ on $\Z$ is a set of bi-infinite sequences
$x\in \cQ^\Z$ such that for any $i\in\Z$ there is a function $f_i:\cQ^{|R|}\to\cQ$ such that  
    $
    x_i=f_i(x_{i+R})
    $
for any $x\in X.$  

More generally, let $R\subset\Z^d\backslash\{0\}$ be a finite subset together with some ordering. 
An {\em $R$-recoverable system} $X=X_R$ on $\Z^d$ is a set of letter assignments ($d$-dimensional words) 
$x$ such that for any vertex $v\in\Z^d$ there is a function $f_v:\cQ^{|R|}\to\cQ$ such that     
$
x_v=f_v(x_{v+R})
$
for any $x\in X.$ Here $v+R=\{v+z : z\in R\}$ is a translation of $v$ by the recovery neighborhood (set) $R$.
\end{definition}
This definition is close to the definition of storage codes \cite{Mazumdar2015} which also allows the dependency of the recovery function on $v\in V$ and also (implicitly) assumes an ordering of the vertices in $V$. Here we wish to note an important point: by defining the recovery set $R$ we effectively introduce an edge (possibly, a directed one) between $v$ and every vertex in $v+R$. Thus, we speak of 
$R$-recoverable systems on $\Z^d$ with edges defined by $R$, and denote the graphs that emerge in this way by $\Z^d_R$. If $d=1$, we omit it from the notation. 

\begin{example}\label{example:lr} Take $V=\Z$ and $R=\{-1,1\}$. Assume that $x_{2i+1}=x_{2i}$ for all $i$. Then we have $f_i(x_{i-1},x_{i+1})=x_{i+(-1)^i}$, that is, $f_i(x_{i-1},x_{i+1})=x_{i-1}$ or $x_{i+1}$ depending on whether $i$ is odd or even.
\end{example}

\begin{definition} {\sc (Capacity, one-dimensional)} Let $X=X_R$ be a one-dimensional $R$-recoverable system. For $n\geq 0$, denote by $B_n(X)$ the restriction of the words in $X$ to the set $[n]:=\{0,1,\dots,n-1\}$, i.e., $B_n(X) = \{x_{[n]} : x\in X\}$.
The {\em rate} of $X$ is defined as 
\begin{equation}\label{eq:cap}
\cR(X)=\cR_q(X):=\limsup_{n\to\infty} \frac{1}{n}\log_q |B_n(X)|.
\end{equation}
The dependence on $q$ will be omitted when it plays no role. 
\end{definition}

Given a recovery set $R\subset \Z\backslash\{0\}$, we are interested in its largest attainable rate  
\begin{equation}\label{eq:capR}
\ccap(\Z_R):=\sup_{X_R\subseteq\cQ^{\Z}}\cR(X_R)
\end{equation}
of $R$-recoverable systems on $\Z$, calling it the capacity of the graph $\Z_R$.

\begin{example}
Consider the $R$-recoverable system from Example~\ref{example:lr}. Since every odd symbol is uniquely determined by the two adjacent symbols in even positions,
we have $B_n(X)\leq q^{\lceil n/2\rceil},$ and since every symbol can occur in the even positions, we have $B_n(X)\geq q^{\lfloor n/2\rfloor}$. 
Thus, $\cR(X)=\frac{1}{2}$.
\end{example}

Similarly, we define the capacity of multidimensional systems.
\begin{definition}\label{def:CapD} {\sc (Capacity, $d$ dimensions)} 
Let $R\subset\Z^d\setminus \{0\}$ be a finite set (the recovery set) and let $\cB_{[n]^d}(X)=[n]^d$ be the $d$-dimensional cube.
Given a recoverable system $X=X_R$, define its rate as
  $$
  \cR_q(X)= \limsupup{n} \frac{1}{n^d}\log_q |\cB_{[n]^d}(X)|,
  $$
and let $\ccap(\Z^d_R)=\sup_{X_R\subseteq Q^{\Z^d}} \cR(X_R)$ be the largest rate associated with the set $R$.
\end{definition}

At this point one may wonder if the capacity value depends on our choice of cubes in this definition. For the invariant case we prove that the answer is no, although a rigorous treatment is somewhat technical, and we postpone it till Sec.\ref{sec:Zd}.

\subsection{Recoverable systems on \texorpdfstring{$\Z$}{}}
Bounds and constructions for finite graphs can be extended to infinite graphs such as $\Z^d$. 
In this section, we state a few easy results that exemplify the above methods for the case of $d=1$. Analogous claims in higher dimensions are discussed in the next subsection.

Let $R\subset \Z\setminus\{0\}$ be a finite set and consider $R$-recoverable systems on the infinite graph $\Z_R$. For all $n>0$ let us consider the subgraph $\Z_{R,n}:=\Z_R\cap[n]$. 
To bound above the capacity $\ccap(\Z_R)$ we first find $\ccap(\Z_{R,n})$ and argue that these quantities are related.
This follows from the observation that we can disregard the boundary effects because the proportion of points whose recovery regions do not fit in $[n]$ vanishes as $n$ increases. Thus, we can place constant values on these
points with a negligible effect on the rate of the code.
\begin{lemma}\label{lem:fin_2_inf}
Let $R\subset \Z\setminus\{0\}$ be a finite set and $(c_n)_{n=1}^{\infty}$ an infinite sequence of real numbers such that $\ccap(\Z_{R,n})\leq c_n$ for all $n\ge1$. Then, $\ccap(\Z_R)\leq \limsup_{n\to\infty} c_n$. 
\end{lemma}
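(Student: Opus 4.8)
The plan is to show that any $R$-recoverable system $X$ on $\Z$ can be ``windowed'' into a system on $[n]$ whose rate is essentially the rate of $X$, up to a vanishing boundary correction. Fix $n$ and let $\ell = \max_{r\in R}|r|$, so that for any $i$ with $\ell \le i < n-\ell$ the recovery set $i+R$ is entirely contained in $[n]$. First I would take an arbitrary $R$-recoverable system $X\subseteq\cQ^\Z$ with $\cR_q(X) > \ccap(\Z_R) - \varepsilon$ (or just with rate arbitrarily close to the sup), and for each $m$ consider the restriction $B_m(X)$ of its words to $[m]$.

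The key step is a ``block concatenation'' argument. For a given large $m$, write $m = nt$ (absorbing the remainder harmlessly). Partition $[m]$ into $t$ consecutive windows of length $n$. On each window, the restricted words of $X$, with the constant-valued buffer zones of width $\ell$ at both ends, satisfy the recovery constraint for every interior vertex, hence they form (a subset of) a valid $R$-recoverable configuration on $\Z_{R,n}$ once we extend periodically or by constants outside $[n]$; more directly, the number of distinct length-$n$ patterns appearing in $B_m(X)$, when restricted to the interior $[\ell, n-\ell)$, is at most $|B_n(\Z_{R,n})| \le q^{c_n n}$ by the hypothesis $\ccap(\Z_{R,n}) \le c_n$, where here I interpret $\ccap(\Z_{R,n})$ via the largest such pattern count. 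Since the $2\ell$ boundary coordinates of each window contribute at most a factor $q^{2\ell}$ and there are $t = m/n$ windows, we get
\[
|B_m(X)| \;\le\; \bigl(q^{c_n n}\cdot q^{2\ell}\bigr)^{m/n} \;=\; q^{\,m(c_n + 2\ell/n)}.
\]
Taking $\frac1m\log_q$ of both sides and letting $m\to\infty$ yields $\cR_q(X) \le c_n + 2\ell/n$. Since this holds for every $n$, we may take $\liminf$ (or $\limsup$) on the right-hand side to obtain $\cR_q(X) \le \limsup_{n\to\infty} c_n$, because $2\ell/n \to 0$. Finally, taking the supremum over all systems $X$ gives $\ccap(\Z_R) \le \limsup_{n\to\infty} c_n$.

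The main obstacle I anticipate is being careful about what ``$\ccap(\Z_{R,n}) \le c_n$'' means combinatorially and relating it cleanly to the count of length-$n$ subpatterns of an arbitrary infinite configuration: the definition of $\ccap(\Z_{R,n})$ is in terms of codes (collections of assignments) on the finite graph, whereas here we face sliding windows cut out of a single bi-infinite word. The reconciling observation is that the set of all length-$n$ patterns that occur in \emph{some} word of $X$, after freezing the $\ell$ outermost coordinates on each side to constants, is itself a storage code on $\Z_{R,n}$ (every interior vertex still recovers from its in-window neighborhood), so its size is bounded by $q^{c_n n}$; the outer coordinates are then paid for by the crude $q^{2\ell}$ factor per window. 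One should also double-check that the ``$\limsup$'' in the rate definition \eqref{eq:cap} interacts correctly with the subsequence $m = nt$ — this is routine since a $\limsup$ along a subsequence lower-bounds the full $\limsup$, and we want an upper bound on the rate, so it suffices to exhibit the bound along $m = nt \to \infty$ and note that arbitrary $m$ differ by at most $n$ coordinates, costing another negligible $q^{n}$ factor inside the exponent's normalization.
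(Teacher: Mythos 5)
Your core observation---that for a window of length $n$, after freezing the $O(1)$ coordinates whose recovery neighborhoods spill outside the window, the set of consistent length-$n$ patterns is a storage code on $\Z_{R,n}$, hence of size at most $q^{c_n n}$, so the full window count is at most $q^{2\ell}\cdot q^{c_n n}$---is exactly the heart of the paper's proof. Where you diverge is in what you do with this bound. The paper applies it once, directly to $B_n(X)$: with $A_n=\{m\in[n]:m+R\nsubseteq[n]\}$ and $|A_n|\le a$, summing over the at most $q^a$ boundary fixings gives $\frac1n\log_q|B_n(X)|\le c_n+a/n$, and then $\cR_q(X)=\limsup_n\frac1n\log_q|B_n(X)|\le\limsup_n c_n$ is read off immediately from the definition of the rate. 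You instead insert a block-concatenation step: for each fixed $n$ you tile $[m]$ by $\lfloor m/n\rfloor$ windows, multiply the per-window bounds, send $m\to\infty$ to get $\cR_q(X)\le c_n+2\ell/n$ for that $n$, and only then let $n\to\infty$. Both routes are correct, but the block step is unnecessary for the stated conclusion, since the rate definition already ties the window size to the outer index $n$; decoupling $m$ from $n$ buys you the strictly stronger (but unasked-for) $\cR_q(X)\le\liminf_n c_n$, at the cost of the $m=nt$ bookkeeping and the extra $q^n$ slack for non-multiples that you correctly flag as loose ends. The paper's one-shot version is the shorter path to the same place.
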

\begin{proof} 
For a given $n\ge 1$, let $A_n$ be the set of vertices in $\Z_{R,n}$ in which their neighborhood in the graph $\Z_R$ does not fully belong to the graph $\Z_{R,n}$, i.e., $A_n =\{m \in [n] : m+R \nsubseteq [n]\}$. 
Since $|R|$ is a constant independent of $n$, starting from some value of $n$ we have $|A_n|\le a$ for some absolute constant $a$.

Let $X$ be an $R$-recoverable system and recall that $B_n(X)$ is its restriction to $[n]$. For a given vector $z = (x_i)_{i\in A_n}\in \cQ^{|A_n|}$, let $B_n^{z}(X)$ be the set of words in $B_n(X)$ which match the values of $z$ over the positions in $A_n$. The code $B_n^{z}(X)$ is a storage code over the graph $\Z_{R,n}$ and hence $$n^{-1} \cdot \log_q|B_n^{z}(X)| \leq \ccap(\Z_{R,n})\leq c_n.$$ Since the inequality holds for all $z \in \cQ^{|A_n|}$ and $|A_n|\leq a$, we conclude that $$n^{-1} \cdot\log_q|B_n(X)| \leq c_n+a/n,$$
which verifies the statement of the lemma.
\end{proof} 

According to Lemma~\ref{lem:fin_2_inf}, it is possible to calculate the capacity of several recovery sets $R$. 
\begin{proposition}\label{prop:1-d-Nd}
Let $l,r$ be positive integers, let $R=\{1+[r]\}\cup\{-1-[l]\},$ and let $m=\min\mathset{l,r}$. Then,  
    $$
    \ccap(\Z_R)=\frac{m}{m+1}.
    $$
\end{proposition}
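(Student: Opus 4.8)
The plan is to use Lemma~\ref{lem:fin_2_inf} to reduce the claim to a statement about the finite graphs $\Z_{R,n} = \Z_R \cap [n]$, and to pin down $\ccap(\Z_{R,n})$ up to $o(1)$ using the clique-partition lower bound~\eqref{eq:clique} together with the MAIS upper bound~\eqref{eq:MAIS} (Theorem~\ref{th:DAG_set}). Write $m = \min\{l,r\}$ throughout, and without loss of generality assume $m = r \le l$ (the other case is symmetric, with roles of left and right reversed). The key structural observation is that the recovery set $R = \{1+[r]\} \cup \{-1-[l]\}$ contains the full window $\{1,2,\dots,m\}$ to the right of every vertex; dually, it contains $\{-1,-2,\dots,-m\}$ to the left. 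So any $m+1$ consecutive vertices $\{j, j+1, \dots, j+m\}$ form a set on which the induced directed subgraph has each earlier vertex pointing to all later ones within the block — in particular it is a transitive tournament on that block, hence a clique in the underlying recovery structure.

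\textbf{Lower bound.} First I would establish $\cR(X_R) \ge m/(m+1)$ by constructing an explicit recoverable system. Partition $\Z$ into consecutive blocks $I_t = \{t(m+1), \dots, t(m+1)+m\}$ of length $m+1$, and on each block impose a single parity-check constraint: the last coordinate $x_{t(m+1)+m}$ is the sum (over $\cQ = \ff_q$) of the first $m$ coordinates of the block. I need to check this is genuinely $R$-recoverable, i.e. that \emph{every} vertex $i$ can recover its value from $x_{i+R}$, not just the designated "parity" vertex. A vertex $i$ in the interior of a block recovers its value from the other $m$ vertices of its block; those all lie within distance $m$ of $i$ on one side or the other, hence in $i+R$ since $R \supseteq \{1,\dots,m\} \cup \{-1,\dots,-m\}$. (Here is where $m \le l$ and $m \le r$ are both used.) The rate of this system is $\limsup_n \frac1n \log_q |B_n(X)| = m/(m+1)$, since each block of $m+1$ symbols carries $m$ free symbols. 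Combined with~\eqref{eq:capR} this gives $\ccap(\Z_R) \ge m/(m+1)$. Alternatively, as in Example~\ref{ex:lr}, this is exactly the clique-partition code on $\Z_{R,n}$ for $(m+1) \mid n$.

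\textbf{Upper bound.} For the matching upper bound I would work with the finite graph $G_n := \Z_{R,n}$ and exhibit a large DAG set, then invoke~\eqref{eq:MAIS} and Lemma~\ref{lem:fin_2_inf}. Claim: the set $S = \{i \in [n] : i \equiv m \pmod{m+1}\}$ — one vertex per block, namely the parity vertices — is a DAG set in $G_n$. Indeed, the arcs of $G_n$ go from $v$ to $v + z$ for $z \in R$; two vertices of $S$ differ by a nonzero multiple of $m+1$, so $|z| \ge m+1$, but every element of $R$ has absolute value at most $\max\{r, l+1\}$... here I must be slightly careful: $R$ extends to distance $l+1$ on the left, which may well exceed $m+1$, so $S$ is \emph{not} automatically an independent set. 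The right fix is to order $S$ and argue acyclicity directly: within any potential directed cycle in the induced subgraph on $S$, follow the arcs — an arc from $u$ to $u+z$ with $z > 0$ increases the coordinate, an arc with $z < 0$ decreases it. This alone doesn't forbid cycles. So instead I would choose $S$ more cleverly, or bound $\delta(G_n)$ differently: the cleanest route is to show $\delta(G_n) \le n/(m+1) + O(1)$ by a counting/packing argument — any DAG set, when restricted to a window of $m+1$ consecutive vertices, can contain at most... no, that's also not immediate since a DAG can contain a long transitive tournament.

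The honest resolution, and the step I expect to be the genuine obstacle, is this: a DAG set $S$ cannot contain any $m+1$ consecutive integers, because those $m+1$ vertices, whichever way they are ordered, contain a $2$-cycle — wait, the graph is directed, $v \to v+1$ is an arc (since $1 \in R$) but is $v+1 \to v$? Only if $-1 \in R$, which holds since $l \ge 1$. So for any two consecutive vertices $v, v+1$ both in $[n]$ (away from the boundary), both arcs $v \to v+1$ and $v+1 \to v$ are present, giving a $2$-cycle. Hence \emph{a DAG set contains no two consecutive vertices}, so $\delta(G_n) \le \lceil n/2 \rceil$ — which only gives $\ccap \le 1/2$, too weak when $m \ge 2$. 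The actual argument must use more of $R$: since $\{1,\dots,m\} \cup \{-1,\dots,-m\} \subseteq R$, \emph{every pair} of vertices within distance $m$ of each other spans a $2$-cycle, so a DAG set has no two elements within distance $m$, i.e. consecutive elements of $S$ are $\ge m+1$ apart, giving $|S| \le n/(m+1) + 1$. This is the needed bound $\delta(G_n) \le n/(m+1) + O(1)$. Then~\eqref{eq:MAIS} gives $\ccap(\Z_{R,n}) \le 1 - \frac{\delta(G_n)}{n} \le 1 - \frac{1}{m+1} + O(1/n) = \frac{m}{m+1} + O(1/n)$, and Lemma~\ref{lem:fin_2_inf} with $c_n = \frac{m}{m+1} + O(1/n)$ yields $\ccap(\Z_R) \le \frac{m}{m+1}$. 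Together with the lower bound, this completes the proof. The one place demanding care is confirming that the symmetric "$2$-cycle within distance $m$" property uses exactly that $m = \min\{l,r\}$ (so both $\{1,\dots,m\} \subseteq \{1+[r]\}$ and $\{-1,\dots,-m\} \subseteq \{-1-[l]\}$ hold), and handling the $O(1)$ boundary terms cleanly so they are absorbed by the $\limsup$.
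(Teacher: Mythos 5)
Your lower bound (clique-partition/parity-check) matches the paper and is fine. The upper bound, however, has a direction error that you should notice: the MAIS bound $\ccap(G)\le 1-\delta(G)/n$ requires a \emph{lower} bound on $\delta(G_n)$ (i.e., exhibiting a large DAG set) to push $\ccap$ down, not an upper bound. Your ``$2$-cycle within distance $m$'' argument proves $\delta(G_n)\le n/(m+1)+O(1)$, which is the wrong inequality and does not feed into~\eqref{eq:MAIS}: from $\delta\le n/(m+1)+O(1)$ the deduction ``$1-\delta(G_n)/n\le 1-\tfrac1{m+1}+O(1/n)$'' is unjustified (it would need $\delta(G_n)\ge n/(m+1)-O(1)$). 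So as written, the upper bound does not close.

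The irony is that you abandoned the correct argument one step too early. Your candidate $S=\{i\in[n]:i\equiv m \pmod{m+1}\}$ (the paper uses the equivalent $\{k(m+1)\}\cap[n]$) \emph{is} a DAG set, and the acyclicity you dismissed as ``this alone doesn't forbid cycles'' does in fact forbid them: consecutive elements of $S$ are exactly $m+1=r+1$ apart, while every right-pointing arc of $\Z_R$ has length at most $r$, so the induced subgraph on $S$ contains \emph{no} right-pointing arcs at all. Every arc within $S$ strictly decreases the coordinate, so there can be no directed cycle. (The only arcs you need to worry about are the left-pointing ones of length up to $l$, and those are all consistent with the reverse natural ordering of $S$.) This gives $\delta(G_n)\ge |S|\ge n/(m+1)-1$, which plugged into~\eqref{eq:MAIS} and Lemma~\ref{lem:fin_2_inf} yields $\ccap(\Z_R)\le m/(m+1)$ as desired — exactly the paper's proof. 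Also a small slip: $R$ extends to $-l$, not $-(l+1)$, so ``absolute value at most $\max\{r,l+1\}$'' should read $\max\{r,l\}$; this doesn't affect the substance.
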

\begin{proof} 
Assume without loss of generality that $r\leq l$, so $m=r$, and let $\Z_R$ be the graph describing the $R$-recoverable system. The upper bound follows by noting that for all $n>1$ the set of vertices $\{k(r+1):k\in\Z\}\cap[n]$ is a DAG set in the graph $\Z_{R,n}$ and thus 
\begin{align*}
\ccap(\Z_{R,n}) & \leq 1 - \frac{\delta(\Z_{R,n})}{n} \leq 1 - \frac{\lceil \frac{n}{r+1}\rceil}{n}\leq 
\frac r{r+1}.
\end{align*}
Together with Lemma~\ref{lem:fin_2_inf} this implies that $\ccap(\Z_R)\leq r/(r+1)$.

To show the reverse inequality we use the clique partition construction (\ref{eq:clique}). Start with partitioning $\Z$ into segments of length $r+1$, setting $\Z=\bigcup_{k\in\Z}\,k\cdot[r+1].$  We aim to construct a code in which every vertex in the segment can be recovered from the other vertices in it. In other words, $\Z_R$ is a disjoint union of $(r+1)$-cliques, and the construction described above before \eqref{eq:clique} yields a code of rate $r/(r+1).$
\end{proof}

In the case of {\em shift invariant} recoverable systems the upper bound of this proposition
was derived in \cite{ElishcoBarg2020}; however, the authors of  \cite{ElishcoBarg2020} 
stopped short of finding a matching construction under this assumption. The challenge in the shift invariant case is to find a recoverable system with the \emph{same} recovery function for all symbols. 

The next claim concerns recoverable systems in which the recovery functions use only two symbols for the recovery process, but those symbols are not 
necessarily adjacent to the symbol to be recovered.
\begin{proposition}\label{prop:capR}
Let $l,r$ be positive integers and let $R=\mathset{-l,r}$. Then for any $q\ge 2$ 
\[\ccap(\Z_R)=\frac{\gcd(l,r)}{l+r}.\]
\end{proposition}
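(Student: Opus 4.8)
The plan is to reduce the problem to a finite-graph computation via Lemma~\ref{lem:fin_2_inf}, then analyze the structure of the graph $\Z_R$ with $R=\{-l,r\}$. The key observation is that adding an edge from $v$ to $v-l$ and from $v$ to $v+r$ makes $\Z_R$ decompose into disjoint isomorphic pieces: the "jump structure" of steps $\pm l, \pm r$ connects $v$ and $v'$ iff $v \equiv v' \pmod{g}$ where $g := \gcd(l,r)$. Indeed, within a fixed residue class mod $g$ we can reach any other element of the same class using integer combinations of $l$ and $r$, so $\Z_R$ is the disjoint union of $g$ copies of the graph on a single residue class. Rescaling by $g$, each such copy is isomorphic to $\Z_{R'}$ with $R' = \{-l/g, r/g\}$ and $\gcd(l/g, r/g) = 1$. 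Hence it suffices to prove $\ccap(\Z_R) = 1/(l+r)$ in the coprime case and then observe that $g$ parallel copies don't change the per-vertex rate, giving $\ccap(\Z_R)=g/(l+r)$ in general.

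**For the coprime case** $\gcd(l,r)=1$, I would establish matching bounds. For the \emph{upper bound}, I claim that a set of the form $\{k(l+r) : k \in \Z\} \cap [n]$ (or a suitable shift/variant) is a DAG set in $\Z_{R,n}$ of size $\approx n/(l+r)$: one must check that, when $\gcd(l,r)=1$, the induced subgraph on the multiples of $l+r$ contains no directed cycle — intuitively because following arcs $+r$ and $-l$ and staying on multiples of $l+r$ cannot return to the start without net displacement zero, which forces equal numbers of each step and hence a trivial cycle. Actually the cleaner route for the upper bound is the MAIS/independence bound combined with a direct counting argument: segment $\Z$ into blocks of length $l+r$ and argue that within each block of length $l+r$ the recovery constraints determine the values of $l+r-1$ symbols from one, so $|B_n(X)| \le q^{\lceil n/(l+r)\rceil}$, giving $\ccap(\Z_{R,n}) \le 1 - (1 - 1/(l+r))$ up to boundary terms, then apply Lemma~\ref{lem:fin_2_inf}. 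For the \emph{lower bound}, I would use a clique-partition / explicit construction: when $\gcd(l,r)=1$, the graph $\Z_R$ contains a clique on each block $\{0,1,\dots,l+r-1\}$ translated by multiples of $l+r$ — because within such a block every pair of vertices at distance $l$ or $r$ (mod $l+r$, using the wraparound structure) is joined — wait, one must be careful since $\Z_R$ is not circular. The honest construction: place a single parity symbol per block of length $l+r$, distributing it so that every vertex in the block is a function of the others; this requires verifying that the recovery set $R$ restricted appropriately still lets each symbol see enough of its block, which uses coprimality.

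**The main obstacle** I anticipate is the lower-bound construction: unlike the consecutive-interval case of Proposition~\ref{prop:1-d-Nd}, here the recovery jumps $-l$ and $r$ do not land inside a short interval, so the naive "one clique per block" argument does not literally work, and one genuinely needs the coprimality of $l$ and $r$ to thread the recovery dependencies through a fundamental domain of size $l+r$. I expect the right picture is that the quotient graph $\Z_R / (l+r)\Z$ — a single vertex with a self-loop of "weight" two jumps — or equivalently the Cayley-graph structure of $\Z/(l+r)\Z$ with connection set $\{-l, r\} \equiv \{r, -l\} \pmod{l+r}$ (note $r \equiv -l \pmod{l+r}$, so both jumps coincide mod $l+r$!) reveals that $\Z_R$ is, after collapsing, a union of directed cycles of length $(l+r)/\gcd(l+r, \cdot)$; tracing this carefully should yield both the DAG-set size for the upper bound and a clique/cycle partition for the lower bound. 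Once the coprime case is pinned down, the general statement follows immediately by the parallel-copies reduction, since disjoint union of graphs with the same per-vertex rate has that same capacity, and $g$ copies over $\Z$ rescale the density by exactly $g$.
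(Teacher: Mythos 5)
Your residue-class decomposition is a genuinely different (and appealing) reduction that the paper does not use: the graph $\Z_R$ does split into $g=\gcd(l,r)$ disjoint copies of $\Z_{R'}$ with $R'=\{-l/g,r/g\}$, and an argument that disjoint isomorphic copies preserve per-vertex rate would carry you from the coprime case to the general one. The problem is that the coprime case — which is where all the work lives — is not actually resolved in your proposal, and you flag this yourself.

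Two concrete gaps. \emph{Upper bound:} the set $\{k(l+r)\}\cap[n]$ has size about $n/(l+r)$, so via Theorem~\ref{th:DAG_set} it only gives $\ccap\le 1-1/(l+r)$, which is the wrong direction and far too weak. You need a DAG set of size about $n\bigl(1-1/(l+r)\bigr)$, i.e.\ the \emph{complement} $[n]\setminus\{k(l+r)\}$, and the substance of the proof is showing this is a DAG set. Your ``within each block the constraints determine $l+r-1$ from one'' is too local to be a proof: the arcs $i\to i-l$ and $i\to i+r$ jump between blocks, so one must chase the dependencies globally. The paper does this by an explicit iterative unwinding (at step $t$, positions $km+tl+j$, $j\in[d]$, become determined; coprimality makes $\{tl\bmod m\}$ sweep through all residues). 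Your circulant remark that $r\equiv -l\pmod{l+r}$ actually points at a slick alternative — any directed cycle in $\Z_R$ has net displacement zero, hence length a multiple of $(l+r)/d$, and its vertices sweep all residues of its class mod $m$, so deleting one residue per class kills all cycles — but you do not make this connection or prove it. \emph{Lower bound:} you correctly observe that a block of length $l+r$ is not a clique (each vertex has exactly one neighbor inside the block), which is why the naive clique-partition argument fails, but you then leave ``distribute a parity so every vertex is a function of the others'' as an unverified hope; in fact a global parity constraint of the type you gesture at does \emph{not} make the information positions locally recoverable from $\{x_{i-l},x_{i+r}\}$. The construction that does work is a repetition code on each set of $(l+r)/d$ consecutive symbols within a residue class mod $d$ (the paper illustrates this with length-5 repetition codes for $R=\{-6,4\}$): for every $i$, exactly one of $i-l$, $i+r$ lands in the same repetition block, so the recovery function is a projection. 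Without both of these pieces the proof is not complete, and the reduction, while nice, does not replace them.
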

\begin{proof}[Proof] 
We use a similar technique to the one used in the previous proof. First, we present an upper bound on the capacity. Denote by $d:=\gcd (l,r)$, $m=l+r$, and assume without loss of generality that $l\leq r$. For a sequence $x\in \cB_n(X)$ for $n$ large enough, we show how we can leave only the symbols in positions $i\in A:=\mathset{km+j ~:~ k\in \N,\; j\in [d]}$ since all the other symbols can be uniquely determined by the symbols in the positions of $A$. 
Notice that $x_l,\dots,x_{l+d-1}$ can be recovered at first since $x_{[d]},x_{m+[d]}\in A$. As a matter of fact, this can be done for every $m$-block in $x$, 
i.e., $x_{km+l+[d]}$ can be recovered for every $k\in [n/m]$. 
At the next step, it is possible to recover all the symbols $x_{2l+[d]}$ since during the previous 
step we recovered the symbols in positions $x_{l+[d]}$ and in $x_{2l+r+[d]}$. Again, this can be done in every $m$-block so it is possible to recover 
the symbols $x_i$ for $i\in\mathset{km+2l+[d] ~:~ k\in\N}$. We continue in the same way such that at the $t$-th step, we recover the set $x_{km+tl+[d]}$ for $k\in\N$. We are only left to show that the symbols in $x_d,x_{d+1},\dots, x_{l-1}$ are recovered. Since $d=\gcd(l,r)$, there exists $t_1$ such that $t_1 l=k_1 m+d$ for some $k_1$ which implies that $x_{km+d+[l]}$ are recovered. 

The lower bound is given by the following encoding process. 
The information symbols are stored in positions $i\in\mathset{km+[d] ~:~ k\in \N}$. Then the parity symbols are added in the same order as above. 
\end{proof}

\begin{example}
We demonstrate the encoding process for a recoverable system $X_R$ with $R=\mathset{-l,r}$ over a three-letter alphabet. Take $l=6, r=4$, then $\gcd(4,6)=2$.
We show the encoding process for positions $10,11,\dots,19,$ and note that similar steps are made 
for every other $(l+r)$-block. First, we put data symbols in positions $10k,11k$ for all $k\in \Z.$
The other symbols are chosen so as to satisfy the parity $x_i+x_{i-l}+x_{i+r}=0.$
To illustrate this, find $x_{14}=-(x_{10}+x_{20})$ and $x_{15}=-(x_{11}+x_{21}),$ then find
$(x_{18},x_{19})$ from $(x_{14},x_{15};x_{24},x_{25}),$ etc.
\end{example}

\begin{example}
For the same set $R=\{-6,4\}$, put a codeword of the length-5 repetition code on each of the mutually disjoint sets of consecutive symbols in the even positions and in the odd positions. For instance, we can take those sets to be $\{10k+2j\}$ and $\{10k+2j+1\}$ with $0\leq j \leq 4, k\in\Z$. Then clearly for every $i\in\Z$, the symbol $x_i$ is a part of the codeword of the repetition code that includes either position $i-6$ or position $i+4,$ and we can recover $x_i$ by accessing one of those positions as appropriate.
\end{example}

Let $B\subset \Z_+$ be a finite or infinite set. We call a set $A\in \Z$ \emph{$B$-avoiding} if its difference set $|A-A|$ is disjoint from $B$, i.e., $|b_1-b_2|\notin B$ for all $b_1,b_2\in A$. Let $a_n$ be the size of the largest set $A_n\subseteq [n]$ that is $B$-avoiding and define $\cR(B) = \limsup_{n\to\infty}a_n/n$. 
This quantity has been extensively studied in the literature, initially with $B$ being the set of all whole squares~\cite{S78I} and later values of other polynomials; see, e.g., \cite{Rice2019}.

\begin{proposition}\label{prop:relative}
Let $0<r_1<r_2<\cdots<r_s$ and $0<l_1<l_2<\cdots <l_t$ be positive integers and $R = \{-l_t,\ldots,-l_2,-l_1,r_1,r_2,\ldots,r_s\}$. Then, $$\ccap(\Z_R)\leq 1- \max\{\cR(\{l_1,\ldots,l_t\}),\cR(\{r_1,\ldots,r_s\})\}.$$\end{proposition}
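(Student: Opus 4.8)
The plan is to control $\ccap(\Z_R)$ through its finite truncations $\Z_{R,n}:=\Z_R\cap[n]$ by means of Lemma~\ref{lem:fin_2_inf}, and to bound $\ccap(\Z_{R,n})$ from above by exhibiting a large DAG set and invoking the MAIS bound of Theorem~\ref{th:DAG_set}. The whole point is that the DAG set I will use is much larger than any independent set of $\Z_{R,n}$; this is exactly what lets the estimate see $\max\{\cR(L),\cR(\Gamma)\}$ rather than the generally smaller quantity $\cR(L\cup\Gamma)$ that the naive independent-set bound~\eqref{eq:IB} would produce. Throughout I write $L=\{l_1,\dots,l_t\}$ and $\Gamma=\{r_1,\dots,r_s\}$, and for a finite $B\subset\Z_{>0}$ I let $a_n^B$ denote the size of a largest $B$-avoiding subset of $[n]$, so that $\cR(B)=\limsup_{n\to\infty}a_n^B/n$.

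The key step is the claim that \emph{every $\Gamma$-avoiding subset $A\subseteq[n]$ is a DAG set of $\Z_{R,n}$} (and symmetrically every $L$-avoiding subset is a DAG set). Indeed, an arc of the subgraph induced on $A$ is a pair $a\to a'$ with $a'-a\in R=\{-l_t,\dots,-l_1,r_1,\dots,r_s\}$; if $A$ avoids $\Gamma$ then $a'-a$ is none of the $r_i$, hence $a-a'\in L$ and in particular $a'<a$. Thus the index strictly decreases along every directed walk inside $A$, so there is no directed cycle and listing $A$ in decreasing order of index is a topological order. The same argument with the inequality reversed shows that an $L$-avoiding set is a DAG set, now with topological order given by increasing index. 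Therefore $\delta(\Z_{R,n})\ge\max\{a_n^L,a_n^\Gamma\}$, and Theorem~\ref{th:DAG_set} yields both $\ccap(\Z_{R,n})\le 1-a_n^\Gamma/n$ and $\ccap(\Z_{R,n})\le 1-a_n^L/n$; feeding each inequality separately into Lemma~\ref{lem:fin_2_inf} gives $\ccap(\Z_R)\le 1-\liminf_n a_n^\Gamma/n$ and $\ccap(\Z_R)\le 1-\liminf_n a_n^L/n$.

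The step I expect to require the most care is closing the gap between these $\liminf$'s and the $\limsup$'s in the definition of $\cR$, i.e., proving that $\lim_n a_n^B/n$ exists for every finite $B$. I would do this with a Fekete-type tiling: given $\epsilon>0$, choose an arbitrarily large $N$ with $a_N^B\ge(\cR(B)-\epsilon)N$, set $M=\max B$, and tile $[n]$ with $\lfloor n/(N+M)\rfloor$ translated copies of an optimal $B$-avoiding subset of $[N]$, separated by gaps of length $M$. Every difference between distinct copies then exceeds $M$, hence lies outside $B$, so the union is $B$-avoiding and has size at least $\lfloor n/(N+M)\rfloor\,a_N^B$. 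Letting $n\to\infty$ gives $\liminf_n a_n^B/n\ge a_N^B/(N+M)$, and then letting $\epsilon\to 0$ while $N\to\infty$ along a subsequence realizing the $\limsup$ gives $\liminf_n a_n^B/n\ge\cR(B)$, so the limit exists and equals $\cR(B)$. Plugging this into the two inequalities above yields $\ccap(\Z_R)\le\min\{1-\cR(L),\,1-\cR(\Gamma)\}=1-\max\{\cR(L),\cR(\Gamma)\}$, as claimed.
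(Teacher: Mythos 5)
Your proof follows the same core route as the paper: take the largest $\{r_1,\dots,r_s\}$-avoiding subset $A_n$ of $[n]$, observe that it is a DAG set in $\Z_{R,n}$ (all induced arcs point in one direction), invoke the MAIS bound \eqref{eq:MAIS}, and feed $c_n=1-|A_n|/n$ into Lemma~\ref{lem:fin_2_inf}. Where you go further than the paper is the Fekete-type tiling argument showing that $\lim_n a_n^B/n$ exists and equals $\cR(B)$, and this is not a superfluous nicety: Lemma~\ref{lem:fin_2_inf} yields $\ccap(\Z_R)\le\limsup_n(1-a_n^\Gamma/n)=1-\liminf_n a_n^\Gamma/n$, whereas the claimed bound is the a priori smaller quantity $1-\limsup_n a_n^\Gamma/n=1-\cR(\Gamma)$. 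The paper's closing sentence ("The statement follows from Lemma~\ref{lem:fin_2_inf}") glosses over this $\liminf$/$\limsup$ discrepancy. Your tiling construction — concatenating optimal $B$-avoiding blocks separated by buffers of length $\max B$, so that cross-block differences exceed $\max B$ and the union remains $B$-avoiding — is exactly the missing step that closes the gap. So: same approach and same key lemma, but you supply a correctness-relevant repair that the paper leaves implicit.
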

\begin{proof}
Assume without loss of generality that $\cR(\{l_1,\ldots,l_t\}) \leq \cR(\{r_1,\ldots,r_s\})$ and let $A_n$ be the largest subset of $[n]$ that is 
$\{r_1,r_2,\ldots,r_s\}$-avoiding. Then, $A_n$ is a DAG set in the graph $\Z_{R,n}$ and thus $\ccap(\Z_{R,n})\leq 1 - |A_n|/n$. 
Indeed, all edges in $A_n$ are going left.
The statement follows from Lemma~\ref{lem:fin_2_inf}. 
\end{proof}
Consider for example the set $B_m=\{1,2,4,\ldots,2^m\}$ for $m\geq 1$. Then, $\cR(B_m) = 1/3$ (achieved by the $B_m$-avoiding set $\{3j : j\geq 0\}\cap [n]$ for all $n$). Then, according to Proposition~\ref{prop:relative}, it is possible to derive that for all $t,s\geq 1$, and $R=\{-2^t,\ldots,-4,-2,-1,1,2,4,\allowbreak\ldots,2^s\}$, $\ccap(\Z_R) \leq 1-1/3$. Equality holds in this case since by Proposition~\ref{prop:1-d-Nd}, $\ccap(\Z_R) \geq \ccap(\{-2,-1,1,2\}) = 2/3$. 

\subsubsection{Interleaving and recoverable systems}
The interleaving construction can also be applied to recoverable systems, and sometimes it yields optimal-rate storage codes on infinite graphs. For instance, let $X_R$ be a one-dimensional recoverable system defined in Proposition~\ref{prop:1-d-Nd}, and let $\Z_R$ be its corresponding (infinite) graph. As in Example~\ref{ex:lr}, we first color $\Z_R$ with $\{0, \dots, (r + l)\}$ such that vertex $t\in \Z_R$ is colored with $t\,\text{mod\,}(r + l + 1)$, and then we find a family of orthogonal partitions $\cM = \{M_0, \dots, M_{l+r}\}$ of shape $(k,s)$. 
Now define the infinite graph $\bar \Z_R$ on the vertices $\Z \times \{1,\dots, s\}$ such that there is an edge from $(t,\mu)$ to $(t',\mu')$ if and only if

\begin{itemize}\item there is an edge from $t$ to $t'$ in $\Z_R$, and 
\item column $\mu$ of $M_{t\,\text{mod\,}(r + l + 1)}$ intersects column $\mu'$ of $M_{t' \,\text{mod\,}(r + l + 1)}$.
\end{itemize}
The interleaved codewords are obtained by placing $(x_t^{\lambda_1}, \dots, x_t^{\lambda_k})^T$ on vertex $(t,\mu)$, where $\lambda_i = M_{c(t)}(i,\mu)$, $i = 1, \dots, k.$ Letting $x^1, \dots, x^{ks} \in X_R$
run over all the possible choices in $X_R,$ we 
obtain the interleaved code $\bar X_R$.
In general, the interleaving construction works for every recoverable system $X_R$, and yields recoverable systems $\bar X_R$ on an infinite graph defined by the interleaving.
To define the rate, 
let $Y$ be a recoverable system defined on $\bar Z_R$ and set
    $$
    \cR_q (Y):= \lim_{n \to \infty} \frac{1}{ns} \log_q B_{[n] \times \{1,\dots, s\}} (Y),
    $$ 
and finally let $\ccap_q(\bar \Z_R)=\sup_Y\cR_q(Y)$.
We observe that  Lemma~\ref{lem:fin_2_inf} can be generalized to bound $\ccap_{q^k}(\bar \Z_R)$, namely, $\ccap_{q^k}(\bar \Z_R) \le \limsup_{n \to \infty} \ccap(\bar \Z_R \cap [n] \times \{1,\dots,s\})$ \footnote{The proof follows the proof of Lemma~\ref{lem:fin_2_inf}, with the only difference that we take $A_n = \{ (m,\mu) \in [n] \times \{1,\dots, s\} : m + R \notin [n]\}$. }.
Therefore, if  $\ccap_q(X_R)$ is bounded above by the DAG bound (as it happens, for instance, for recoverable systems defined in Propositions~\ref{prop:1-d-Nd},~\ref{prop:capR}, and \ref{prop:relative}), then the interleaving construction yields a capacity-achieving recoverable system $\bar X_R$ on $\bar \Z_R$.

\subsection{Recoverable systems on \texorpdfstring{$\Z^d$}{}}\label{sec:Zd}
In this section, we present results concerning the capacity of multidimensional systems and their relationship with one-dimensional capacity. 
As in the one-dimensional case, we derive results for the $d$-dimensional grid relying on capacity estimates for storage codes in its finite subgraphs, arguing that the boundary effects can be disregarded in the limit of large $n$. Let $\Z_{R,n}^d:=\Z_{R}^d\cap[n]^d$ and observe that
Lemma~\ref{lem:fin_2_inf} affords the following generalization.
\begin{lemma}\label{lem:fin_d_inf}
  Let $R \subset \Z^d \setminus \{0^d\}$ be a finite set and let $(c_n)_{n}$ be a sequence of real numbers such that $\ccap(\Z^d_{R,n}) \le c_n$ for all $n \ge 1$.
  Then, $\ccap(\Z^d_{R}) \le \limsup_{n \to \infty} c_n$. 
\end{lemma}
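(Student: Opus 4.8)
The plan is to mimic the proof of Lemma~\ref{lem:fin_2_inf} verbatim, replacing one-dimensional intervals by $d$-dimensional cubes. First I would fix $n \ge 1$ and let $A_n$ be the set of ``boundary'' vertices of $\Z^d_{R,n}$, namely $A_n = \{m \in [n]^d : m + R \nsubseteq [n]^d\}$. The point is that $|A_n| = O(n^{d-1})$: since $R$ is finite, let $\rho := \max_{z \in R} \|z\|_\infty$; then every $m \in A_n$ lies within $\ell_\infty$-distance $\rho$ of the boundary of the cube $[n]^d$, and the number of such lattice points is at most $n^d - (n - 2\rho)^d \le 2d\rho\, n^{d-1}$ for $n > 2\rho$. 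So there is an absolute constant $a'$ (depending only on $R$ and $d$) with $|A_n| \le a' n^{d-1}$.

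Next, given an $R$-recoverable system $X = X_R$ on $\Z^d$, recall $\cB_{[n]^d}(X)$ is its restriction to $[n]^d$. For each assignment $z = (x_m)_{m \in A_n} \in \cQ^{|A_n|}$, let $\cB^z_{[n]^d}(X)$ be the subset of $\cB_{[n]^d}(X)$ agreeing with $z$ on $A_n$. I claim $\cB^z_{[n]^d}(X)$ is a storage code on $\Z^d_{R,n}$: every vertex $m \in [n]^d \setminus A_n$ has $m + R \subseteq [n]^d$, so its recovery function is available inside the subgraph, and the vertices in $A_n$ carry a fixed value, so they impose no constraint but also cause no failure of recoverability for the remaining vertices. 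Hence $n^{-d} \log_q |\cB^z_{[n]^d}(X)| \le \ccap(\Z^d_{R,n}) \le c_n$. Summing over all $z$,
\[
|\cB_{[n]^d}(X)| = \sum_{z} |\cB^z_{[n]^d}(X)| \le q^{|A_n|} \cdot q^{n^d c_n},
\]
so $n^{-d} \log_q |\cB_{[n]^d}(X)| \le c_n + |A_n|/n^d \le c_n + a'/n$. Taking $\limsup_{n \to \infty}$ and using the definition of $\cR_q(X)$ from Definition~\ref{def:CapD} gives $\cR_q(X) \le \limsup_{n \to \infty} c_n$, and since $X$ was arbitrary, $\ccap(\Z^d_R) \le \limsup_{n \to \infty} c_n$.

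There is essentially no obstacle here — this is a routine adaptation — but the one place that deserves a sentence of care is the boundary count: in dimension one the boundary set has bounded size, whereas for $d \ge 2$ it grows like $n^{d-1}$, and the argument only works because we divide by the volume $n^d$, so the error term $|A_n|/n^d = O(n^{-1})$ still vanishes. I would state this explicitly rather than just say ``as in Lemma~\ref{lem:fin_2_inf}'', since it is the only substantive difference. The rest — that fixing values on $A_n$ turns the restriction into a legitimate finite storage code, and that the union over $z$ of these codes recovers all of $\cB_{[n]^d}(X)$ — is identical to the one-dimensional case.
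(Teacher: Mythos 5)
Your proposal is correct and follows essentially the same route as the paper: partition $\cB_{[n]^d}(X)$ according to the values on the boundary set $A_n = \{m \in [n]^d : m+R \nsubseteq [n]^d\}$, observe that each part is a storage code on $\Z^d_{R,n}$ and is therefore bounded in size by $q^{n^d c_n}$, and absorb the $q^{|A_n|} = q^{O(n^{d-1})}$ factor into an $O(1/n)$ error term. The explicit bound $|A_n| \le a'n^{d-1}$ you supply is a slightly more careful version of the paper's $|A_n^d| = O_d(n^{d-1})$, but the argument is the same.
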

\begin{proof} 
Again let $A_n^d:=\{m\in [n]^d: m+R\subsetneq [n]^d\}.$ Since $R$ is a finite region whose size is independent of $n$, and since the number of $(d-1)$-dimensional faces of the cube $[n]^d$ is $2d$, we have $|A_n^d|=O_d(n^{d-1})$. 

Arguing as in Lemma~\ref{lem:fin_2_inf}, let $X$ be an $R$-recoverable system, let $z\in \cQ^{|A_n^d|}$ be a fixed word, and let $B_{[n]^d}^z$ be the restriction of $X$ to $[n]^d$ that matches $z$ on $A_n^d.$
The set $B_{[n]^d}^z$ forms a storage code for $\Z_{R,n}^d,$ and this holds for every choice of $z.$
Since there are $q^{|A_n^d|}=O(q^{n^{d-1}})$ such choices, using the assumption of the lemma, we obtain
   \begin{align*}
   \frac1{n^d}\log_q (O(q^{n^{d-1}})|B_{[n]^d}^z|)\le O\Big(\frac 1n\Big)+c_n.
   \end{align*}
Since this is true for every system $X$, in the limit of $n\to\infty$ we obtain the claim of the lemma.
\end{proof} 

This lemma is used in Section~\ref{sec:ell-metrics} below to derive some capacity results for $\Z^2.$
In this section we show that in some cases capacity of graphs over $\Z^d$ is tightly related to 
capacity of one-dimensional graphs. For the remainder of this section, we assume 
an additional property, which we call invariance. 
Recall that if $R$ is a recovery region for $\Z^d$, then it defines a 
modified graph $\Z^d_R$ obtained by adding edges from every vertex $v\in\Z^d$ to the vertices in its recovery region. We say that $R$ is invariant if for any DAG $S$ in $\Z^d_R,$ the shifted set
$a+S:=\{a+s,s\in S\}$ also forms a DAG set. By extension, an $R$-recoverable system $X$ on 
$\Z^d$ is called invariant if so is $R$. 
While in general the recovery function depends on the vertex to be recovered, in shift invariant systems, it does not depend on the vertex, and has a fixed shape up to translation.

We begin by demonstrating that there is no loss of generality in limiting oneself to $n$-cubes in the capacity definition, Definition~\ref{def:CapD}. Consider an $R$-recoverable system $X$ with a finite recovery region $R\subseteq \Z^d\setminus \{0\}, d\ge 2$. 

For any set $S\subseteq \Z^d$ denote by $\cB_S(X)$ the restriction of the words in $X$ to the set $S.$ 
Now assume $X$ is an invariant system, let $S_1,S_2\subset \Z^d$, and notice that $|\cB_{S_1\cup S_2}(X)|\leq |\cB_{S_1}(X)|\cdot |\cB_{S_2}(X)|,$ or
   $$
   \log_q |\cB_{S_1\cup S_2}(X)|\leq \log_q |\cB_{S_1}(X)| + \log_q |\cB_{S_2}(X)|.
   $$
   
Moreover, by the definition of an $R$-recoverable system, for any finite set $S\subseteq \Z^d$ and any $a\in \Z^d$ we have $|\cB_S(X)|=|\cB_{a+S}(X)|$. 
Our arguments will use the Ornstein-Weiss Theorem (see \cite{ornstein1987entropy} and a detailed proof in \cite{Kri2010}), which we cite in the form adjusted to our needs.
   \begin{theorem} {\rm [Ornstein-Weiss]}
\label{th:or_we}
Let $f$ be a function from the set of finite subsets of $\Z^d$ to $\R$, satisfying the following: 
\begin{itemize}
    \item $f$ is sub-additive, i.e., for every finite subsets $S_1,S_2$, $f(S_1\cup S_2)\leq f(S_1)+f(S_2)$. 
    \item  $f$ is translation invariant, i.e., for any $a\in \Z^d$ and $S\subseteq \Z^d$, $f(a+S)=f(S)$.
\end{itemize} 
Then for every sequence $(S_i)_i\subseteq \Z^d$ such that $\lim_{i\to\infty} \frac{|(a+S_i)\triangle S_i|}{|S_i|}=0$ for every $a\in \Z^d$, the limit
$\lim_{i\to\infty} \frac{f(S_i)}{|S_i|}$ exists, is finite, and it does not depend on the choice of the sequence $(S_i)_i$.
   \end{theorem}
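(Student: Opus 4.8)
\emph{Proof plan.} This is the classical Ornstein--Weiss subadditive convergence theorem specialized to $\Z^d$, and my plan is not to reprove it from scratch but to recall the structure of the argument (a self-contained treatment for $\Z^d$ is in \cite{Kri2010}); the one place where genuine work is needed is the lower bound, which rests on the Ornstein--Weiss quasi-tiling lemma. Throughout write $Q_n:=[0,n)^d$ and $c_0:=f(\mathset{0})$. Iterating the two-set subadditivity over the $|Q|$ singletons of any box $Q$ gives $f(Q)\le|Q|\,c_0$; combined with $f\ge 0$ (which holds for the only function to which we apply the theorem, $f(S)=\log_q|\cB_S(X)|$), this shows $\gamma:=\inf_n f(Q_n)/n^d$ lies in $[0,c_0]$, so the limit, if it exists, will be finite.

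I would first establish convergence along the cubes, $f(Q_n)/n^d\to\gamma$. The tool here is ``grid subadditivity'': splitting each coordinate of a box into consecutive intervals and iterating the two-set inequality shows that if a box is partitioned into a rectangular grid of sub-boxes, then $f$ of the box is at most the sum of $f$ over the grid cells. Given $\epsilon>0$, fix $k$ with $f(Q_k)/k^d<\gamma+\epsilon$; for $n=qk+r$ with $0\le r<k$, partition $Q_n$ into $q^d$ translates of $Q_k$ together with boundary cells, each of volume at most $|Q_k|$ and of total volume $n^d-(qk)^d=O(kn^{d-1})$. Bounding $f$ on each boundary cell by $c_0$ times its volume gives $f(Q_n)/n^d\le\gamma+\epsilon+O(1/n)$, and since also $f(Q_n)/n^d\ge\gamma$ by definition of $\gamma$, the claim follows.

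Next I would pass from cubes to an arbitrary sequence $(S_i)$ with $|(a+S_i)\triangle S_i|/|S_i|\to0$ for every $a$. For the \emph{upper} bound, fix a near-optimal cube $Q_N$ with $f(Q_N)/N^d<\gamma+\epsilon$ and tile $\Z^d$ by its translates along $N\Z^d$; every point of $S_i$ not covered by a translate lying entirely inside $S_i$ is within $\ell_\infty$-distance $N$ of the complement of $S_i$, hence belongs to a set whose size is $o(|S_i|)$ by the averaging (F\o lner) hypothesis. Applying subadditivity to the covering cubes and bounding $f$ on the leftover points by $c_0$ times their number yields $\limsup_i f(S_i)/|S_i|\le f(Q_N)/N^d$, and letting $N\to\infty$ gives $\limsup_i f(S_i)/|S_i|\le\gamma$. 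The hard part is the matching \emph{lower} bound: subadditivity only ever bounds $f$ of a large set from above, so one cannot reach $f(S_i)\ge\gamma|S_i|-o(|S_i|)$ merely by tiling $S_i$ with cubes. Here I would invoke the Ornstein--Weiss quasi-tiling lemma: a suitable finite nested family of translates of the sets $S_i$ can be packed, pairwise disjointly, into a large cube $Q_M$ so as to cover all but an $\epsilon$-fraction of it. Subadditivity then gives $f(Q_M)\le\sum(\text{counts})\,f(S_{i_j})+c_0\epsilon M^d$; dividing by $M^d$, using the first step, and optimizing over the family forces $\gamma\le(1+\epsilon)\max_j f(S_{i_j})/|S_{i_j}|+O(\epsilon)$, whence $\liminf_i f(S_i)/|S_i|\ge\gamma$. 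Combining the two bounds, $f(S_i)/|S_i|\to\gamma$ for every such sequence, which is the assertion; the quasi-tiling step is the only substantial ingredient, and for the monotone non-negative $f$ used in this paper it can be streamlined but not bypassed.
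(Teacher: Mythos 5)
The paper does not prove this theorem; it imports it. Immediately before the statement the text says ``Our arguments will use the Ornstein-Weiss Theorem (see \cite{ornstein1987entropy} and a detailed proof in \cite{Kri2010}), which we cite in the form adjusted to our needs,'' and the theorem is labeled {\rm [Ornstein-Weiss]} with no \verb|proof| environment following. So there is no ``paper's own proof'' to compare against; in that sense you and the paper agree that the right move is to defer to Krieger and Ornstein--Weiss.

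Your additional sketch is, however, a faithful outline of the standard argument for $\Z^d$. The three steps are correct and in the right order: (i) coordinatewise Fekete gives $f(Q_n)/n^d\to\gamma:=\inf_n f(Q_n)/n^d$; (ii) for the upper bound, covering $S_i$ by $N\Z^d$-translates of $Q_N$ and charging the near-boundary residue by $c_0$ per point works because that residue lies in $\bigcup_{\|a\|_\infty\le N}\bigl(S_i\triangle(S_i-a)\bigr)$, which is $o(|S_i|)$ by the F\o lner hypothesis applied to a finite window of shifts; (iii) the lower bound genuinely needs the Ornstein--Weiss quasi-tiling lemma, and you are right that no pure-tiling argument replaces it. Two small cautions on step (iii): the quasi-tiles are $\epsilon$-disjoint rather than pairwise disjoint, which costs a $(1-\epsilon)^{-1}$ factor in the point count; and you should place the translates entirely inside $Q_M$ so that subadditivity applies to the honest decomposition $Q_M=\bigcup_{j,\ell}\tau_{j\ell}S_{i_j}\cup\bigl(Q_M\setminus\bigcup\bigr)$ without invoking monotonicity of $f$, which you are (correctly) not assuming. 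Also, as you note, the theorem as printed allows $f$ to be $\R$-valued and still asserts a finite limit; that requires $f\ge 0$ (or some substitute), which holds for the only $f$ the paper uses, $f(S)=\log_q|\cB_S(X)|$. None of these are gaps in a sketch explicitly deferring to \cite{Kri2010}; they are the details that a full write-up would need.
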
 

A sequence $(S_i)_i$ for which $\lim_{i\to\infty} \frac{|(a+S_i)\triangle S_i|}{|S_i|}=0$ for every $a\in \Z^d$, is called a F{\o}lner sequence, and it can be thought of as a sequence of shapes whose boundary becomes negligible. Theorem~\ref{th:or_we} implies that for invariant systems, for every F{\o}lner sequence $(S_i)_i$ we have 
  $$
\cR(X)=\lim_{n\to\infty} \frac{\log_q |\cB_{[n]^d}(X)|}{n^d}= \lim_{i\to\infty} \frac{\log_q |X_{S_i}|}{|S_i|}.
  $$

In particular, since the sequence $([n]^d)_{(n\ge 1)}$ is a F{\o}lner sequence, the rate $\cR(X)$ is well defined. The capacity of a recovery region $R$ (or of the graph $\Z^d_R$) is defined
as the largest attainable rate of $R$-recoverable systems. In this section we limit ourselves to invariant systems, and we denote their capacity by $\ccap_r(R)$ (instead of $\ccap_r(\Z^d_R)$),
where the subscript $r$ refers to invariance. Thus, $\ccap_r(R)=\sup \cR(X_R),$ where the supremum is taken over all invariant systems in $Q^{\Z^d}$. Notice that all the definitions given so far are valid when restricting ourselves to the set of invariant recoverable systems.

We will now establish a connection between the capacity of invariant recoverable systems in one dimension and their higher-dimensional counterparts. 
To do so, we introduce the following definition. Let $\bfn=(n_1,\dots,n_d)\in \N^d$ be a $d$-dimensional vector. 
As above, $\cB_{[\bfn]}(X)$ denotes the set of words in $X$ that are restricted to the coordinates $[\bfn]=[n_1]\times \cdots \times [n_d]$. 
Given a sequence of vectors $(\bfn_i)_i$, we say that $\bfn_i\to \infty$ as $i\to\infty$ if the sequence 
    $$
    \min \bfn_i:=\min \mathset{(n_1)_i,\dots,(n_d)_i}\to\infty
    $$
as $i\to\infty$. 
It is straightforward to show that if a sequence $({\bfn_i})_i\subset \Z^d$ satisfying $\lim_{i\to\infty}\bfn_i=\infty$, then $\lim_{i\to\infty} \frac{|(a+[\bfn_i])\triangle [\bfn_i]|}{|[\bfn_i]|}=0$ for every $a\in \Z^d$, i.e., $({\bfn_i})_i$ is a F{\o}lner sequence (here $|[\bfn]|=\prod_{j=1}^n n_j$).

Let 
  \begin{equation}\label{eq:Rd}
    R^d = \bigcup_{i=1}^d (\{0\}^{i-1} \times R \times \{0\}^{d-i})
  \end{equation}
(below we call such regions {\em axial products}; see Thm.~\ref{th:ax}). It is intuitively clear that $\ccap_r(R)\leq \ccap_r(R^d)$.
Below we formally prove 
this claim together with an upper bound on $\ccap_r(R^d)$. We will use a multivariate generalization of the well-known Fekete lemma, which extends the original claim to subadditive functions on $d$-dimensional lattices. A general version of this statement appears, for instance, in \cite[Lemma 15.11]{Georgii2011}, although apparently it has been known for a long while. We will use a version adjusted to our needs, considering sequences indexed by integer vectors and subadditive along each of the coordinates.

For a vector $\bfn=(n_1,\dots, n_d)\in \N^d$ and $c\in \N$, let 
$\bfn^{(i,+c)}$ be the vector obtained from $\bfn$ by replacing $n_i$ with $n_i+c,$ and let $\bfn^{(i,c)}$
be obtained upon replacing $n_i$ with $c.$

\vspace*{.1in}
\begin{theorem}\cite[Th. 1]{cap2008},~\cite{cap2022}\label{thm:Fekete} Let $\xi:\N^d\to [0,\infty)$ be a sequence 
that satisfies   
   $$
  \xi(\bfn^{(i,+c_i)})\le \xi(\bfn)+\xi(\bfn^{(i,c_i)})
   $$
 for all $c_i\in \N, i=1,\dots,d$. Then, the limit $\lim_{\bfn\to\infty}\frac{\xi(\bfn)}{\prod_j n_j}$ exists and is equal to $\inf_{S\subseteq \N^d}\frac{\xi(S)}{|S|}$.
\end{theorem}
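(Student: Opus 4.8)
The plan is to reduce the $d$-dimensional statement to $d$ successive applications of the classical one-dimensional Fekete lemma, peeling off one coordinate at a time. Write $N(\bfn)=\prod_{j=1}^d n_j$. The key observation is that the subadditivity hypothesis involves only one coordinate at a time, so it can be iterated.

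\emph{Step 1: One coordinate.} Fix $(n_2,\dots,n_d)$ and consider the single-variable function $n_1\mapsto \xi(n_1,n_2,\dots,n_d)$. The hypothesis with $i=1$ says precisely that this is a subadditive sequence in $n_1$ (taking $\bfn^{(1,c_1)}$ corresponds to restarting at length $c_1$; since all terms are nonnegative, the sequence also satisfies the hypotheses of Fekete's lemma). Hence $\lim_{n_1\to\infty}\xi(n_1,n_2,\dots,n_d)/n_1$ exists and equals $\inf_{n_1\ge 1}\xi(n_1,n_2,\dots,n_d)/n_1$. Call this limit $\xi_1(n_2,\dots,n_d)$.

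\emph{Step 2: Transfer of structure.} I must check that $\xi_1$ again satisfies a subadditivity hypothesis of the same form in the remaining $d-1$ coordinates. This is where a little care is needed: taking the limit in $n_1$ of the inequality $\xi(\bfn^{(i,+c_i)})\le\xi(\bfn)+\xi(\bfn^{(i,c_i)})$ for $i\ge 2$, after dividing by $n_1$, yields $\xi_1(\bfm^{(i,+c_i)})\le \xi_1(\bfm)+\xi_1(\bfm^{(i,c_i)})$ for the $(d-1)$-vector $\bfm=(n_2,\dots,n_d)$, since each of the three terms divided by $n_1$ converges to the corresponding $\xi_1$-value. Nonnegativity of $\xi$ is preserved. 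Iterating Steps 1--2 over all $d$ coordinates, one obtains after $d$ steps a constant $L=\inf$ of the appropriate nested infima, and at each stage the limit equals the infimum over that coordinate.

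\emph{Step 3: Joint limit and the infimum formula.} It remains to promote the iterated (coordinatewise) limit to a genuine joint limit as $\bfn\to\infty$ (in the sense $\min\bfn\to\infty$) and to identify $L$ with $\inf_{S\in\N^d}\xi(S)/|S|$. For the joint limit: given $\varepsilon>0$, fix a box $S^*$ with $\xi(S^*)/N(S^*)\le L+\varepsilon$; for arbitrary $\bfn$ with each $n_i$ large, write $n_i=q_iS^*_i+\rho_i$ with $0\le\rho_i<S^*_i$, tile $[\bfn]$ greedily by translates of $[S^*]$ plus a boundary layer of relative volume $O(1/\min\bfn)$, and use iterated subadditivity (applied along each axis in turn to the grid of tiles, bounding the small remainder terms $\xi(\bfn^{(i,\rho_i)})$ crudely, e.g. by the trivial bound obtained from subadditivity $\xi(\bfn^{(i,\rho_i)})\le \rho_i\,\xi(\bfn^{(i,1)})$) to get $\xi(\bfn)/N(\bfn)\le L+\varepsilon+o(1)$. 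Combined with $\xi(\bfn)/N(\bfn)\ge L$ (which follows because $L$ is by construction the infimum), this gives $\lim_{\bfn\to\infty}\xi(\bfn)/N(\bfn)=L=\inf_{S}\xi(S)/|S|$.

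\emph{Main obstacle.} The bookkeeping in Step 3 — the simultaneous tiling of an arbitrary box by a fixed box in all $d$ coordinates, and controlling the $2d$ boundary slabs so their contribution is $o(N(\bfn))$ — is the only nontrivial part; the coordinatewise reduction in Steps 1--2 is routine once one notices that the hypothesis is one-coordinate-at-a-time. Alternatively, one can cite \cite[Lemma 15.11]{Georgii2011} or \cite{cap2008,cap2022} directly for the joint statement, which is what the paper does; the argument above is how I would reconstruct it if a self-contained proof were wanted.
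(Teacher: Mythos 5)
The paper does not prove this theorem; it is imported by citation from Capobianco's work on multidimensional Fekete lemmas (\cite{cap2008,cap2022}), so there is no in-text argument to compare against. Your reconstruction is sound in outline, but two things are worth flagging. First, Steps~1--2 are logically superfluous: set $L:=\inf_{\bfn\in\N^d}\xi(\bfn)/\prod_j n_j$ directly; then the inequality $\xi(\bfn)/\prod_j n_j\ge L$ holds tautologically for every $\bfn$, and the tiling estimate of Step~3 supplies the matching $\limsup\le L$, so the joint limit exists and equals $L$ without ever invoking the iterated one-dimensional Fekete limits. Dropping Steps~1--2 also sidesteps the minor verification in Step~2 that all three terms converge as $n_1\to\infty$ before you pass to the limit in the inequality (they do, by Step~1, but it is an easy slip to miss).

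Second, the remainder control you propose in Step~3 is not by itself enough: the bound $\xi(\bfn^{(i,\rho_i)})\le \rho_i\,\xi(\bfn^{(i,1)})$ leaves you with $\xi(\bfn^{(i,1)})$, which still carries all the large $n_j$, $j\ne i$, in its other slots and grows without bound. What closes the gap is the further elementary consequence of coordinatewise subadditivity that $\xi(\bfm)\le \bigl(\prod_j m_j\bigr)\,\xi(1,\dots,1)$ for every $\bfm$, obtained by iterating $\xi(m_1,m_2,\dots)\le m_1\,\xi(1,m_2,\dots)$ over all $d$ coordinates. With this, the boundary slab arising at stage $i$ of the greedy tiling contributes at most $\xi(1,\dots,1)\,S^*_i\prod_{j\ne i}n_j=O\bigl(N(\bfn)/n_i\bigr)$, and the prefactors $q_1\cdots q_{i-1}$ picked up from the previous stages are harmless because $q_jS^*_j\le n_j$. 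One then gets $\xi(\bfn)/N(\bfn)\le L+\varepsilon+\xi(1,\dots,1)\sum_i S^*_i/n_i$, which tends to $L+\varepsilon$ as $\min\bfn\to\infty$, and the argument closes as you describe.
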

\vspace*{.1in}

\begin{proposition}\label{prop:invariant-d} 
  Let $R \subset \Z \setminus \{0\}$ be a finite set and for every $n$, let $S_n$ be an invariant DAG set in $\Z_{R,n}$. 
  Then $$\ccap_r(R)\leq \ccap_r(R^d) \leq \limsup_{n\to\infty} (1-|S_n|/n).$$
\end{proposition}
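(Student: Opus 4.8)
The plan is to prove the two inequalities separately, both by reducing the $d$-dimensional problem to the one-dimensional one via the product structure of $R^d$ and the invariance hypothesis, and then invoking the Ornstein--Weiss theorem (Theorem~\ref{th:or_we}) together with the multivariate Fekete lemma (Theorem~\ref{thm:Fekete}) to legitimately pass to the limit along the natural box sequences.

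For the left inequality $\ccap_r(R) \le \ccap_r(R^d)$, first I would take any invariant one-dimensional $R$-recoverable system $X$ on $\Z$ and build a $d$-dimensional system $X^d$ on $\Z^d$ by stacking: declare $y \in (X^d)$ iff each line of $y$ parallel to the first coordinate axis lies in $X$ (equivalently, ignore all but the first axis). Since the recovery function of $X$ for position $0$ uses the set $R$, the same function recovers $y_v$ from $y_{v+R^d}$ (it only looks at the coordinates in $v + (\{0\}^{0}\times R\times\{0\}^{d-1}) \subseteq v + R^d$), so $X^d$ is $R^d$-recoverable, and it is invariant because $R$ is. Restricting to $[n]^d$ gives $|\cB_{[n]^d}(X^d)| = |B_n(X)|^{n^{d-1}}$, hence $\cR(X^d) = \cR(X)$, and taking the supremum over $X$ yields $\ccap_r(R) \le \ccap_r(R^d)$. (One must check that the ``stacked'' system is measured by the same $n$-cube rate, which is immediate from the displayed product count.)

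For the right inequality, fix an arbitrary invariant $R^d$-recoverable system $X$ and a large box $[\bfn] = [n_1]\times\cdots\times[n_d]$. The key structural observation is that, because $R^d$ is the axial product, a DAG set of the one-dimensional graph $\Z_{R,n_1}$ taken on \emph{every} line parallel to the first axis produces a DAG set of $\Z^d_{R^d}$ restricted to $[\bfn]$: any directed cycle in the induced subgraph would project, under the coordinate map forgetting all but the first axis (arcs of $R^d$ that are \emph{not} along the first axis do not connect two vertices of this set once we argue axis by axis), to a directed cycle in $\Z_{R,n_1}$ — here is exactly where I would use that $S_n$ is invariant, so that its translates along the line still form DAG sets and the union over all lines is well-defined as a DAG set. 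Applying the MAIS bound (Theorem~\ref{th:DAG_set}) to $\Z^d_{R^d}\cap[\bfn]$ with this DAG set of size $|S_{n_1}|\, n_2\cdots n_d$ gives
\[
  \tfrac{1}{|[\bfn]|}\log_q|\cB_{[\bfn]}(X)| \;\le\; 1 - \frac{|S_{n_1}|\,n_2\cdots n_d}{n_1 n_2\cdots n_d} \;=\; 1 - \frac{|S_{n_1}|}{n_1}.
\]
Setting $\xi(\bfn) := \log_q|\cB_{[\bfn]}(X)|$, the subadditivity displayed just before Theorem~\ref{thm:Fekete} (namely $|\cB_{S_1\cup S_2}(X)|\le|\cB_{S_1}(X)||\cB_{S_2}(X)|$) and translation invariance of $|\cB_\bullet(X)|$ give the hypotheses of Theorem~\ref{thm:Fekete}, so $\lim_{\bfn\to\infty}\xi(\bfn)/|[\bfn]|$ exists and equals $\cR(X)$; choosing $\bfn$ with $n_2,\dots,n_d\to\infty$ first and $n_1 = n$ yields $\cR(X) \le 1 - |S_n|/n$ for every $n$, hence $\cR(X) \le \liminf_n(1-|S_n|/n) \le \limsup_n(1-|S_n|/n)$. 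Taking the supremum over invariant $X$ gives the stated bound.

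The main obstacle I anticipate is the DAG-set argument in the $d$-dimensional graph: one must verify carefully that the ``fibered'' set $\bigcup_{\text{lines }\ell}(S_{n_1}\text{ on }\ell)$ is genuinely acyclic in $\Z^d_{R^d}\cap[\bfn]$, i.e., that the non-first-axis arcs cannot create a cycle among vertices all of which sit in such a set — this is where invariance of $R$ (equivalently, of $S_n$) is essential, since it guarantees the induced structure on each line is the \emph{same} DAG up to translation, and a topological order chosen consistently along each line extends to the whole fibered set. Everything else (the product count for the stacked system, the verification of the Fekete hypotheses, and the final limit manipulation) is routine.
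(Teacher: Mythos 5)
Your first inequality (stacking one-dimensional systems) matches the paper. Your second inequality, however, contains a genuine gap at the exact point you flagged as the ``main obstacle.''

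The central false step is the claim that the fibered set $S_{n_1}\times[n_2]\times\cdots\times[n_d]$ is a DAG set in $\Z^d_{R^d}\cap[\bfn]$, and in particular the parenthetical assertion that ``arcs of $R^d$ that are not along the first axis do not connect two vertices of this set.'' They do. Take $d=2$, $R=\{-1,1\}$, $S_{n_1}$ the even positions. Then $R^2=\{(\pm1,0),(0,\pm1)\}$, and the fibered set $S_{n_1}\times[n_2]$ contains the directed $2$-cycle $(0,0)\to(0,1)\to(0,0)$ coming from the second-axis arcs $(0,1)$ and $(0,-1)$. The fibered set is thus not a DAG set, and the MAIS bound gives nothing for it. Invariance of $S_n$ (which only says translates of $S_n$ are DAG sets in the one-dimensional graph) cannot repair this, because the obstruction lives entirely in the transverse directions, not along the fiber; and no ``consistent topological order along each line'' can exist for a set with an actual directed cycle.

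The paper's route around this is different and worth noting. After invoking Theorem~\ref{thm:Fekete} to identify the limit $\cR(X_{R^d})$ with the \emph{infimum} of $\xi(\bfn)/\prod_j n_j$ over all box shapes, the paper simply plugs in an (essentially one-dimensional) \emph{thin} box $[n]\times\{0\}^{d-1}$. On that box the induced subgraph of $\Z^d_{R^d}$ has only first-axis arcs --- every off-axis arc leaves the box --- so $S_n$ is a genuine DAG set there, and the bound $1-|S_n|/n$ follows. You instead tried to carry a full $d$-dimensional box to the end and bolt a fibered DAG set onto it; this buys nothing over the thin-box argument and, more importantly, it is incorrect. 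To fix your proof you would have to replace the fibered set with a DAG set that is actually acyclic in the full box (which in general is not a product set; compare the diagonal construction in the proof of Theorem~\ref{th:ax}), or simply take the paper's infimum-over-shapes shortcut and evaluate at the thin box.
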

\begin{proof} It will suffice to limit ourselves to considering restrictions of $X_{R,n}$ to cubes, so below $\cB_{[\bfn]}$ denotes a cube $[n]^d.$
The first inequality is obvious by stacking one-dimensional words from $X_{R,n}$ to form a word in $X_{R^d}$. Namely, fix any $(i_1,\dots,i_{d-1})\in[n]^{d-1}$ and place a word of $X_{R,n}$ in the $n$ positions given by varying the last coordinate $i_d$ inside $\cB_{[\bfn]}$; repeat for all the choices of $(i_1,\dots,i_{d-1}).$ Since every symbol can be recovered from its one-dimensional neighborhood along the varying coordinate, it can also be recovered from its $R^d$-neighborhood. 
  
Let us prove the second inequality. 
For $a\in\N$ we have 
  $$
  |\cB_{[\bfn^{(i,+a)}]}(X_{R^d})|\leq |\cB_{[\bfn]}(X_{R^d})|\cdot |\cB_{[\bfn^{(i,a)}]}(X_{R^d})|,
  $$
or
   $$
\log |\cB_{[\bfn^{(i,+a)}]}(X_{R^d})|\leq \log |\cB_{[\bfn]}(X_{R^d})|+ \log |\cB_{[\bfn^{(i,a)}]}(X_{R^d})|.
   $$
Thus, the log-volume sequence is subadditive coordinate-wise, so Theorem~\ref{thm:Fekete} applies, 
yielding that the limit in \eqref{eq:cap} exists, i.e., 
  $$
  \cR(X_R)=\lim_{n\to\infty} \frac{\log |\cB_{[\bfn]}(X_{R^d})|}{n^d}.
  $$
Moreover, this limit equals the infimum on the choice of the shape, so taking $\bfn_i=(i,0,0,\dots,0)$ for the upper bound on capacity we obtain that 
  $$
  \cR(X_{R^d})\leq \frac{\log \cB_{[\bfn_i]}(X_{R^d})}{i}.
  $$
Since $\cB_{[\bfn_i]}(X_{R^d})$ is the restriction of $X$ to $[\bfn_i]$, and since $R^d$ is constructed as an axial product, we obtain that $S_n$ is a DAG set in $\Z_{R^d,n}$. Thus, $\frac{\log \cB_{[\bfn_i]}(X_{R^d})}{i}\leq 1- |S_n|/n$ which implies that 
  $$
  \ccap_r(R^d)\leq \limsup_{n\to\infty} \Big(1-\frac{|S_n|}{n}\Big). 
  $$
\end{proof}

As an immediate corollary of this proposition, we obtain the following. 
\begin{corollary}
Let $R\subseteq \Z\setminus \{0\}$ be a finite set and let $X$ be an invariant $R$-recoverable system that attains 
the MAIS bound \eqref{eq:MAIS}. 
Let $R^d$ be given by \eqref{eq:Rd}, then $\ccap(R^d)=\ccap(R)$.
\end{corollary}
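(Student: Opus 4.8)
The plan is to sandwich $\ccap(\Z^d_{R^d})$ between two copies of $\ccap(\Z_R)$: the upper side comes from the MAIS bound on finite grids together with a ``diagonal'' construction of DAG sets, the lower side from the stacking argument, and then the hypothesis on $X$ closes the gap. Throughout write $\delta_n:=\delta(\Z_{R,n})$ and $\delta_n^{(d)}:=\delta(\Z^d_{R^d,n})$ for the largest DAG-set sizes in the finite graphs. Saying that the invariant system $X$ attains the MAIS bound means $\cR(X)=\limsup_n(1-\delta_n/n)$ (the upper estimate on $\ccap(\Z_R)$ obtained from Lemma~\ref{lem:fin_2_inf} and Theorem~\ref{th:DAG_set}); hence $\ccap(\Z_R)=\limsup_n(1-\delta_n/n)=1-\liminf_n\delta_n/n$. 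The inequality $\ccap(\Z^d_{R^d})\ge\ccap(\Z_R)$ I would read off from the stacking argument already used in the proof of Proposition~\ref{prop:invariant-d}: placing an independent copy of a word of an $R$-recoverable system on every line of $[n]^d$ parallel to a fixed axis yields an $R^d$-recoverable system of the same rate. So everything reduces to the reverse inequality $\ccap(\Z^d_{R^d})\le\ccap(\Z_R)$, and since Lemma~\ref{lem:fin_d_inf} with Theorem~\ref{th:DAG_set} gives $\ccap(\Z^d_{R^d})\le 1-\liminf_n\delta_n^{(d)}/n^d$, it is enough to prove $\liminf_n\delta_n^{(d)}/n^d\ge\liminf_n\delta_n/n$.

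For that inequality I would inflate a one-dimensional DAG set into a $d$-dimensional one along the coordinate-sum map $v\mapsto v_1+\dots+v_d$. Fix $m$, let $T\subseteq[m]$ be a DAG set in $\Z_{R,m}$ with a topological ordering $\phi\colon T\to\Z$ (so $\phi(t)<\phi(t+r)$ whenever $t,t+r\in T$ and $r\in R$), let $c=\max_{r\in R}|r|$, and trim to $T'=\{t\in T:c\le t<m-c\}$, so $|T'|\ge|T|-2c$. For $n\ge1$ set $W_n=\{v\in[n]^d:(v_1+\dots+v_d)\bmod m\in T'\}$. Because every edge of $\Z^d_{R^d}$ changes the coordinate sum by some $r\in R$ — this is exactly where the axial-product shape \eqref{eq:Rd} enters — and because $T'$ keeps distance $c$ from both ends of $[m]$, if $v$ and an $R^d$-neighbour of it both lie in $W_n$ then the associated residues $t$ and $t+r$ both lie in $T$; consequently $v\mapsto\phi\big((v_1+\dots+v_d)\bmod m\big)$ topologically orders the subgraph induced by $W_n$, so $W_n$ is a DAG set in $\Z^d_{R^d,n}$. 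A routine equidistribution count (fix $v_2,\dots,v_d$, vary $v_1$; each residue class mod $m$ is hit $n/m+O(1)$ times) gives $|W_n|=(|T'|/m)\,n^d+O(m\,n^{d-1})$, whence $\liminf_n\delta_n^{(d)}/n^d\ge|T'|/m\ge|T|/m-2c/m$. Taking $T$ to be a maximum DAG set in $\Z_{R,m}$ and letting $m\to\infty$ along a subsequence realizing $\liminf_k\delta_k/k$ then gives $\liminf_n\delta_n^{(d)}/n^d\ge\liminf_n\delta_n/n$, as required.

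Chaining the estimates, $\ccap(\Z_R)\le\ccap(\Z^d_{R^d})\le 1-\liminf_n\delta_n^{(d)}/n^d\le 1-\liminf_n\delta_n/n=\ccap(\Z_R)$, so $\ccap(\Z^d_{R^d})=\ccap(\Z_R)$, i.e.\ $\ccap(R^d)=\ccap(R)$. I expect the diagonal construction to be the only genuinely technical step: verifying that pulling a trimmed, periodized one-dimensional DAG set back along the coordinate-sum map preserves acyclicity, and that the pullback has the same asymptotic density; both are short but do use the special structure of $R^d$. (If one only wants the invariant capacity $\ccap_r$, this construction can be skipped entirely: Proposition~\ref{prop:invariant-d} applied with $S_n$ a maximum DAG set in $\Z_{R,n}$ already yields $\ccap_r(R^d)\le\limsup_n(1-\delta_n/n)=\ccap_r(R)$, and the reverse inequality is again stacking.)
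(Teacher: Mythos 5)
Your proof is correct, but it takes a genuinely different route from the paper's. The paper treats the corollary as immediate from Proposition~\ref{prop:invariant-d}: applying that proposition with $S_n$ a maximum DAG set in $\Z_{R,n}$ gives $\ccap_r(R^d)\le\limsup_n(1-\delta_n/n)$, the hypothesis that $X$ attains the MAIS bound forces equality with $\ccap_r(R)$, and the stacking construction closes the other direction. That argument leans on the Ornstein--Weiss/multivariate-Fekete machinery (Theorem~\ref{thm:Fekete}), which in turn needs the translation-invariance hypothesis to make $\log|\cB_S(X)|$ subadditive, so it produces a statement about $\ccap_r$. You instead construct $d$-dimensional DAG sets directly by a diagonal (coordinate-sum) pullback of a trimmed, $m$-periodized one-dimensional DAG set: the axial-product structure of $R^d$ guarantees that every edge changes the coordinate sum by an element of $R$, the trimming to $[c,m-c)$ prevents modular wraparound, and the topological order of $T$ then composes with the coordinate-sum residue to order the induced subgraph, so acyclicity is preserved; the routine density count then gives $\liminf_n\delta_n^{(d)}/n^d\ge\liminf_n\delta_n/n$ (in fact $\ge\limsup_n\delta_n/n$, since your estimate holds for every $m$). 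Your route is more elementary — it avoids Ornstein--Weiss entirely — and is also strictly more general, as it bounds the unrestricted capacity $\ccap(\Z^d_{R^d})$ rather than only the invariant capacity $\ccap_r$; the price is a slightly more technical DAG-set construction. You correctly identify the paper's intended short argument in your parenthetical remark at the end.
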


\section{Capacity of \texorpdfstring{$\Z^2$}{} with \texorpdfstring{$l_1$}{} and 
\texorpdfstring{$l_\infty$}{} recovery regions}\label{sec:ell-metrics}

The results in the previous sections enable us to study the capacity of graphs that represent a storage network over the two-dimensional grid. We start with recovery sets formed by radius-$r$ balls under the $l_1$ and $l_{\infty}$ metrics. 
For $v=(i_1,j_1),u=(i_2,j_2)\in\Z^2$, 
\begin{align*}
& d_1\big( (i_1,j_1),(i_2,j_2) \big) = |i_1-i_2| + |j_1-j_2|, & \\
& d_\infty\big( (i_1,j_1),(i_2,j_2) \big) = \max\{|i_1-i_2|,|j_1-j_2|\}. &
\end{align*}
The $l_1$ metric on $\Z^2$ is sometimes called the Manhattan distance (or even Lee distance \cite{E11}, although this usage is not fully accurate).  A sphere in this metric is a rhombic pattern whose exact shape depends on the parity of the radius (see below).
A sphere in the metric $d_\infty$ is a $(2r+1)\times (2r+1)$ square. In both cases to argue about recovery, we add to $\Z^2$ the edges that connect every vertex with its neighbors in the sphere of radius $r$ about it, and denote the resulting graph by $G_r^\alpha$, for $\alpha=1,\infty$. To find the capacity of these graphs, we rely on Theorem \ref{cor:covering} applied to their finite subgraphs. These subgraphs do not have transitive automorphisms, but finding the capacity of the system essentially amounts to constructing a perfect covering of the graph by anticodes. This claim is proved in the following general statement.

\begin{proposition} Let $D_G(r)$ be the largest size of an anticode of diameter $r$ in $\Z^2$ in the $d_\alpha$ metric, $\alpha=1,\infty,$ and suppose that $\Z^2$ admits a tiling with anticodes of size $D_G(r)$. Then 
   $$
   \ccap(G_r^\alpha)=1-\frac1{D_G(r)}.
   $$
\end{proposition}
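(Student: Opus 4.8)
The plan is to apply Theorem~\ref{cor:covering} to the finite subgraphs $G_{r,n}^\alpha := G_r^\alpha \cap [n]^2$ and then pass to the limit using Lemma~\ref{lem:fin_d_inf}. For each fixed $n$, the graph $G_{r,n}^\alpha$ is the grid $[n]^2$ with edges joining every pair of vertices at $d_\alpha$-distance between $1$ and $r$. An anticode of diameter $r$ in $\Z^2$ is, by definition, a clique in $G_r^\alpha$, and restricting such an anticode to $[n]^2$ gives a clique in $G_{r,n}^\alpha$. So first I would set up the lower bound: since $\Z^2$ tiles by congruent copies of an anticode $\cD$ of size $D_G(r)$, each tile intersected with $[n]^2$ gives a clique, and these cliques cover $[n]^2$. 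The number of cliques is at most $\lceil n^2 / D_G(r)\rceil + O(n)$, where the $O(n)$ accounts for boundary tiles that are truncated (and whose leftover pieces we cover individually). Hence by the clique-covering bound~\eqref{eq:clique}, $\ccap(G_{r,n}^\alpha) \ge 1 - (n^2/D_G(r) + O(n))/n^2 = 1 - 1/D_G(r) - O(1/n)$.

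For the upper bound I would exhibit a large DAG set (equivalently, since the relevant graph is undirected here, a large independent set or — more sharply — use the code-anticode inequality). The key point is that the tiling of $\Z^2$ by anticodes of size $D_G(r)$ is equivalent to the existence of a perfect code: picking one representative point from each tile yields a set $\cC \subset \Z^2$ with pairwise $d_\alpha$-distance at least $r+1$ (a code of minimum distance $r+1$) such that the translates of $\cD$ centered at... — more precisely, the tiling means $\Z^2 = \bigsqcup_{c\in\cC}(c+\cD)$ for some point set $\cC$, and any two points of $\cC$ lie in different tiles so are at distance $> r$ (the diameter of $\cD$ is $r$). Restricting $\cC$ to $[n]^2$ gives an independent set in $G_{r,n}^\alpha$ of size $n^2/D_G(r) + O(n)$ — since each tile contributes exactly one point of $\cC$ and all but $O(n)$ tiles are entirely inside $[n]^2$. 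Then bound~\eqref{eq:IB} (or~\eqref{eq:min_dist_bound}) gives $\ccap(G_{r,n}^\alpha) \le 1 - |\cC\cap[n]^2|/n^2 = 1 - 1/D_G(r) + O(1/n)$. Finally, apply Lemma~\ref{lem:fin_d_inf} with $c_n = 1 - 1/D_G(r) + O(1/n)$ to conclude $\ccap(G_r^\alpha) \le 1 - 1/D_G(r)$, and combine with the matching lower bound (the lower bound also passes to the infinite graph directly, since the clique-partition construction on the tiling of $\Z^2$ gives an $R$-recoverable system of rate $1 - 1/D_G(r)$).

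The main obstacle — really the only place requiring care rather than routine bookkeeping — is controlling the boundary. The finite subgraphs $G_{r,n}^\alpha$ are not vertex-transitive, so I cannot invoke Theorem~\ref{cor:covering} as a black box; instead I must argue directly that (i) the tiling of $\Z^2$, intersected with $[n]^2$, still yields a near-optimal clique cover up to $O(n) = o(n^2)$ truncated tiles, and (ii) the transversal $\cC$ of the tiling, intersected with $[n]^2$, is still a near-maximum independent set, again up to an $O(n)$ additive error. Both facts rely on the geometric observation that an anticode of bounded diameter $r$ has bounded size, so only tiles meeting the boundary strip of width $O(r)$ around $\partial[n]^2$ are affected, and there are $O(n)$ such tiles for fixed $r$. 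Once this boundary accounting is in place, the $o(1)$ error terms vanish in the limit and Lemma~\ref{lem:fin_d_inf} delivers the exact value. I would also note, for the lower bound on the infinite graph, that one can bypass the finite-subgraph argument entirely: the tiling $\Z^2 = \bigsqcup(c+\cD)$ directly defines a recovery scheme in which every vertex is recovered from the other $D_G(r)-1$ vertices of its tile (all within $d_\alpha$-distance $r$, hence in its recovery region $R = \cB_r(\cdot)\setminus\{0\}$), giving a rate-$(1-1/D_G(r))$ recoverable system.
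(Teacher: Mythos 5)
The overall structure you propose matches the paper's intent (clique cover for the lower bound, large independent set via code--anticode for the upper bound, plus boundary bookkeeping), but there is a genuine gap in your upper-bound argument. You assert that if $\Z^2 = \bigsqcup_{c\in\cC}(c+\cD)$ is a tiling by an anticode $\cD$ of diameter $r$, then the translate set $\cC$ has pairwise $d_\alpha$-distance $>r$, justifying this by ``any two points of $\cC$ lie in different tiles so are at distance $>r$.'' This inference is false: distinct tiles can contain nearby points, and disjointness of $c_1+\cD$ and $c_2+\cD$ only gives $c_1-c_2\notin \cD-\cD$, whereas $\cD-\cD$ is generally a \emph{proper} subset of the ball $\cB_r(0)$. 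A concrete counterexample within the scope of the proposition: take $\alpha=1$, $r=1$, $\cD=\{(0,0),(1,0)\}$ (a maximum anticode of diameter $1$, $D_1(1)=2$), and the brick tiling with translate set $\cC=2\Z\times\Z$. Then $(0,0)$ and $(0,1)$ both lie in $\cC$ but are at $l_1$-distance $1=r$, so $\cC$ is \emph{not} an independent set in $G^1_1$, and this tiling does not yield the independent set your upper bound requires.

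What the upper bound actually needs is the existence of a diameter perfect code: a set of density $1/D_G(r)$ with pairwise $d_\alpha$-distance at least $r+1$. Such a code always produces a tiling (by $-\cD$), but the converse requires a separate argument or a stronger hypothesis; in the example above one must switch from the brick tiling $2\Z\times\Z$ to the checkerboard $\{(i,j):i+j\ \text{even}\}$, which happens to be the translate set of a \emph{different} domino tiling and is a genuine diameter perfect code. For the metrics treated in Theorem~\ref{thm:1-infty}, the specific tilings used do arise from diameter perfect codes, so the final capacity formula is correct; but as written your proof needs either (i) the hypothesis strengthened to ``there exists a diameter perfect code'' (which is exactly the hypothesis of Theorem~\ref{cor:covering}), or (ii) an explicit argument that the tilings in question have codewordlike translate sets. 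Your lower bound, the boundary accounting, and the remark that the clique-partition construction passes directly to the infinite grid are all fine. For completeness: the paper's own proof is also terse at precisely this point, implicitly leaning on Theorem~\ref{cor:covering} and the diameter perfect codes available for the $l_1$ and $l_\infty$ anticodes rather than on an arbitrary tiling.
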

\begin{proof}
Suppose that there is a perfect covering of $\Z^2$ with largest-size anticodes of diameter $r$. Then the graph $G:=\Z^2_n$ contains a perfect covering, except for the subset formed of the points at distance $r$ or less from one of the boundaries. As a result, $\ccap(G_r)\le 1-\frac 1{D_G(r)}$, and also
$\ccap((\Z^2)_r)\le 1-\frac 1{D_G(r)}$ on account of Lemma~\ref{lem:fin_d_inf}. At the same time, since the anticodes provide a perfect covering
of $\Z^2$, we have that $\ccap(G_r)\ge 1-\frac 1{D_G(r)}$.
\end{proof}

This enables us to find the capacity values of the graphs $G^\alpha_r$.
\begin{theorem}\label{thm:1-infty}
For all $r\geq 1$ it holds that
\begin{align*}
& \ccap(G^1_r) = 1 - \frac{1}{D_1(r)}, \\
&\ccap(G^\infty_r) = 1 - \frac{1}{(r+1)^2}, 
\end{align*}
where $D_1(r)$ is the size of a maximum anticode of diameter $r$ in the $l_1$ sphere, and
\begin{equation*}
  D_1(r) = 
  \begin{cases}
      \frac{(r+1)^2}{2} & r \text{ odd},  \\
      \frac{r^2}{2} + r + 1 & r \text{ even}.
  \end{cases}
\end{equation*}
\end{theorem}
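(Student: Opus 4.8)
The plan is to apply the preceding proposition, which reduces the whole statement to two purely combinatorial facts about $\Z^2$: (i) identifying the maximum size $D_\alpha(r)$ of an anticode of diameter $r$ in the $d_\alpha$ metric, and (ii) exhibiting a tiling of $\Z^2$ by congruent copies of such a largest anticode. For $\alpha=\infty$ this is the easy case: a set of diameter $r$ in the $l_\infty$ metric is a subset of an $(r+1)\times(r+1)$ axis-aligned square, so $D_\infty(r)=(r+1)^2$, and the square tiles $\Z^2$ by translation in the obvious lattice $( (r+1)\Z)^2$. Invoking the proposition immediately gives $\ccap(G^\infty_r)=1-1/(r+1)^2$.

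The substance is the $l_1$ case, and I would split it by the parity of $r$. First I would prove the upper bound $D_1(r)\le D_1(r)$ as claimed, i.e. that no set of $d_1$-diameter $r$ can exceed the stated size. The clean way is to pass to the standard rotation-by-$45^\circ$ change of coordinates $(i,j)\mapsto(i+j,\,i-j)$, which turns $d_1$ into $d_\infty$ on the sublattice of points with even coordinate sum (a ``checkerboard'' copy of $\Z^2$). Under this map an $l_1$-ball of radius $r$ becomes (the checkerboard part of) an $l_\infty$-ball of radius $r$, and a diameter-$r$ anticode in $d_1$ becomes a diameter-$r$ anticode in $d_\infty$ intersected with the checkerboard lattice. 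Counting lattice points of the right parity inside an $(r+1)\times(r+1)$ square then yields exactly $\frac{(r+1)^2}{2}$ when $r$ is odd (the square has an even number $(r+1)^2$ of cells split evenly between the two color classes) and $\frac{r^2}{2}+r+1=\frac{(r+1)^2+1}{2}$ when $r$ is even (the larger color class in an odd-side square). For odd $r$ the maximal anticode is the diamond $\{(i,j):|i|+|j|\le r,\ i+j\text{ even-ish}\}$ of the appropriate shape; for even $r$ it is the full $l_1$-ball of radius $r/2$ together with... — more precisely the even-$r$ extremal shape is the closed $l_1$-ball $B_{r/2}(0)$, whose cardinality is $1+\sum_{k=1}^{r/2}4k \cdot\frac{?}{}$; I would just verify $|B_{\rho}^{l_1}|=2\rho^2+2\rho+1$ at $\rho=r/2$ gives $\frac{r^2}{2}+r+1$, matching. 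The odd-$r$ extremal anticode is a union of two such balls of radius $(r-1)/2$ with adjacent centers, as the excerpt's discussion of odd-diameter anticodes anticipates, and one checks its size is $\frac{(r+1)^2}{2}$.

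Next I would produce the tilings. For odd $r$, the diamond of size $\frac{(r+1)^2}{2}$ is (in rotated coordinates) an $\frac{r+1}{2}\times(r+1)$ (or square-ish) block of the checkerboard lattice, which tiles that lattice by translations, hence the diamond tiles $\Z^2$; concretely the centers can be taken along the lattice generated by two vectors of $l_1$-norm making the diamonds fit with no gaps. For even $r$, the $l_1$-ball of radius $r/2$ is the classical ``Lee sphere'' which is known to tile $\Z^2$ (a perfect Lee code of radius $r/2$ exists in dimension $2$ — the centers form the lattice generated by $(r/2+1, r/2)$ and $(-r/2, r/2+1)$, a fact I would cite or verify directly by checking the determinant equals $2(r/2)^2+2(r/2)+1=\frac{r^2}{2}+r+1$ and that translates are disjoint). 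With the tiling in hand the proposition finishes the proof.

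The main obstacle I anticipate is getting the even/odd bookkeeping exactly right: pinning down the precise extremal anticode shape in each parity, confirming its cardinality matches the closed-form $D_1(r)$, and — more delicate — writing down an explicit tiling lattice for each shape and checking that the translates are pairwise disjoint and cover $\Z^2$. The even case leans on the existence of perfect Lee codes in $\Z^2$ (true in dimension two, and only there for radius $>1$), so I would make sure to state that dependence cleanly rather than hand-wave it; the odd case needs the observation that a ``double diamond'' of diameter $r$ is itself a fundamental domain of a suitable sublattice, which is elementary but worth drawing carefully. Everything else is a direct appeal to Theorem~\ref{cor:covering} via the proposition proved just above.
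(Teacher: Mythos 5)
Your proposal takes essentially the same route as the paper: reduce to the unnumbered proposition preceding the theorem, identify the extremal anticodes ($(r+1)\times(r+1)$ squares for $l_\infty$; the $l_1$-ball of radius $r/2$ for even $r$ and the ``double diamond'' $\cD_k$ for odd $r=2k+1$), verify their cardinalities equal $D_\alpha(r)$, and note that each tiles $\Z^2$. Where you differ is in supplying self-contained proofs of the combinatorial inputs: you establish the upper bound on $D_1(r)$ by the $45^\circ$ change of coordinates $(i,j)\mapsto(i+j,i-j)$ (mapping $d_1$ to $d_\infty$ on the checkerboard sublattice and counting lattice points in an $(r+1)\times(r+1)$ box by parity of $r$), and you write down explicit tiling lattices for the Lee sphere, whereas the paper simply cites Etzion~\cite{E11} for both the extremality of these anticodes and the existence of the tilings; this makes your version more elementary and verifiable, at the cost of more bookkeeping. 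Before submitting, clean up the placeholders that betray mid-stream uncertainty (the stray ``$?$'' in the ball-size sum, the informal ``$i+j$ even-ish''); the closed-form ball size $|B^{l_1}_\rho|=2\rho^2+2\rho+1$ and the characterization of $\cD_k$ as the union of two adjacent radius-$k$ $l_1$-balls should be stated cleanly, and the odd-$r$ tiling can be made explicit exactly as you did in the even case (in rotated coordinates $\cD_k$ is the checkerboard restriction of an $(r+1)\times(r+1)$ box, which manifestly tiles the checkerboard lattice).
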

\begin{proof}
For the graph $G^\infty_r$, anticodes of diameter $r$ are simply squares of size $(r+1)\times (r+1),$ so $D_{G^\infty_r}(r) = (r+1)^2$. The squares tile the graph, giving a perfect covering, and thus
    $$ 
    \ccap(G^\infty_r) = 1 - \frac{1}{{(r+1)^2}}.
    $$
The result for the $G^1_r$ metric is obtained from perfect tiling which exists for maximal anticodes in the $l_1$ metric (see e.g.~\cite{E11}).
In detail, for $r = 2k$, the largest anticode with diameter $r$ is an $l_1$ ball of radius $k$, which forms a tiling in $\Z^2$. Its size is easily found by induction to be $r^2/2 + r +1.$
For $r = 2k+1$, the largest anticode $\cD_k$ can be defined recursively as follows. Let $\cD_0$ be a shape formed by two adjacent (with $l_1$ distance one) points of $\Z^2$, and let $\cD_m, m=1,\dots,k$ be formed of $\cD_{m-1}$ and all the points that are adjacent to at least one of points in $\cD_{m-1}$. Note that $\cD_k$ also generates a tiling, and $|\cD_k| = (r+1)^2/2$.
\end{proof}
\begin{remark}
Note that this theorem relies on a stronger assumption that Theorem~\ref{cor:covering}, namely that the there is a tiling of the graph with translations of a largest-size anticode. The existence of tilings and more generally, diameter perfect codes in $\Z^d$, is an active research topic, see~\cite{zhang2022linear,zhang2019nonexistence} for recent additions to the literature.
\end{remark}

Now suppose that the recovery set is not symmetric, for instance, a direct product of two non-symmetric segments. We have the following result.
\begin{theorem}
Let $l,r,b,a$ be positive integers and consider the recovery set $R = [-l,r]\times [-b,a]$ 
{where $0\le r< l$ and $0\le b< a$}. 
Let $X_R$ be a two-dimensional $R$-recoverable system on $\Z^2$.
Then 
\begin{equation}\label{eq:nsym}
\ccap(G_R)=1-\frac{1}{{(r+1)(b+1)}}.
\end{equation}
\end{theorem}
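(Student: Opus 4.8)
The plan is to reproduce the strategy of Proposition~\ref{prop:1-d-Nd} in two dimensions: a clique-partition code for the lower bound, and a suitably large DAG set in the truncations $\Z^2_{R,n}=\Z^2_R\cap[n]^2$ combined with Theorem~\ref{th:DAG_set} and Lemma~\ref{lem:fin_d_inf} for the upper bound. Here $G_R=\Z^2_R$ is the graph in which each $v\in\Z^2$ is joined to the vertices of $v+R$, and $R=([-l,r]\times[-b,a])\setminus\{0\}$.

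For the lower bound I tile $\Z^2$ by translates of the rectangle $T=\{0,\dots,r\}\times\{0,\dots,b\}$, that is $\Z^2=\bigsqcup_{(i,j)\in\Z^2}\big((i(r+1),j(b+1))+T\big)$. Each tile is a clique of $G_R$: for distinct $u,v$ in a common tile, $u-v$ has first coordinate in $\{-r,\dots,r\}\subseteq[-l,r]$ (this is where $r<l$ is used) and second coordinate in $\{-b,\dots,b\}\subseteq[-b,a]$ (where $b<a$ is used), and $u-v\neq 0$, so $u-v\in R$. Hence every vertex of a tile is recoverable from the remaining $(r+1)(b+1)-1$ vertices of that tile, and imposing a single $\ff_q$-linear parity check on each tile defines an $R$-recoverable system; its restriction to $[n]^2$ has all but $O(n)$ coordinates free, so its rate equals $1-\tfrac1{(r+1)(b+1)}$ and $\ccap(G_R)\ge 1-\tfrac1{(r+1)(b+1)}$.

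For the upper bound I take the sublattice $S=(r+1)\Z\times(b+1)\Z$ and set $S_n=S\cap[n]^2$; I claim $S_n$ is a DAG set of $\Z^2_{R,n}$. An edge runs from $p=((r+1)k_1,(b+1)k_2)$ to $q=((r+1)k_1',(b+1)k_2')$ only when $q-p\in R$, and since a nonzero multiple of $r+1$ lies in $[-l,r]$ only if it is negative (as $r+1>r$), while a nonzero multiple of $b+1$ lies in $[-b,a]$ only if it is positive (as $b+1>b$), every such edge satisfies $k_1'\le k_1$, $k_2'\ge k_2$, and $(k_1,k_2)\neq(k_1',k_2')$. Thus the linear functional $k_1-k_2$ strictly decreases along every edge of the induced subgraph, which is therefore acyclic. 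Since $|S_n|=\lceil n/(r+1)\rceil\,\lceil n/(b+1)\rceil\ge n^2/((r+1)(b+1))$, Theorem~\ref{th:DAG_set} yields $\ccap(\Z^2_{R,n})\le 1-\tfrac1{(r+1)(b+1)}$ for all $n$, and Lemma~\ref{lem:fin_d_inf} applied with the constant sequence $c_n\equiv 1-\tfrac1{(r+1)(b+1)}$ gives $\ccap(G_R)\le 1-\tfrac1{(r+1)(b+1)}$, matching the lower bound and proving \eqref{eq:nsym}. The one step needing care, and the one that genuinely uses the strict inequalities $r<l$ and $b<a$, is showing that $S_n$ induces a DAG rather than an independent set: these hypotheses are precisely what make $R$ reach farther in the $-$ direction of the first coordinate and the $+$ direction of the second, which orients all edges of $S_n$ consistently with $k_1-k_2$; the clique check and the boundary count are routine, exactly as in Proposition~\ref{prop:1-d-Nd} and Lemma~\ref{lem:fin_d_inf}.
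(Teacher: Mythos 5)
Your proposal matches the paper's proof in all essentials: the lower bound comes from tiling $\Z^2$ by translates of the $(r+1)\times(b+1)$ rectangle, each of which is an $R$-clique, and the upper bound comes from exhibiting the sublattice $(r+1)\Z\times(b+1)\Z$ as a DAG set of density $1/((r+1)(b+1))$ and then invoking Theorem~\ref{th:DAG_set} and Lemma~\ref{lem:fin_d_inf}. Your treatment of the edge directions (every edge inside the sublattice has $k_1'\le k_1$, $k_2'\ge k_2$, so $k_1-k_2$ strictly decreases) is more explicit than the paper's one-line topological-order remark, and it is correct; the only slip is the phrase ``all but $O(n)$ coordinates free'' in the rate computation for the lower bound, which should read ``all but $n^2/((r+1)(b+1))+O(n)$ coordinates free,'' but this does not affect the conclusion.
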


\begin{proof}
To show that $\ccap(G_R)$ is less than the right-hand side of \eqref{eq:nsym} 
we observe that $$
S= \{{\left(i(r+1),j(b+1)\right)}: i,j\geq 0\}
$$ 
is a DAG set in the graph $G_R$.
Indeed, there is an edge from $v_1 = \left(i_1(r+1),j_1(b+1) \right)$ to $v_2 =  \left(i_2(r+1),j_2(b+1) \right)$ only if $i_1 \le i_2$ and $j_1 \le j_2$. 
Through this, we can define a topological order of all vertices in $S$ and the claim follows from
Theorem~\ref{th:DAG_set}.

For the lower bound, let 
$$
S_1=\mathset{(i,j)\in\Z^2 ~:~ i\in [r], j\in -[b]}.
$$
For $x,y\geq 0$, the subset of nodes 
$({x(r+1),y(b+1)}) + S_1$ 
forms a clique. 
Since $|S| = |S_1|$, the proof is concluded by \eqref{eq:clique}.
\end{proof}

As already noted \eqref{eq:Rd}, one way to construct two-dimensional systems is to form an {axial product} of two one-dimensional systems. As before, for positive integers $l,r,b,a$ let 
{$R_1=[-l,r]$ and $R_2=[-b,a]$}. Fix a finite alphabet $\cQ$ and consider one-dimensional recoverable systems $Y_1=Y_{R_1}$ and $Y_2=Y_{R_2}$. The {\em axial product} of $Y_1$ and $Y_2$ is a two-dimensional system $X=Y_1\times Y_2$ over $\cQ$ with the recovery set 
$(R_1\times\{0\}) \cup (\{0\}\times R_2)$. In words, in the system $X$ the symbol $x_i$ is a function of the $l$ symbols to its left, $r$ symbols to its right, $a$ symbols above, and $b$ symbols below it.
The axial product construction is different from the direct product $[-l,r]\times[-b,a]$ in that it results in a cross-shaped rather than a rectangular recovery region.

\begin{theorem}
\label{th:ax}
For given positive integers $l,r,b,a$, let $R_1=[-l,r]$ and $R_2=[-b,a],$
and let $X=Y_1\times Y_2$ be the axial product of the one-dimensional systems $Y_1=Y_{R_1}$ and $Y_2=Y_{R_2}$.
Denote the recovery set $(R_1 \times \{0\}) \cup (\{0\} \times R_2)$ by $R^2$.
Then the capacity is
\[\ccap(\Z^2_{R^2})=\frac{t}{t+1},\] 
where $t=\max \mathset{\min\mathset{l,r}, \min \mathset{a,b}}$. 
Moreover, a system $X$ that attains this value can be obtained from the one dimensional $R$-recoverable system with $R=\mathset{j:0<|j|<t}$.
\end{theorem}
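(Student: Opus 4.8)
The plan is to prove the two inequalities separately, in each case passing from the infinite graph to the finite subgraphs $\Z^2_{R^2,n}:=\Z^2_{R^2}\cap[n]^2$. First I would normalize the parameters by symmetry: swapping the two coordinate axes is a graph isomorphism that exchanges the roles of $R_1$ and $R_2$, and reflecting a coordinate axis exchanges the two endpoints of the corresponding segment. So I may assume throughout that $t=\min\{l,r\}=r\le l$; in particular $\min\{a,b\}\le t$, since $t=\max\{\min\{l,r\},\min\{a,b\}\}$.

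\textbf{Lower bound.} Since $r=t\le l$, the symmetric set $\{-t,\dots,-1,1,\dots,t\}$ is contained in $R_1=[-l,r]$. By Proposition~\ref{prop:1-d-Nd} (applied with both parameters equal to $t$) the one-dimensional clique-partition code obtained by cutting $\Z$ into consecutive blocks of length $t+1$ and placing one parity symbol on each block is recoverable from this set, hence also from $R_1$, and has rate $t/(t+1)$. I would then build the two-dimensional system $X$ by putting an \emph{independent} copy of this code on every horizontal line of $\Z^2$; because $R_1\times\{0\}\subseteq R^2$, each symbol of $X$ is recoverable inside its own row, so $X$ is an $R^2$-recoverable system, and counting the free symbols in an $[n]\times[n]$ window (about $t$ out of every $t+1$ in each of the $n$ independent rows) shows $\cR(X)\ge t/(t+1)$.

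\textbf{Upper bound.} I would exhibit a large DAG set in $\Z^2_{R^2,n}$ and apply Theorem~\ref{th:DAG_set} followed by Lemma~\ref{lem:fin_d_inf}. The candidate is
\[
S=\{(i,j)\in\Z^2:\ i+j\equiv 0 \pmod{t+1}\},
\]
which meets $[n]^2$ in at least $n^2/(t+1)-n$ vertices. The crux — and the step I expect to require the most care — is checking that the subdigraph of $\Z^2_{R^2,n}$ induced on $S$ has no directed cycle. An internal horizontal arc $(i,j)\to(i+\delta,j)$ has $\delta\in R_1$ with $\delta\equiv 0\pmod{t+1}$; since the positive part of $R_1$ is $\{1,\dots,r\}=\{1,\dots,t\}$, this forces $\delta<0$, so every internal horizontal arc strictly decreases the first coordinate while internal vertical arcs leave it unchanged. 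Consequently no directed cycle can contain a horizontal arc, so any cycle lies inside a single column; there the points of $S$ form an arithmetic progression of step $t+1$, and an internal arc corresponds to a nonzero element of $R_2\cap(t+1)\Z$, whose positive part is nonempty only if $a\ge t+1$ and whose negative part is nonempty only if $b\ge t+1$ — and at most one of these occurs because $\min\{a,b\}\le t$. Hence all internal vertical arcs in a given column point the same way, and a cycle is impossible. Therefore $\delta(\Z^2_{R^2,n})\ge |S\cap[n]^2|$, so $\ccap(\Z^2_{R^2,n})\le 1-1/(t+1)+1/n$, and Lemma~\ref{lem:fin_d_inf} yields $\ccap(\Z^2_{R^2})\le t/(t+1)$.

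Combining the two bounds gives $\ccap(\Z^2_{R^2})=t/(t+1)$, and the system of the ``moreover'' clause is precisely the row-stacked (or, after the axis swap, column-stacked) one-dimensional code used in the lower bound. The reduction to finite graphs is handled verbatim by Lemma~\ref{lem:fin_d_inf} and the rate count is routine, so the acyclicity verification for $S$ is the only genuinely delicate point.
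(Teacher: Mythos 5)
Your proof is correct and takes essentially the same route as the paper's: a row-stacking lower bound and a diagonal-line DAG set $\{(i,j):i+j\equiv 0\pmod{t+1}\}$ of period $t+1$ for the upper bound (the paper's set $\{(x,y):x\equiv y\pmod{t+1}\}$ is a reflection of yours along one axis). Your exposition goes further in one useful respect: you actually verify the acyclicity of the induced subdigraph — the paper merely asserts it — and your normalization $r=t\le l$, $\min\{a,b\}\le t$ streamlines that check by forcing every internal horizontal arc to decrease the first coordinate, so cycles collapse to a single column.
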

\begin{proof}
  We first observe that $\ccap(\Z^2_{R^2}) \ge \max \mathset{\ccap(\Z_{R_1}), \ccap(\Z_{R_2})}$ by stacking words from $Y_{R_1}$ or $Y_{R_2}$ on top of each other to generate a recoverable system.
  The recovery set is $\{R_1 \times \{0\}\} \subset R^2$ or $\{\{0\} \times R_2\} \subset R^2$.
  Then by Proposition~\ref{prop:1-d-Nd} we have
  $$
    \ccap(\Z^2_{R^2}) \ge \max \left \{ \lim_{n \to \infty} \frac{|\Z_{R_1}| n }{n^2} ,\lim_{n \to \infty} \frac{|\Z_{R_2}| n }{n^2} \right \} = \max \mathset{\ccap(\Z_{R_1}), \ccap(\Z_{R_2})} = \frac{t}{t+1}.
  $$
  To show the other direction, note that the set of vertices 
  $$
    S := [n]^2 \cap \left \{ (k(t+1),0) + (i,i), (0, k(t+1)) + (i,i), k,i \in \Z  \right \} 
  $$
  is a DAG set in $\Z^2_{R^2,n}$ (as before, $\Z^2_{R^2,n} := \Z^2_{R^2} \cap [n]^2$). 
  It remains to calculate the cardinality of $S$.
  Assume without loss of generality that $n = m(t+1)$ for some $m \in \N$, then we have 
      \begin{align*}
      |S| &= 2(1 + (t+1) + 2(t+1) + \cdots + m(t+1)) - n \\
          &= 2 + nm \\
          & = 2 + \frac{n^2}{t+1}.
    \end{align*}  
  Thus,
  $$
    \ccap(\Z^2_{R^2}) \le \limsup_{n \to \infty} \Big( 1 - \frac{|S|}{n^2} \Big) = \frac{t}{t+1}. 
  $$
\end{proof}
 
This theorem shows that for some recoverable systems, the two-dimensional capacity is not increased from the one-dimensional one: the maximally
sized system is obtained by stacking independent one-dimensional systems in the rows (or the columns). While the one-dimensional
components in the axial product provide an obvious lower bound on the capacity of the two-dimensional system, it is not
clear whether the equality holds in all cases. We could not find examples of axial products with higher capacity
by relying on both dimensions, and we leave this question as an open problem.

\section{Conclusion}
In this paper, we studied storage codes on finite graphs and their extensions to $\Z$ and $\Z^2$, which we
call recoverable systems. Earlier results for recoverable systems \cite{ElishcoBarg2020} relied on the assumption of shift invariance. Lifting this restriction enables one to connect it to the line of research on storage codes for finite graphs. Relying on resolvable designs, we proposed a new way of propagating storage codes by interleaving, which yields many new families from the known ones. Using bounds and constructions for finite storage codes, we found capacity of several recovery sets, such as $l_1$ or $l_\infty$ balls or cross-shaped regions. 
Finally, we established a link of the capacity problem to questions in additive combinatorics related to difference-avoiding sets, and found capacity values for some special recovery sets. 

There are numerous open problems that we leave for future research. To point out some of them, one may ask what is the maximum capacity of a two-dimensional system when only a subset of neighbors can be used for recovery. Next, it appears that the capacity of an axial product is always attained by stacking one-dimensional systems, although 
proving this has been elusive. If not true, then what are the conditions under which it is the case?  Among other questions: How does the fact that the order of the neighbors is known affect the capacity of the system?  And finally, is it possible to characterize the number of recovery functions needed to obtain the maximum capacity?

\appendix
\section{Linear programming bound}\label{sec:lp-bound}

To describe the linear programming bound of Mazumdar et al. \cite{MazMcgVor2019}, let us first define a $\tau$-cover by gadgets on a graph $G = (V, E)$.
For a subset of vertices $A\subset G$, let $\cl(A):=\{v : \cN(v) \subseteq A \}$ be the closure of $A$ which contains all vertices in $A$ and their neighbors.
A tuple $g = (S_1, S_2, c_1,c_2)$ is called a \emph{gadget} if 
\begin{enumerate}
  \item There exist two sets of vertices $A,B \subseteq V$ such that $S_1 = \cl(A) \cup \cl(B)$ and $S_2 = \cl(A) \cap \cl(B)$.
    We call $S_1$ the outside and $S_2$ the inside of the gadget.
  \item Colors $c_1$ and $c_2$ are picked from a fixed set of size $\tau$, and they are assigned to all the vertices in $S_1$, $S_2$ respectively. 
  Note that each vertex can have multiple assigned colors.
\end{enumerate}
We call $(\{v\}, \emptyset, c_1,c_2)$ a trivial gadget, and $w(g) = |A| + |B|$ the weight of gadget $g$. A collection of gadgets forms a $\tau$-cover if, for every color $c$, all the vertices with color $c$ form a vertex cover.
It was shown in~\cite[Theorem 8]{MazMcgVor2019} that the total weight of the gadgets that form a $\tau$-cover provides an upper bound on $\ccap(G)$. 
More formally, we have the following theorem.

\begin{theorem}[Linear programming (LP) bound~\cite{MazMcgVor2019}]\label{thm:LP}
  Let $\tau > 0$ be a fixed integer.
  For each gadget that contains $S \subseteq V$, define $\chi_{g,S} = \1\{g\text{ is involved in the cover}\}$, where $S$ is a part of $g$.
  Thus, each gadget $g$ corresponds to two variables $\chi_{g, S_1}$ and $\chi_{g, S_2}$, where $S_1$ and $S_2$ are the outside and the inside of $g,$ respectively.
  Denote by $c_g(S)$ the color of the vertex set $S$ in gadget $g$.

The capacity $\ccap(G)$ is bounded above by the solution to the following linear program:
  \begin{align}
    \text{minimize } \quad & \frac{1}{n\tau} \sum_{g,S} \chi_{g,S} \frac{w(g)}{2}  \nonumber \\
\text{s.t., } \quad & \frac{1}{\tau} \sum_{g,S} \chi_{g,S} \frac{w(g)}{2} \in \N \nonumber \\
\quad & \sum_{\substack{g,S: u \in S \\ C_g(S) = c}} \chi_{g,S} + \sum_{\substack{g',S': v \in S' \\ C_{g'}(S') = c}} \chi_{g',S'} \ge 1, \quad \forall (u,v) \in E, \forall c \label{eq:LP1} \\
& \chi_{g,S_1} = \chi_{g,S_2}, \quad \forall g \text{ with outside $S_1$ and inside $S_2$}. \label{eq:LP2}
  \end{align}

\end{theorem}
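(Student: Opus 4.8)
The plan is to run the standard entropy argument for storage‑code upper bounds, now organized around the gadget/$\tau$‑cover structure. Fix a storage code $\cC$ on $G=(V,E)$ over an alphabet of size $q$, with $|V|=n$, put the uniform distribution on $\cC$, and let $X=(X_v)_{v\in V}$ be the resulting random assignment; write $H(\cdot)$ for Shannon entropy to base $q$, so that $H(X)=\log_q|\cC|$ and, since $\cC$ is a storage code, $H(X_v\mid X_{\cN(v)})=0$ for every $v$. The goal is to bound $\cR_q(\cC)=\tfrac1n H(X)$. Two elementary facts carry the argument: (i) if $A\subseteq V$ and $v\notin A$ has $\cN(v)\subseteq A$, then $X_v$ is a deterministic function of $X_A$; consequently $H(X_{\cl(A)})=H(X_A)$ for every $A$, and, taking $A=V\setminus I$ for an independent set $I$, also $H(X)=H(X_{V\setminus I})$; and (ii) entropy is submodular, $H(X_{P\cup Q})+H(X_{P\cap Q})\le H(X_P)+H(X_Q)$ (hence subadditive), and $H(X_A)\le|A|$ for every $A$.

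First I would fix any collection of gadgets forming a $\tau$‑cover, i.e., any $0/1$ vector $\chi$ satisfying \eqref{eq:LP1} and \eqref{eq:LP2}; write $\chi_g$ for the common value $\chi_{g,S_1}=\chi_{g,S_2}$. For each color $c$, let $U_c$ be the set of vertices carrying color $c$, which by definition equals the union of the parts $S$ (insides and outsides) of involved gadgets with $c_g(S)=c$. By \eqref{eq:LP1}, $U_c$ is a vertex cover, so $V\setminus U_c$ is independent and fact (i) gives $H(X)=H(X_{U_c})$; by subadditivity, $H(X_{U_c})\le\sum H(X_S)$, the sum running over the used parts $S$ with $c_g(S)=c$. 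Summing over the $\tau$ colors and grouping the right‑hand side by gadget — using \eqref{eq:LP2}, so that each involved gadget contributes both of its parts — yields
\[
\tau\,H(X)\ \le\ \sum_{g:\chi_g=1}\bigl(H(X_{S_1})+H(X_{S_2})\bigr).
\]

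Next I would bound each summand. For a gadget $g$ with outside $S_1=\cl(A)\cup\cl(B)$ and inside $S_2=\cl(A)\cap\cl(B)$, submodularity together with fact (i) give
\[
H(X_{S_1})+H(X_{S_2})\ \le\ H(X_{\cl(A)})+H(X_{\cl(B)})\ =\ H(X_A)+H(X_B)\ \le\ |A|+|B|\ =\ w(g),
\]
and a trivial gadget satisfies the same bound $w(g)=1$ directly. Substituting, and noting $\sum_{g,S}\chi_{g,S}\tfrac{w(g)}2=\sum_{g:\chi_g=1}w(g)$, we obtain $\tau\log_q|\cC|=\tau H(X)\le\sum_{g,S}\chi_{g,S}\tfrac{w(g)}2$, i.e.
\[
\cR_q(\cC)\ \le\ \frac1{n\tau}\sum_{g,S}\chi_{g,S}\frac{w(g)}2 .
\]
Taking the supremum over $q$ and over all storage codes on $G$, and then the infimum over all feasible $\chi$, bounds $\ccap(G)$ by the optimum of the displayed linear program; adding the integrality constraint only shrinks the feasible region, so the stated (a priori weaker) bound follows a fortiori.

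The bookkeeping above is routine; the point that needs care is fact (i) — the closure identity $H(X_{\cl(A)})=H(X_A)$ — where one must pin down the exact meaning of $\cl(\cdot)$ and verify that the vertices adjoined to $A$ are recovered from $A$ without any circular dependency, and that trivial gadgets (and degenerate situations such as pendant vertices) are handled so that $w(g)$ is always the nonnegative integer $|A|+|B|$. A second subtlety to get exactly right is the identification of $U_c$ with the union of the used parts colored $c$: this is precisely what lets both the subadditivity estimate and the vertex‑cover property be applied to one and the same set, and it is where the full strength of the definition of a $\tau$‑cover is used.
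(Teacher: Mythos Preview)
The paper does not prove this theorem: it is quoted in the appendix as a result of~\cite{MazMcgVor2019} and used only as a black box (to establish Proposition~\ref{prop:LP-interleaved}). So there is no ``paper's own proof'' to compare against.

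That said, your argument is correct and is essentially the proof from the original reference. The three ingredients you isolate are exactly the ones that do the work there: the closure identity $H(X_{\cl(A)})=H(X_A)$ (which, as you flag, requires reading $\cl(A)$ as $A\cup\{v:\cN(v)\subseteq A\}$; the paper's verbal description of $\cl$ is slightly garbled but this is the intended meaning), the vertex-cover consequence $H(X)=H(X_{U_c})$ for each color class $U_c$, and submodularity applied to the pair $\cl(A),\cl(B)$ to convert $H(X_{S_1})+H(X_{S_2})$ into $H(X_A)+H(X_B)\le |A|+|B|=w(g)$. Your bookkeeping identity $\sum_{g,S}\chi_{g,S}\,w(g)/2=\sum_{g:\chi_g=1}w(g)$ and the final remark that the integrality constraint only weakens the bound are both fine.
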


Note that conditions \eqref{eq:LP1} guarantee that the sets form a vertex cover, and~\eqref{eq:LP2} states that the inside is involved in the gadget cover if and only if the outside is also involved.

{\vspace*{.1in}\em Proof of Proposition~\ref{prop:LP-interleaved}:}
  Let $g = (S_1, S_2, c_1, c_2)$ be a gadget on $G$ and define $\bar{g} = (\bar S_1, \bar S_2,c_1,c_2),$ where $S_1 := \bigcup_{v \in S_1} \bar{v}$ and  $S_2 := \bigcup_{v \in S_2} \bar{v}$.
  If $g_1, \dots, g_m$ forms a $\tau$-cover of $G$, then it is straightforward to check that $\bar{g}_1, \dots, \bar{g}_m$ forms a $\tau$-cover of $\bar{G}$.
  Note that $w(\bar{g}) = s w(g)$, thus $\frac{1}{n\tau} w(g) = \frac{1}{ns\tau} w(\bar{g})$, and $\bar{g}_1, \dots, \bar{g}_m$ induce an upper bound that equals to the bound by  $g_1, \dots, g_m$, which is $\ccap(G)$ by our assumption.
  In other words, we have $\ccap_{q^k}(\bar{G}) \le \ccap(G)$.
  The matching lower bound follows by Proposition~\ref{prop:interleave-capacity}.


\end{document}